\let\dfn\emph
\let\impl\Rightarrow
\newcommand{\resp}[1]{\ (resp. #1)}
\newcommand{\Z}{\mathbb Z}
\newcommand{\N}{\mathbb N}
\newcommand{\M}{\mathbb M}
\newcommand{\Ns}{\N_+}
\newcommand{\az}{{A^\Z}}
\newcommand{\sett}[2]{\left\{\left.#1\vphantom{#2}\right|#2\right\}}
\newcommand{\set}[3]{\sett{#1\in#2}{#3}}
\newcommand{\oo}[2]{\left\rrbracket #1,#2\right\llbracket}
\newcommand{\cc}[2]{\left\llbracket #1,#2\right\rrbracket}
\newcommand{\scc}[2]{_{\cc{#1}{#2}}}
\newcommand{\soo}[2]{_{\oo{#1}{#2}}}
\newcommand{\co}[2]{\left\llbracket #1,#2\right\llbracket}
\newcommand{\sco}[2]{_{\co{#1}{#2}}}
\newcommand{\kaprx}[1]{\mathcal A_{#1}}
\newcommand{\dinf}[1]{\vphantom{#1}^\infty{#1}^\infty}
\newcommand{\lang}{\mathcal L}
\newcommand{\ie}{\textit{i.e.}\ }
\newcommand{\restr}[1]{_{\left|#1\right.}}
\newcommand{\soit}[1]{\left|\everymath{\displaystyle\everymath{}}\begin{array}{ll}#1\end{array}\right.}
\newcommand{\appl}[5]{\begin{array}{rrcl}#1:&#2&\to&#3\\
&#4&\mapsto&\displaystyle#5\end{array}}
\theoremstyle{definition}
\begin{document}

\title{Clandestine Simulations in Cellular Automata}

\author[lab1]{P. Guillon}{P. Guillon}
\address[lab1]{Department of Mathematics, University of Turku,\newline 20014 Turku, Finland}
\email[P. Guillon]{piegui@utu.fi}

\author[lab2]{P.-E. Meunier}{P.-E. Meunier}
\address[lab2]{LAMA (CNRS, Universit\'e de Savoie),\newline Campus Scientifique 73376 Le Bourget-du-Lac Cedex, France}
\email[P.-E. Meunier]{pierreetienne.meunier@univ-savoie.fr}
\email[G. Theyssier]{guillaume.theyssier@univ-savoie.fr} 

\author[lab2]{G. Theyssier}{G. Theyssier}

\thanks{Research partially supported by projects CONICYT-ANILLO ACT88 (Chile), FONDECYT 1090156 (Chile), ANR EMC NT09 555297 (France) and Academy of Finland project 131558.}

\keywords{Cellular automata, Simulation, Limit sets, Trace, Universality}
\subjclass{F.1.1; F.4.3}

\begin{abstract}
  \noindent This paper studies two kinds of simulation between cellular
  automata: simulations based on factor and simulations based on
  sub-automaton. We show that these two kinds of simulation behave in
  two opposite ways with respect to the complexity of attractors and
  factor subshifts. On the one hand, the factor simulation preserves the
  complexity of limits sets or column factors (the simulator CA must
  have a higher complexity than the simulated CA). On the other hand,
  we show that any CA is the sub-automaton of some CA with a simple
  limit set (NL-recognizable) and the sub-automaton of some CA with
  a simple column factor (finite type). As a corollary, we get
  intrinsically universal CA with simple limit sets or simple column
  factors. Hence we are able to 'hide' the simulation power of any CA
  under simple dynamical indicators.
\end{abstract}

\maketitle

\section*{Introduction}

Since the introduction of the model in the 40s, cellular automata have
been studied both as dynamical systems and as a computational model. In both aspects, they can show very complex behaviors, be it through their topological dynamics \cite{kurka} or through their ability to compute \cite{vneumann2,univind}. As such, they constitute a good model to tackle one of the major question of natural computing: what kind of dynamical behavior allows to support computation, and reciprocally, what does the ability to compute imply on the dynamical behavior of a system?

In this paper, we focus on asymptotic dynamics of cellular automata (notion of limit set \cite{cpyu}), and on their unprecise observation (notion of column factors \cite{classif}). Intuitively, the limit set is the set of configuration that can appear arbitrarily far in the evolution of the system, and the column factor is the set of sequences of states that a cell (or group of cells) can take in a valid orbit of the system. These notions have been intensively studied in the literature as indicators of the dynamical complexity of cellular automata \cite{langlim2,rice,oexp,utrace}. 

Concerning limit sets of cellular automata, the initial (wrong) intuition was that a universal CA should necessarily have a non-recursive limit set (such a statement appears in \cite{cpyu}). Later, a Turing-complete CA with a simple limit set was constructed in \cite{limuniv}. This result gives a first hint concerning the absence of correlation between the complexity of the limit set and the ability to handle computations. However, the construction goes through a slow simulation of two-register machines where registers are encoded in unary. Therefore, this results is \textit{extrinsic} to the model of cellular automata and shows only that any behavior of register machines can be embedded into a cellular automaton with a simple limit set.

Here, we aim at exploring \textit{intrinsic} versions of the same question: what can be the limit set complexity of a CA able to simulate any other CA? More precisely, we consider two flavors of simulations: one using factor (continuous projection) of a simulator CA onto the simulated CA, and the other using the local uniform injection (sub-automaton) of a simulated CA into a simulator CA. Such simulation relations were extensively studied in \cite{bulk2}, and the second flavor is the one giving rise to the notion of intrinsic universality \cite{ollinger,bulk2}. The main point of the present paper is that factor simulations preserve limit set complexity whereas sub-automaton simulations do not. We also show that the same phenomenon appears for the complexity of column factors. Our main results are two embedding theorems showing that there exist an intrinsically universal CA with column factors of finite type and an intrinsically universal CA with an NL-recognizable limit set (that is, its limit language can be recognized by a non-deterministic Turing machine in logarithmic space). The second result solves an open problem of \cite{theyssier}.

\section{Definitions}

Let us note $\Ns=\N\setminus\{0\}$. If $i,j\in\Z$, we note $\cc ij$ the interval of integers $k$ such that $i\le k\le j$, $\co ij=\cc i{j-1}$, $\oo ij=\cc{i+1}{j-1}$ and so on.
Consider a fixed finite \dfn{alphabet} $A$.
If $x\in A^\Z$ and $\cc ij\subset\Z$, then we note $x\scc ij\in A^{j-i+1}$ the \dfn{pattern} corresponding to the sequence of letters $x_i\ldots x_j$ (and similarly for other kinds of intervals).

If $U\subset A^k$ for some $k\in\N$ and $i\in\Z$, the \dfn{cylinder} $[U]_i$ will denote the set of configurations $x\in A^\Z$ such that $x\sco i{i+k}\in U$. We also note $[U]=[U]_0$.
If $a\in A$, let $\dinf a$ be the configuration $x\in\az$ such that $x_i=a$ for any $i\in\Z$.

A \dfn{dynamical system} is a pair $(X,F)$ where $X$ is a compact space and $F$ is a continuous self-map of $X$. We can then study iterations $F^t$ for any \dfn{time step} $t\in\N$.

Let $\M$ stand either for $\N$ or for $\Z$. The \dfn{shift} map $\sigma$ is defined for any $z\in A^\M$ and any $i\in\M$ by $\sigma(x)_i=x_{i+1}$.
A \dfn{subshift} is a dynamical system $(\Sigma,\sigma)$, or simply $\Sigma$, that is a subset of $A^\M$ which is closed under the usual Tychonoff topology and invariant by the shift map.
Equivalently, it is the set $\set z{A^\M}{\forall\cc ij\subset\M,z\scc ij\notin\mathcal F}$ of infinite words that avoid the finite patterns from some given family $\mathcal F$.
If this family can be taken finite, then $\Sigma$ is called a subshift of \dfn{finite type} (SFT). If it can be taken among words of length $k\in\Ns$, it has \dfn{order} $k$.
The \dfn{language} $\lang(\Sigma)$ of a subshift $\Sigma$ is the set $\sett{z\scc ij}{z\in\Sigma\text{ and }\cc ij\subset\M}$ of patterns appearing in the infinite words of $\Sigma$.
If this language is regular, then we say that $\Sigma$ is a \dfn{sofic} subshift. Equivalently thanks to the Weiss theorem \cite{weiss}, it is obtained from an SFT by a letter-to-letter projection.

A \dfn{cellular automaton} (CA) is a dynamical system $(\az,F)$ which commutes with the shift, \ie $\sigma F=F\sigma$.
Equivalently, thanks to the Curtis-Hedlund-Lyndon theorem, it is obtained by some \dfn{local map} $f:A^{2r+1}\to A$ for some \dfn{radius} $r\in\N$, \ie for any $x\in\az$ and $i\in\Z$, $F(x)_i=f(x\scc{i-r}{i+r})$.
$F$ admits a \dfn{spreading} state $0\in A$ if it admits a local rule $f:A^{2r+1}\to A$ with $r\in\N$ and $f(u)=0$ whenever there exists $i\in\cc0{2r}$ such that $u_i=0$.

An SFT $\Sigma$ is \dfn{irreducible} if for any two words $u,v\in\lang(\Sigma)$, there exists a word $w$ such that $uwv\in\lang(\Sigma)$.
CA can actually be applied in a very natural way to these subshifts: a \dfn{partial CA} will be a dynamical system over some irreducible SFT that commutes with the shift. It is known from \cite{edensof} that any injective partial CA is bijective and reversible (the inverse is also a partial CA).


\subsection{Simulations}

The central notion studied in this paper is that of simulation between dynamical
systems and more specifically cellular automata. We will distinguish two
families of simulation relations based on the following notions.

\begin{defi}[Factors and sub-systems]\leavevmode 
  \begin{itemize}
  \item A \dfn{factor map} between two dynamical systems $(X,F)$ and $(Y,G)$ is a
    continuous onto map $\Phi:X\to Y$ such that $\Phi F=G\Phi$.
    In that case, we say that $G$ is a \dfn{factor} of $F$. 
  \item A \dfn{sub-system} of a dynamical system $(X,F)$ is a dynamical system of
    the form $(Y,G)$ where ${Y\subseteq X}$ ($Y$ is closed) and ${F(Y)\subseteq Y}$.
  \end{itemize}
\end{defi}

Note that a factor map $\Phi$ between two subshifts $\Sigma$ and $\Gamma$ respect the Curtis-Hedlund-Lyndon theorem: there exist a \dfn{radius} $r\in\N$ and a \dfn{local rule} $\phi:\lang_{2r+1}(\Sigma)$ such that $\Phi(x)_i=\phi(x\scc{i-r}{i+r})$ for any $x\in\Sigma$ and any $i\in\Z$. We say that the factor map between two subshifts is a \dfn{coloring} if the radius can be taken $r=0$.

If $(X,F)$ and $(Y,G)$ are cellular automata, we say that \emph{$(X,F)$ simulates $(Y,G)$ by factor} if there is a factor map $\Phi$ from $(X,F)$ onto $(Y,G)$ which is also a factor map from $(X,\sigma_X)$ onto $(Y,\sigma_Y)$.
Besides, when $(X,F)$ is a cellular automaton and $Y$ is a full-shift included in $X$, then $(Y,F)$ is a \dfn{sub-automaton} of $(X,F)$. 

These two relations (factor and sub-system) are restrictive since they don't allow entropy to increase: a factor or a sub-system of a given system has always a lower entropy. As a consequence, they don't support universality (existence of a system able to simulate any other) since it is not difficult to build systems of arbitrarily large entropy. Hence, in the literature, other ingredients were introduced to obtain richer notions of simulation: for instance, in cellular automata, operations of space and time rescaling added to the notion of sub-automaton lead to a notion of simulation supporting universality \cite{bulk1,bulk2}. 
 
In this paper, such kind of spatio-temporal transformations are not considered explicitly for two reasons:
\begin{itemize}
\item our results about factor simulation (Section~\ref{s:facsim}) involve properties
(complexity of the subshifts) which are invariant by space and time
rescaling, hence the results still hold when considering rescaling as part of the simulation relation;
\item our results about sub-system simulation (Section~\ref{s:subsim}) are of the form
"any CA is the sub-automaton of a CA with some given property", which is
actually the most general we can get, and remain true when replacing
"sub-automaton" by more general notions of sub-system (such as
sub-system, or simulated system in the sense of \cite{bulk2}).
\end{itemize}

\subsection{Complexity of limit sets and column factors}

Many different points of view have been adopted to study the complexity of CA. We here use symbolic dynamics, and consider the complexity, as subshifts, of the limit set on the one hand, or the column factors on the other hand, as representing the actual complexity of the CA.

The \dfn{limit set} of the dynamical system is the nonempty closed subset $\Omega_F=\bigcap_{t\in\N}F^t(X)$. Its \dfn{limit system} is the maximal surjective subsystem, $(\Omega_F,F)$. It basically represent the asymptotic dynamics of the system.

With respect to CA limit sets, it is not difficult to see that the corresponding language is always corecursively enumerable (it is an \emph{effective} subshift). However, there are known examples of non-recursive ones \cite{langlim2}. Moreover, simple additional remarks \cite{revlim,kurka,nasu} give the following hierarchy:\\
$F$ injective $\impl$ $F\restr{\Omega_F}$ injective $\impl$ $\Omega_F$ is an SFT $\impl$ $F$ is stable $\impl$ $\Omega_F$ is sofic $\impl$ \ldots\

The \dfn{column factor} of a dynamical system $(\az,F)$ upon interval $\cc ij\subset\Z$ is the set $\tau_F^{\cc ij}=T_F^{\cc ij}(\az)$, where $\appl{T_F^{\cc ij}}\az{(A^{\cc ij})^\N}x{(F^t(x)\scc ij)_{t\in\N}}$. It is a factor subshift since $\sigma T_F^{\cc ij}=T_F^{\cc ij}F$. It can represent an observation of the system made by a measuring device with a finite precision (that cannot see cells which are far away). It can be seen that any factor subshift is essentially a factor of some column factor \cite{classif}.

In the case of CA, shift-invariance allows to consider only the central column factors $\tau_F^{\cc{-k}k}$ for radius $k\in\Ns$.
It is known \cite{notes} that these CA column factors always have a context-sensitive language; they may actually be strictly context-sensitive. In \cite{classif}, strong links with topological notions are stated: finite column factors is equivalent to equicontinuity, SFT column factors imply the shadowing property which in turn imply sofic column factors.

\section{Factor simulations}\label{s:facsim}

The two hierarchies that we have just defined, based on subshift classifications, are then very robust to factor simulations, as we can see in this section.
Taking the vocabulary of order theory, we say that a class $\mathcal C$ of systems is \emph{an ideal for factor simulation} if, whenever $(X,F)$ is a factor of $(Y,G)$, we have: ${(Y,G)\in\mathcal C\Rightarrow(X,F)\in\mathcal C}$.

\begin{prop}\label{p:limsim}
If $\Phi$ is a factor map from $(X,F)$ onto $(Y,G)$, then $\Omega_G=\Phi(\Omega_F)$.
\end{prop}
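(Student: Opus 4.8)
The plan is to prove the two inclusions $\Phi(\Omega_F)\subseteq\Omega_G$ and $\Omega_G\subseteq\Phi(\Omega_F)$ separately. The first inclusion is the easy direction and uses only the intertwining relation $\Phi F=G\Phi$ together with surjectivity of $\Phi$. For any $t\in\N$ we have $\Phi(F^t(X))=G^t(\Phi(X))=G^t(Y)$, so $\Phi(\Omega_F)=\Phi\bigl(\bigcap_{t\in\N}F^t(X)\bigr)\subseteq\bigcap_{t\in\N}\Phi(F^t(X))=\bigcap_{t\in\N}G^t(Y)=\Omega_G$; here the only subtlety is that $\Phi$ of an intersection is in general contained in the intersection of images, which is all we need for this direction.

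For the reverse inclusion $\Omega_G\subseteq\Phi(\Omega_F)$, the key point is a compactness argument. Fix $y\in\Omega_G$. For each $t\in\N$, since $y\in G^t(Y)$ there is $z_t\in Y$ with $G^t(z_t)=y$, and since $\Phi$ is onto there is $x_t\in X$ with $\Phi(x_t)=z_t$; then $\Phi(F^t(x_t))=G^t(\Phi(x_t))=G^t(z_t)=y$, and moreover $F^t(x_t)\in F^t(X)$. Now consider the sets $K_t=\set{w}{F^t(X)}{\Phi(w)=y}$. Each $K_t$ is nonempty (it contains $F^t(x_t)$), closed (it is $\Phi^{-1}(\{y\})\cap F^t(X)$, an intersection of closed sets in the compact space $X$, using continuity of $\Phi$ and the fact that $F^t(X)$ is closed as the continuous image of a compact space), and the sequence is nested: $K_{t+1}\subseteq K_t$ because $F^{t+1}(X)\subseteq F^t(X)$. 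By the finite intersection property in the compact space $X$, $\bigcap_{t\in\N}K_t\neq\emptyset$; any $w$ in this intersection lies in $\bigcap_{t\in\N}F^t(X)=\Omega_F$ and satisfies $\Phi(w)=y$, so $y\in\Phi(\Omega_F)$.

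The main obstacle — really the only place care is needed — is the reverse inclusion, and specifically making sure the $K_t$ are nonempty for \emph{every} $t$ simultaneously rather than just producing, for each $t$, a preimage in $F^t(X)$ that might not lie in $F^s(X)$ for $s>t$; this is exactly what the nestedness $K_{t+1}\subseteq K_t$ plus compactness resolves. One should also note that the lift $x_t\mapsto F^t(x_t)$ is used only to witness $K_t\neq\emptyset$; the element of $\Omega_F$ we finally extract need not be of the form $F^t(\text{something})$ in any uniform way, which is fine since $\Omega_F$ is defined as the intersection, not as a set of iterates of a single point. No rescaling or CA-specific structure is used; the argument is purely topological and works for arbitrary dynamical systems in the sense defined above, which is why it applies verbatim to $\sigma$ as well, yielding that $\Phi$ also sends the limit set of $(X,\sigma_X)$ onto that of $(Y,\sigma_Y)$ when needed.
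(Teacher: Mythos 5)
Your proof is correct and follows essentially the same route as the paper: the easy inclusion via $\Phi F^t(X)=G^t(Y)$ and the image of an intersection, and the reverse inclusion via the nested nonempty closed sets $\Phi^{-1}(y)\cap F^t(X)$ and compactness. Your sets $K_t$ are exactly the paper's $X_t$, with the nonemptiness step spelled out slightly more explicitly.
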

\begin{proof}
For any $t\in\N$, $\Phi F^t(X)=G^t(Y)$.
First note that $\Phi(\bigcap_{t\in\N}F^t(X))$ is included in $\bigcap_{t\in\N}\Phi F^t(X)=\bigcap_{t\in\N}G^t(Y)$.
Conversely, let $y\in\bigcap_{t\in\N}G^t(Y)$ and, for $t\in\N$, $X_t=\Phi^{-1}(y)\cap F^t(X)$. Note that $X_t$ is closed (since $\Phi$ and $F$ are continuous, and $X$ is compact) and nonempty (since $\Phi$ is onto). By compactness, the intersection $\bigcap_{t\in\N}X_t=\Phi^{-1}(y)\cap\Omega_F$ is not empty, \ie $y\in\Phi(\Omega_F)$.
We have proven that $\Phi(\Omega_F)=\Omega_G$.
\end{proof}

Moreover, we can note that $\Omega_{F^n}=\Omega_F$, since if $n\in\Ns$, then $F^{nt}(X)=\bigcap_{(n-1)t<j\le nt}F^j(X)$.
In the following we will no longer mention time and space rescaling, which does not essentially alter the results.

\begin{cor}
 The class of stable systems is an ideal for factor simulation.
\end{cor}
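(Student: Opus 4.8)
The plan is to read \emph{stable} as the property that the limit set is reached after finitely many steps, \ie that $F^{t_0}(X)=\Omega_F$ for some $t_0\in\N$, and then to transport this finite-time convergence along a factor map using exactly the identities behind Proposition~\ref{p:limsim}. Unwinding the definition of ideal together with the convention for factor maps, the statement to establish is: if $(X,F)$ is a factor of $(Y,G)$ and the \emph{source} $(Y,G)$ is stable, then the \emph{image} $(X,F)$ is stable. So I would fix a factor map $\Phi\colon Y\to X$ with $\Phi G=F\Phi$ and $\Phi$ onto, and assume $G^{t_0}(Y)=\Omega_G$ for some $t_0\in\N$.

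First I would record, by an immediate induction on $t$, that $\Phi G^t=F^t\Phi$, whence $\Phi\big(G^t(Y)\big)=F^t\Phi(Y)=F^t(X)$ because $\Phi$ is surjective; this is the same computation that opens the proof of Proposition~\ref{p:limsim}, read from the source side. Second, Proposition~\ref{p:limsim} applied to $\Phi$ in this orientation gives $\Omega_F=\Phi(\Omega_G)$.

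With these two facts the conclusion is a one-line substitution: applying $\Phi$ to the stabilization identity $G^{t_0}(Y)=\Omega_G$ yields
\[
F^{t_0}(X)=\Phi\big(G^{t_0}(Y)\big)=\Phi(\Omega_G)=\Omega_F,
\]
so the image reaches its limit set at the very same step $t_0$ and is therefore stable. Thus the class of stable systems is closed under taking factors, which is precisely the ideal property.

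I do not expect any genuine obstacle; the only point needing care is orienting the maps correctly. The definition of ideal, together with the Curtis-Hedlund-Lyndon convention, puts the stable system on the \emph{source} side and the system to be shown stable on the \emph{image} side, so Proposition~\ref{p:limsim} must be invoked with its two systems swapped relative to how it is stated. Once the orientation is fixed, stability is preserved simply because a factor map sends each finite-stage image $G^t(Y)$ onto $F^t(X)$ and sends $\Omega_G$ onto $\Omega_F$, hence carries the equality ``$G$ has converged by time $t_0$'' to the equality ``$F$ has converged by time $t_0$''.
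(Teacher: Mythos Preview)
Your proposal is correct and is essentially the paper's own one-line argument: apply $\Phi$ to the equality $G^{t_0}(Y)=\Omega_G$, use $\Phi G^{t_0}=F^{t_0}\Phi$ together with surjectivity to get $\Phi(G^{t_0}(Y))=F^{t_0}(X)$, and use Proposition~\ref{p:limsim} to get $\Phi(\Omega_G)=\Omega_F$. The only difference is notational: the paper keeps the labelling of Proposition~\ref{p:limsim} (source $(X,F)$, factor $(Y,G)$) rather than that of the ideal definition, so no explicit ``swap'' is mentioned there.
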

\begin{proof}
 If $\Phi$ is a factor map from $(X,F)$ onto $(Y,G)$, and $\Omega_F=F^t(X)$ for some $t\in\N$, then $\Omega_G=\Phi(\Omega_F)=\Phi F^t(X)=G^t(X)$.
\end{proof}

We can also derive the two following results.
\begin{prop}
 The class of CA whose limit set\resp{factor subshifts} is sofic\resp{context-free, context-sensitive,
recursive} is an ideal for factor simulation.
\end{prop}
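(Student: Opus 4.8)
The statement asserts that the complexity classes of limit sets and of factor subshifts are preserved downwards by factor maps. The plan is to reduce everything to Proposition~\ref{p:limsim}: if $\Phi$ is a factor map from $(X,F)$ onto $(Y,G)$, then $\Omega_G=\Phi(\Omega_F)$, and $\Phi$ is itself a block map (coloring, after passing to a higher radius). So I would first argue that, given a description of $\Omega_F$ (as a corecursively enumerable, sofic, context-free, context-sensitive or recursive subshift) together with the local rule $\phi$ of the factor map, one can effectively describe $\Omega_G$ with no loss of complexity. The key observation is that applying a fixed finite-radius block map to a subshift is a very cheap operation on languages: a word $w$ is in $\lang(\Omega_G)$ if and only if there exists a word $u\in\lang(\Omega_F)$ of length $|w|+2r$ with $\phi$ applied cellwise to $u$ equal to $w$. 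This is an existential quantification over a polynomially (in fact linearly) bounded witness, followed by a membership test in $\lang(\Omega_F)$ and a fixed finite computation; hence it preserves each of the classes in question.

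Concretely, I would go class by class. For \emph{sofic}: if $\Omega_F$ is sofic, it is the image of an SFT under a coloring (Weiss), and composing with the coloring $\phi$ (after raising to a common radius so that $\phi$ is itself a coloring on a recoded alphabet) still gives a coloring of an SFT, so $\Omega_G=\Phi(\Omega_F)$ is sofic. For \emph{context-sensitive}: a linear bounded automaton can, on input $w$, nondeterministically guess the $2r$ extra symbols, reconstruct a candidate preimage word $u$, check $\phi(u)=w$ cellwise, and then run the LBA for $\lang(\Omega_F)$ on $u$; all of this is in linear space since $|u|=|w|+2r$. For \emph{context-free}: this is the one class among the listed ones that is \emph{not} closed under inverse homomorphism in the naive direction, but it \emph{is} closed under homomorphic images and under intersection with regular sets; since here we are taking a \emph{forward} image $\lang(\Omega_G)=h(\lang(\Omega_F))$ for the length-preserving homomorphism $h=\phi$ (extended to a morphism after recoding so $r=0$), and context-free languages are closed under morphic image, $\lang(\Omega_G)$ is context-free. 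For \emph{recursive}: since $\Omega_F$ is a subshift, $\lang(\Omega_F)$ recursive gives a decision procedure for membership, and the bounded existential quantification over preimage words keeps it decidable. The same argument with ``column factor $\tau_F^{\cc{-k}k}$'' in place of ``limit set'' works because a factor map intertwining both $F$ and the shifts sends $\tau_F^{\cc{-k}k}$ onto a factor subshift of $\tau_G^{\cc{-k}k}$ via a block map, so the identical quantifier-complexity bookkeeping applies; and the classes are also closed under the finite alphabet recodings induced by grouping cells.

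The step I expect to require the most care is the \emph{context-free} case, since context-free languages are notoriously not closed under many natural operations (intersection, complement, inverse length-preserving morphism in general). The safe route is to make sure the reduction only ever uses operations under which $\mathrm{CFL}$ \emph{is} closed: (i) recode the alphabet of $Y$ by grouping, which is a bijective renaming and harmless; (ii) after recoding, $\Phi$ becomes a coloring, i.e. a letter-to-letter map, which extends to a monoid morphism $h$ on words; (iii) $\lang(\Omega_G)=h(\lang(\Omega_F))$, and $\mathrm{CFL}$ is closed under morphic images. One must double-check that $\lang(\Omega_G)$ really equals $h(\lang(\Omega_F))$ and not just $h$ of some larger language: this is exactly Proposition~\ref{p:limsim} at the level of configurations, pushed down to finite patterns using that $\Phi$ is onto and shift-commuting, so every pattern of $\Omega_G$ lifts to a pattern of $\Omega_F$. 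Once that equality is in hand, each class follows from its standard closure properties, and the proof is essentially a table: forward block map $=$ (bounded guess) $+$ (membership) $+$ (finite check), and all five classes absorb that.
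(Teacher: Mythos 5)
Your proof is correct and follows essentially the same route as the paper: Proposition~\ref{p:limsim} gives $\Omega_G=\Phi(\Omega_F)$, so the claim reduces to the closure of each language class under block maps of subshifts, which the paper simply cites as ``known'' and you spell out correctly (including the one delicate case, context-free, handled via recoding to a letter-to-letter map and closure of CFL under morphic images). The only point of divergence is the factor-subshift half: you route it through column factors and re-invoke the closure properties, whereas the paper gets it for free from transitivity of factor maps --- a factor subshift of the simulated CA is literally the \emph{same} subshift, viewed as a factor subshift of the simulating CA, so no closure argument is needed there.
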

\begin{proof}
It is known that each of the corresponding classes of subshifts is preserved by factor maps.
Proposition~\ref{p:limsim} states that the limit set of the simulated CA is a factor of the limit set of the simulating CA.\\
Moreover, note, thanks to the transitivity of the notion of factor, that a factor subshift of the simulated CA is also a factor subshift of the simulating CA.
\end{proof}


The following slightly generalizes a result in \cite{theyssier}.
\begin{prop}\label{p:injsim}
The class of reversible partial CA is an ideal for coloring.
\end{prop}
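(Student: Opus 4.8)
The statement to prove is Proposition~\ref{p:injsim}: the class of reversible partial CA is an ideal for coloring. So I am given a coloring $\Phi$ (a factor map of radius $0$, i.e. a letter-to-letter projection) from a partial CA $(X,F)$ onto a partial CA $(Y,G)$, where $(X,F)$ is a reversible partial CA, and I must show $(Y,G)$ is reversible, i.e. $G$ is injective on $Y$ (by the cited result from \cite{edensof}, injectivity of a partial CA on an irreducible SFT automatically upgrades to reversibility, so it suffices to prove injectivity).

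The plan is to exploit that $\Phi$ has radius $0$: it is induced by a map $\phi$ on letters, applied cellwise. The key point is that such a coloring cannot "merge" the dynamics in a harmful way because $F$ is reversible. First I would suppose $G(y)=G(y')$ for $y,y'\in Y$ and pick $\Phi$-preimages $x,x'\in X$ with $\Phi(x)=y$, $\Phi(x')=y'$. Then $\Phi(F(x)) = G(\Phi(x)) = G(\Phi(x')) = \Phi(F(x'))$, so $F(x)$ and $F(x')$ have the same $\Phi$-image. Since $F$ is reversible, $F^{-1}$ is also a partial CA (continuous, shift-commuting) on $X$; the idea is to pull this equality back. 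The natural move is to show that the coloring $\Phi$ is in fact \emph{injective on $X$} — equivalently, that $\phi$ restricted to the letters actually occurring in configurations of $X$ is injective — because once $\Phi$ is injective on $X$, surjectivity of $\Phi$ forces it to be a conjugacy, and then $G$ inherits reversibility from $F$ immediately.

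To see that $\Phi$ is injective on $X$: suppose not, so there are $x\neq x'$ in $X$ with $\Phi(x)=\Phi(x')$; since $\Phi$ is cellwise, they differ in some coordinate $i$ with $\phi(x_i)=\phi(x'_i)$. Now consider the forward orbits: $\Phi(F^t(x)) = G^t(\Phi(x)) = G^t(\Phi(x')) = \Phi(F^t(x'))$ for all $t$, so the entire forward orbits of $x$ and $x'$ stay $\Phi$-identified. Using reversibility, the same holds for the backward orbits $F^{-t}$. One then wants a contradiction with the fact that $F$ is a \emph{bijection} on a space where cells carry genuinely distinct letters; the technical device is that for a reversible partial CA, a coloring that is not injective would force a non-injective partial CA as a factor, but I can instead argue more directly that because $F\restr{X}$ is bijective and $\Phi$ is a radius-$0$ factor onto $(Y,G)$, the fibers of $\Phi$ must be singletons — otherwise $F$ restricted to a fiber-respecting structure contradicts injectivity. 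Concretely: the relation "$x\sim x'$ iff $\Phi(x)=\Phi(x')$" is a closed, shift-invariant, $F$-invariant and (by reversibility) $F^{-1}$-invariant equivalence relation; if it is nontrivial, projecting it down shows $G$ collapses information that $F$ does not, which is impossible since $\Phi$ conjugates the shifts and $F=\Phi^{-1}G\Phi$ would then be forced to fail injectivity.

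The main obstacle is making the last step rigorous: passing from "$\Phi$ identifies two distinct configurations" to an actual contradiction with injectivity of $F$. The cleanest route is probably: since $\Phi$ is a coloring and a factor of $\sigma$ as well, and $X,Y$ are irreducible SFTs, one shows $\Phi\restr{X}$ must be injective because otherwise $Y$ would have strictly smaller entropy than $X$ while still carrying a reversible partial CA $G$ with $F = \Phi^{-1}G\Phi$ on fibers — and a reversible CA preserves entropy and cannot be a proper quotient of another reversible CA via a coloring. If entropy arguments feel too heavy, the alternative is the direct combinatorial argument on the SFT $X$: a radius-$0$ factor that is not injective on $X$ produces, for each configuration of $Y$, at least two $F$-orbits mapping to the same $G$-orbit, contradicting that $F^{-1}$ is well-defined and $\Phi$-compatible. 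I would present whichever of these two the authors' earlier machinery (the \cite{edensof} and \cite{theyssier} results) makes shortest, and I expect the whole proof to be only a few lines once the right lemma is invoked.
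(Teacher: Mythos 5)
Your proof hinges on establishing that the coloring $\Phi$ is injective on $X$, and that claim is false. A radius-$0$ factor map between reversible partial CA need not be injective: take $X=\{0,1,2\}^\Z$ and $Y=\{0,1\}^\Z$, both equipped with the identity map (a reversible CA of radius $0$), and let $\phi(0)=0$, $\phi(1)=\phi(2)=1$. This is a coloring intertwining the two dynamics and the shifts, yet its fibers are uncountable. Consequently none of the routes you sketch can succeed: the entropy argument fails because a factor map can strictly decrease the entropy of the underlying subshift regardless of the reversibility of the maps on top of it; and the ``fiber-respecting'' argument fails because the relation $\Phi(x)=\Phi(x')$ being closed, shift-, $F$- and $F^{-1}$-invariant creates no tension with injectivity of $F$ --- $F$ simply permutes the fibers. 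Your opening computation ($G(y)=G(y')\Rightarrow\Phi(F(x))=\Phi(F(x'))$) is correct but never closes back to $y=y'$, so the argument has no valid endgame.

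The idea you are missing is to construct a right inverse of $G$ instead of attacking injectivity of $G$ directly. Since $\Phi$ is onto and has radius $0$, for each letter $b\in B$ one may choose a letter $\phi^{-1}(b)$ with $\phi(\phi^{-1}(b))=b$; applying $\phi^{-1}$ cellwise yields a continuous, shift-commuting map $\Phi^{-1}$ with $\Phi\Phi^{-1}=\mathrm{id}_\Gamma$ (injectivity of $\Phi$ is irrelevant --- only this letter-level section is needed). Setting $G^{-1}=\Phi F^{-1}\Phi^{-1}$ and using $\Phi F=G\Phi$ gives $GG^{-1}=\Phi FF^{-1}\Phi^{-1}=\mathrm{id}_\Gamma$. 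Hence $G^{-1}$ is an injective partial CA, so by the result of \cite{edensof} it is bijective and reversible, and $G$, being its inverse, is therefore a bijection whose inverse is a partial CA, \ie $G$ is reversible. This is exactly the paper's argument, and it is where the hypothesis of radius $0$ is genuinely used: for a positive-radius factor map there is no such letter-wise section, which is why the question for general factor maps is left open in the paper.
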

\begin{proof}
Let $\Phi$ be a factor map based on some radius-$0$ local rule $\phi:A\to B$, from a partial CA $(\Sigma,F)$ onto another one $(\Gamma,G)$, where $\Sigma\subset\az$ and $\Gamma\subset B^\Z$ and there exists some partial CA $F^{-1}:\Sigma\to\Sigma$ such that $FF^{-1}$ is the identity.
By surjectivity of $\Phi$, for any letter $b\in B$, there exists a letter, denoted abusively $\phi^{-1}(b)\in A$ such that $\phi(\phi^{-1}(b))=b$. If $\Phi^{-1}$ represents the parallel application of $\phi^{-1}$ to $B^\Z$, we obtain that $\Phi\Phi^{-1}$ is the identity of $\Gamma$.
We can define the map $G^{-1}=\Phi F^{-1}\Phi^{-1}$, in such a way that $GG^{-1}$ is the identity of $\Gamma$. $G^{-1}$ is an injective partial CA, hence it is reversible, and it is the actual inverse map of $G$.
\end{proof}
As far as we know though, it is unknown whether the class of injective CA is still an ideal for factor simulation.
\begin{cor}
The class of CA which are reversible over the limit set is an ideal for coloring.
\end{cor}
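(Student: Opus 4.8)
The plan is to reduce the statement to Proposition~\ref{p:injsim} applied to the limit systems. So let $\Phi$ be a coloring from a CA $(X,F)$ onto a CA $(Y,G)$, say carried by a radius-$0$ local rule $\phi$, and assume that $F$ is reversible over $\Omega_F$, \ie $F\restr{\Omega_F}$ is a bijection.

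First I would observe that $\Phi$ restricts to a coloring between the two limit systems. Indeed, by Proposition~\ref{p:limsim} we have $\Omega_G=\Phi(\Omega_F)$, so $\Phi\restr{\Omega_F}$ is a continuous onto map from the subshift $\Omega_F$ to the subshift $\Omega_G$; it still commutes with the shift, it still intertwines $F$ and $G$, and it is still given by the radius-$0$ rule $\phi$ (read on the letters occurring in $\Omega_F$). Moreover, since $F\restr{\Omega_F}$ is injective, the implications recalled just after the definition of the limit set give that $\Omega_F$, and hence its coloring image $\Omega_G$, is a subshift of finite type; restricting to the relevant components if needed, we are then in the situation covered by Proposition~\ref{p:injsim}, with $\Sigma=\Omega_F$, $\Gamma=\Omega_G$, and $F\restr{\Omega_F}$ playing the role of the reversible partial CA to be simulated.

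Applying Proposition~\ref{p:injsim} then yields that $G\restr{\Omega_G}$ is a reversible partial CA, \ie $G$ is reversible over its limit set, which is exactly the ideal property we want. (As before, the variant with time/space rescaling follows from $\Omega_{F^n}=\Omega_F$.)

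I expect the only delicate point to be checking that we really are within the hypotheses of Proposition~\ref{p:injsim}: one has to be careful that the limit sets are (irreducible) SFTs, and, inside the argument of Proposition~\ref{p:injsim}, that the letter-to-letter section $\phi^{-1}$ of $\phi$ can be chosen so that $\Phi^{-1}$ sends $\Omega_G$ back into $\Omega_F$ — this is what makes $G^{-1}=\Phi F^{-1}\Phi^{-1}$ well defined on $\Omega_G$. Everything else (commutation with $F$, $G$ and $\sigma$, continuity, surjectivity) is immediate from the corresponding properties of $\Phi$ together with Proposition~\ref{p:limsim}.
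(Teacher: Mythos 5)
Your proof follows the paper's argument exactly: restrict the coloring to the limit systems via Proposition~\ref{p:limsim}, note that $\Omega_F$ is an irreducible SFT (the paper cites \cite{revlim} for this, where you invoke the in-text hierarchy and gloss over irreducibility), and conclude by Proposition~\ref{p:injsim}. The only slip is your parenthetical claim that the coloring image $\Omega_G$ is again an SFT --- a letter-to-letter factor of an SFT is in general only sofic --- but this is immaterial here, since Proposition~\ref{p:injsim} produces the inverse $G^{-1}=\Phi F^{-1}\Phi^{-1}$ directly, and the paper's own proof glosses the same point.
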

\begin{proof}
Consider such a CA.
From \cite{revlim}, it is stable and its limit set is an irreducible SFT.
Now if it is linked to some other CA by some coloring, then by Proposition~\ref{p:limsim}, so are the two limit systems (that are partial CA), which then respect the hypotheses of Proposition~\ref{p:injsim}.
\end{proof}

From the previous, we can see that any universal CA for factor simulation (up to rescaling) must have  a non-recursive limit set and strictly context-sensitive factor subshifts (and of course must not be injective), but the existence of such a CA is still open.

\section{Sub-system simulations}\label{s:subsim}
\label{sec:injsimu}

We will see in this section that, contrary to factor simulations, sub-system simulations allow to hide the complexity of the simulated CA into the simulator CA.

\subsection{Hiding the column factors}

If $G$ is a CA over alphabet $C$ of radius $r>0$, local rule $g$, then let us define the CA $\tilde G$ over alphabet $D=\{-1,0,1\}\times C$ with the same radius $r$ as $G$ and  local rule:
\[\appl{\tilde g}{D^{2r+1}}D{(\varepsilon_{-r},c_{-r})\ldots(\varepsilon_r,c_r)}{\soit{
(\varepsilon_0,c_{\varepsilon_0})&\textrm{if }\varepsilon_0\ne0~;\\(0,g(c_{-r}\ldots c_r))&\textrm{otherwise}.}}\]
Clearly, $G$ is a sub-automaton of $\tilde G$ (up to state renaming) corresponding to the sub-alphabet $\{0\}\times C$.

\begin{thm}\label{t:trsft}
Any cellular automaton $G$ is a sub-automaton of some CA whose column factors are SFT of order $2$.
\end{thm}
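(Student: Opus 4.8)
The embedding itself is already in hand: the restriction of $\tilde G$ to the full-shift $(\{0\}\times C)^\Z$ is, up to the renaming $c\leftrightarrow(0,c)$, exactly $G$, as noted just before the statement. So the whole content of the proof is to check that every column factor of $\tilde G$ is an SFT of order $2$. By shift-invariance it suffices to treat the central windows $\cc{-k}k$, $k\in\Ns$, so fix such a $k$ and write the rows of $\tau_{\tilde G}^{\cc{-k}k}$ as words over the letter-set $D^{2k+1}$, splitting a letter into its $\{-1,0,1\}$-part and its $C$-part.

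The plan rests on one observation: the first coordinate of $\tilde g$ is always $\varepsilon_0$, hence the $\{-1,0,1\}$-component of a cell never changes. Consequently, along any orbit the $\{-1,0,1\}$-part of the window is a fixed word $w\in\{-1,0,1\}^{2k+1}$ — already an order-$2$ constraint: forbid any two consecutive rows whose $\{-1,0,1\}$-parts differ. For a given $w$, the $C$-part evolves cell by cell: a cell with $\varepsilon=1$ (resp.\ $\varepsilon=-1$) copies the $C$-part of its right (resp.\ left) neighbor, and a cell with $\varepsilon=0$ applies $g$ to its neighborhood. Reading this off, the passage from one row to the next is governed only by pairwise constraints: for a cell whose radius-$r$ neighborhood lies inside the window the next $C$-value is a definite function of the current row, while for a cell close to the boundary it ranges over the image of the appropriate partial application of $g$ (in particular, it is unconstrained when it copies a neighbor lying outside the window). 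Let $S_k$ be the SFT of order $2$ over $D^{2k+1}$ cut out by all these forbidden pairs. Every orbit of $\tilde G$ satisfies them, so $\tau_{\tilde G}^{\cc{-k}k}\subseteq S_k$.

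It remains to realize each point of $S_k$, and this is the one delicate step. Given a point of $S_k$ with $\{-1,0,1\}$-word $w$, extend $w$ to $\Z$ by writing $1$'s to the right of the window and $-1$'s to the left; these act as conveyor belts that feed the window, from cells arbitrarily far away, with $C$-values that are free parameters of the initial configuration, while the window itself starts from the prescribed first row. One then chooses these incoming values so that the orbit reproduces the prescribed column. The point to check — and the main obstacle — is that this can always be done, \ie that realizability of a column is precisely the conjunction of its length-$2$ sub-conditions: at step $t$ a boundary cell of the window reads cells of the configuration whose positions move off to infinity with $t$, so the values it needs (after selecting, where the relevant partial application of $g$ fails to be injective, an arbitrary preimage) sit at pairwise distinct cells of the initial configuration and can be prescribed independently; for radius $r=1$ this is immediate, and it is the combinatorial core of the argument in general. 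Granting it, $\tau_{\tilde G}^{\cc{-k}k}=S_k$ for every $k$, which is the theorem.
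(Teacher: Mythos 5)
Your argument is correct and follows essentially the same route as the paper: both identify the column factor $\tau_{\tilde G}^{\cc{-k}{k}}$ with its order-$2$ approximation and realize an arbitrary point of the latter by planting the required $C$-values on the $\pm1$-marked conveyor belts outside the window, the point being that the outside cells feeding the boundary at distinct times are distinct and hence freely prescribable. The one caveat is the step you yourself flag: for radius $r\ge 2$ the ``pairwise distinct cells'' claim fails for speed-$1$ belts (at consecutive times a boundary cell reads overlapping blocks of $r$ outside cells), and the paper's fix is precisely to make the $\pm1$ cells copy from the neighbour at distance $r$, so the belts advance by $r$ per step and the blocks become disjoint --- this is its closing remark about widening the radius and ``increasing the speed of the shifts''.
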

\begin{proof}
Let us take a CA $G$ over alphabet $C$, of radius $1$.
Let $\tilde G$ be defined as above over alphabet $\tilde C=\{-1,0,1\}\times C$, $k\in\Ns$ and $\Sigma$ its column factor of width $k$.
Of course, $\Sigma$ is included in its \dfn{$2$-approximation} \[\kaprx2(\Sigma)=\set{z=(z^t)_{t\in\N}}{(\tilde C^k)^\N}{\forall t\in\N,\exists x^t\in[z^t]\cap\tilde G^{-1}([z^{t+1}])}.\]
Let us show that they are actually equal. Let $z=(z^t)_{t\in\N}\in(\tilde C^k)^\N$ be such that for any $t\in\N$ there exists some configuration $x^t\in[z^t]$ with $\tilde G(x^t)\in[z^{t+1}]$, and $x$ defined by:
\[
x_i=\soit{x^0_i&\textrm{if }0\le i<k\\(-1,c)&\textrm{if }i<0\textrm{ and }x^{-i-1}_{-1}=(\varepsilon,c)\\(1,c)&\textrm{if }i\ge k\textrm{ and }x^{i-k}_k=(\varepsilon,c)~.}
\]
An inductive application of the local rule gives that for any $t\in\N$, we have:
\[
\tilde G^t(x)_i=\soit{x^t_i&\textrm{if }0\le i<k\\(-1,c)&\textrm{if }i<0\textrm{ and }x^{t-i-1}_{-1}=(\varepsilon,c)\\(1,c)&\textrm{if }i\ge k\textrm{ and }x^{t+i-k}_k=(\varepsilon,c)~.}
\]
In particular, $z=T_{\tilde G}^k(x)\in\Sigma$.
We have proven that $\Sigma$ is an SFT of order $2$.
If $G$ does not have radius $1$, then it is easy to widen the radius of $\tilde G$ (and increase the speed of the shifts) to get the same result.
\end{proof}

\subsection{Hiding the limit set}

The main result of this section is based on the existence of a firing-squad CA with specific properties expressed by Lemma~\ref{l:fsquad}. We actually refer to the
firing-squad CA defined in \cite{rice}, that we denote by $S$,
and prove additional properties in Section~\ref{sec:fsproof}.
This CA admits a so-called \dfn{firing} state $\gamma$ and a spreading state $\kappa$. 
Let $r_S$ the radius of $S$, $s$ its local rule, $Q$ its state set, and $Q'=Q\setminus\{\kappa,\gamma\}$.
Consider the set $X_S$ of configurations having an infinite history avoiding $\kappa$ and $\gamma$:
\[X_S = \set y{Q^\Z}{\exists (y^t)_{t\in\N}\in(Q^\Z)^\N, y^0=y\text{ and }\forall t\in\Ns,y^t\in Q'\text{ and }S(y^t)=y^{t-1}}~.\]

\begin{lem}\label{l:fsquad}
$S$ satisfies the following:
 \begin{enumerate}
 \item\label{i:fire} $\dinf\gamma\in X_S$;
 \item\label{i:firelim} $\Omega_S\cap[\gamma]\subset\{\kappa,\gamma\}^\Z$;
 \item\label{i:csens} $X_S$ is NL-recognizable.
 \end{enumerate} 
\begin{proof}
  Properties \eqref{i:fire} and \eqref{i:firelim} are proven in \cite{rice}. Property \eqref{i:csens} 
is given by Proposition~\ref{prop:simplefirelimit} below. 
\end{proof}
\end{lem}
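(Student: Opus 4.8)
Properties \eqref{i:fire} and \eqref{i:firelim} concern the forward dynamics of the particular firing-squad automaton of \cite{rice}, and I would simply quote them from there; all the work lies in \eqref{i:csens}. First I would check that $X_S$ really is a subshift, so that asking for NL-recognizability makes sense: write $X_S=\bigcap_{t\in\N}A_t$, where
\[A_t=\{\, y\in Q^\Z : \exists\, y^1,\dots,y^t\in Q'\text{ with }S(y^i)=y^{i-1}\text{ for }1\le i\le t\text{ and }y^0=y\,\}.\]
Each $A_t$ is the image of a closed subset of the compact space $(Q^\Z)^{t+1}$ under the continuous projection onto the first coordinate, hence closed; the $A_t$ decrease, so $X_S$ is closed, and it is $\sigma$-invariant because $S$ commutes with $\sigma$. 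It is nonempty by \eqref{i:fire}. So $\lang(X_S)$ is the language of a genuine subshift and the question is meaningful.

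\textbf{A geometric normal form.} The core of \eqref{i:csens} is a structural description of the configurations of $X_S$. Restricted to histories that never use the spreading state $\kappa$ nor the firing state $\gamma$, the firing-squad dynamics is rigid: such a configuration splits into segments delimited by general/wall symbols, inside each of which a bounded number of signals travel at fixed speeds and bounce off the delimiters, the whole space--time diagram being controlled by one global ``age'' parameter. Requiring the backward orbit to be \emph{infinite} forces that age to be arbitrarily large; on an individual segment this leaves only finitely many admissible local pictures, but across the configuration it couples all the segments through affine relations between the positions of the delimiters and of the signals. The statement I would aim at is therefore: a word $w\in Q^n$ belongs to $\lang(X_S)$ if and only if $w$ admits such a decomposition for which a fixed finite system of affine (in)equalities between the $O(\log n)$-bit positions of its delimiters and signals is satisfiable, the two semi-infinite border pieces of $w$ being completed by the periodic ``synchronized generals'' pattern used for \eqref{i:fire}.

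\textbf{The nondeterministic logspace algorithm.} Granting that characterization, on input $w\in Q^n$ a nondeterministic logspace machine guesses the delimiter/signal decomposition of $w$ together with the finitely many integer parameters involved (positions of the markers and the relevant phases, each on $O(\log n)$ bits), and then verifies in logarithmic space that: (a) $w$ is consistent, cell by cell, with the guessed decomposition and phase, which is a single sweep using a window of constant size; and (b) the fixed finite system of affine relations among the guessed integers holds, which is ordinary arithmetic on $O(\log n)$-bit numbers. Every word so accepted is genuinely in $\lang(X_S)$: from the guessed data one reconstructs the firing-squad computation on the finite part and glues the border pattern of \eqref{i:fire} on both sides, obtaining a bi-infinite configuration with an infinite $Q'$-history that contains $w$. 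Hence $X_S$ is NL-recognizable; this is Proposition~\ref{prop:simplefirelimit}.

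\textbf{Main obstacle.} The delicate step is the normal-form claim, that is, converting the \emph{liveness} requirement ``admits an infinite backward orbit inside $Q'$'' into the \emph{local-plus-arithmetic} condition above. This requires a precise analysis of how the signals of the concrete CA $S$ of \cite{rice} interact and of what the gardens of Eden of its reverse dynamics look like, so as to rule out patterns extending backward for a long but bounded time, together with a bound on the number of auxiliary parameters guaranteeing that $O(\log n)$ bits suffice; one must also check that the two semi-infinite border pieces can always be completed and that completing them only adds further affine constraints. This CA-specific analysis is carried out in Section~\ref{sec:fsproof}.
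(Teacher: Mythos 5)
Your plan matches the paper's: properties (1) and (2) are quoted from \cite{rice}, and (3) is obtained exactly as you describe --- a signal/delimiter normal form for configurations of $X_S$ in which membership reduces to guessing a decomposition together with $O(\log n)$-bit positions subject to a fixed finite system of linear equations (the paper's periodicity lemma giving ${}^\omega(\#B^n)^\omega$, and regular expressions with constraints such as $x=2z+y$), all verifiable in nondeterministic logarithmic space. The only organizational difference is that the paper realizes your single ``normal form'' as an explicit case disjunction on the number of $\#$/$\#'$ symbols and the presence of signals, each case handled by a separate lemma in Section~\ref{sec:fsproof} and the appendix.
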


This construction allows to state the following theorem, proven at the end of the subsection.
\begin{thm}\label{t:simple}
  Any CA is a sub-automaton of some CA whose limit set is NL-recognizable. 
\end{thm}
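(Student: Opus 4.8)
The plan is to build, given an arbitrary CA $G$ over alphabet $C$, a CA $\mathcal{G}$ which contains $G$ as a sub-automaton but whose limit set is forced to be as simple as $X_S$ (plus a harmless amount of bookkeeping), hence NL-recognizable. The idea is to run the firing-squad CA $S$ "on top of" a copy of $\tilde G$ (the padded version of $G$ from the previous subsection, or rather of $G$ directly if $G$ already has a spreading state; in general we first embed $G$ into $\tilde G$ so that we may assume a spreading state is available) so that: as long as no firing has occurred, the $G$-layer evolves freely; but whenever the $S$-layer fires (state $\gamma$ appears), the $G$-layer is wiped to a spreading state $0$. Concretely, I would take $\mathcal{G}$ over alphabet $Q\times\tilde C$ (where $Q$ is the state set of $S$ and $\tilde C=\{-1,0,1\}\times C$), with radius $\max(r_S,1)$, and local rule that updates the first coordinate by $s$ and the second coordinate by $\tilde g$ \emph{unless} the first coordinate of some cell in the neighbourhood is the firing state $\gamma$ or the spreading state $\kappa$, in which case the second coordinate is set to (and remains) $0$. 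The sub-automaton corresponding to $G$ is then obtained by restricting to the set of configurations whose $S$-layer lies entirely in $X_S$-like "never-firing" configurations — more precisely to $\{q_0\}^\Z\times(\{0\}\times C)^\Z$ for some fixed quiescent non-firing state $q_0$ of $S$ — which is a full shift on which $\mathcal{G}$ acts exactly like $G$.

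Next I would compute $\Omega_{\mathcal{G}}$. The projection onto the first coordinate is a factor map (it commutes with $\mathcal{G}$ and with the shift), so by Proposition~\ref{p:limsim} the first-coordinate projection of $\Omega_{\mathcal{G}}$ is $\Omega_S$. The key dichotomy, using Lemma~\ref{l:fsquad}\eqref{i:firelim}: a configuration in $\Omega_{\mathcal{G}}$ either has an $S$-layer in $\Omega_S$ that contains a $\gamma$ somewhere, in which case (by \eqref{i:firelim}) its $S$-layer lies in $\{\kappa,\gamma\}^\Z$ and — because $\gamma$ and $\kappa$ force the second coordinate to $0$ throughout, and $0$ is spreading — after the burst the $\tilde C$-layer collapses everywhere to $0$; or it has an $S$-layer with no $\gamma$ at all. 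In the latter case I must argue that having an arbitrarily long history forces the $S$-layer to actually lie in $X_S$ (this is essentially the definition of $X_S$ read backwards: a configuration in $\bigcap_t S^t(Q^\Z)$ whose whole forward and enough of whose backward orbit avoids $\gamma$ is exactly an $X_S$-configuration, using property \eqref{i:fire} to see $X_S$ is nonempty and well-behaved). Either way $\Omega_{\mathcal{G}}$ is contained in $(X_S\times(\{0\}\times C)^\Z)\,\cup\,(\{\kappa,\gamma\}^\Z\times\{(0,\ast)\}^\Z)$ for a suitable reading, and conversely both pieces are reached, so $\Omega_{\mathcal{G}}$ equals a finite union of products of $X_S$-like factors with a full shift on $C$ (or a singleton). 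Since $X_S$ is NL-recognizable by Lemma~\ref{l:fsquad}\eqref{i:csens}, a full shift is trivially NL-recognizable, and NL-recognizable subshifts are closed under finite products and finite unions, $\Omega_{\mathcal{G}}$ is NL-recognizable.

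Finally I would need to handle the "history avoiding $\gamma$" part cleanly: the subtlety is that a configuration lying in $\Omega_{\mathcal{G}}$ has, for every $t$, a $\mathcal{G}$-preimage chain of length $t$, but this chain's $S$-layer need not avoid $\gamma$ in the past. I expect the main obstacle to be exactly this bookkeeping: showing that a $\gamma$ in the past of an $\Omega_{\mathcal{G}}$-point still forces the $\tilde C$-layer to $0$ in the present (because once $0$ spreads it never disappears, and the firing-squad dynamics of $S$ restricted to $\{\kappa,\gamma\}$ keeps producing $\kappa$/$\gamma$, so the wipe is permanent), and conversely that a point whose entire history (forward and backward, in the $\mathcal{G}$ sense) avoids $\gamma$ has its $S$-layer in $X_S$. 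One must be slightly careful that $X_S$ as defined requires a full backward history in $Q'$, whereas the limit set only gives finite-length preimage chains; the resolution is the standard compactness argument (as in Proposition~\ref{p:limsim}) combined with the fact proven in Section~\ref{sec:fsproof} that $\Omega_S$-points avoiding $\gamma$ are exactly limits of such chains, i.e.\ lie in $X_S$. Modulo that care, the decomposition of $\Omega_{\mathcal{G}}$ into NL-recognizable pieces and their union gives the theorem.
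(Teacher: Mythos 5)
There is a genuine and fatal gap in your construction: it does not hide the limit set of $G$, it preserves it. In your CA $\mathcal G$ the $G$-layer \emph{evolves under $\tilde g$ at every step} as long as no firing occurs. Consider the fibre over the all-blank configuration of the $S$-layer (blank is quiescent and never produces $\gamma$ or $\kappa$, and $\dinf B\in X_S$): on this fibre $\mathcal G$ acts exactly as $\tilde G$ forever, so a point of $\Omega_{\mathcal G}$ lying over $\dinf B$ must have a $\tilde G$-preimage chain of every length in its second layer, i.e.\ its second layer lies in $\Omega_{\tilde G}$, which contains a copy of $\Omega_G$. Hence $\Omega_{\mathcal G}$ contains an embedded copy of $\Omega_G$, which can be non-recursive. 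Your claimed decomposition of $\Omega_{\mathcal G}$ as $X_S$-pieces crossed with a \emph{full shift} on $C$ is therefore false (and the converse inclusion you assert also fails: an arbitrary $z\in(\{0\}\times C)^\Z$ over an $X_S$-configuration need not even have a single $\mathcal G$-preimage if $G$ is not surjective). Using the firing squad as a kill switch is the wrong polarity.

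The paper's construction is the opposite: on the alphabet $A\sqcup(A\times Q)$, the $A$-component is \emph{frozen} while the firing squad counts down on the $Q$-component, and it is \emph{released} to evolve under $F$ only at the moment $\gamma$ appears (case (2) of $\delta_{F,S}$ maps $(a,\gamma)$ to the bare state $a$). Because $\dinf\gamma\in X_S$, the squad can fire arbitrarily late, so for every $t$ and every $x\in A^\Z$ there is a configuration mapping to $x$ in $t$ steps; thus the entire full shift $A^\Z$ lies in $\Omega_{\Delta_{F,S}}$ (Lemma~\ref{l:sigomeg}), and the possibly non-recursive set $\Omega_F\subseteq A^\Z$ is flooded into this trivially simple full shift. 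The remaining work (Lemmas~\ref{l:fire}--\ref{l:limfs}) is to show the frozen part contributes only $0\bullet(A^\Z\times X_S)\bullet0$, which is where NL-recognizability of $X_S$ is used. Your proposal has no mechanism that puts all of $A^\Z$ (or $C^\Z$) into the limit set, and without that the complexity of $\Omega_G$ cannot be absorbed.
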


By the existence of intrinsically universal CA for a simulation containing the sub-automaton relation (see for instance \cite{univind}) and the transitivity of simulations, we can directly derive the following.
\begin{cor}\label{c:univ}
  There exists an intrinsically universal CA whose limit set is
  NL-recognizable. 
\end{cor}

The idea of the construction is the following: given some CA $F$ over alphabet $A$, we add an extra (firing-squad) component which is able to generate any configuration of $A^\Z$ arbitrarily far in the future. The complexity of the limit set of $F$ is thus completely flooded into the full-shift $A^\Z$. All the technical difficulty is to control the contribution of the additional component to the final limit set. 

Let $F$ be a CA of radius $r_F$, local rule $f$ over alphabet $A$ with a
spreading state $0\in A$. We define a CA $\Delta_{F,S}$ of local rule
$\delta_{F,S}$ defined on alphabet $C=A\sqcup(A\times Q)$ with radius
$r=\max(r_F,r_S)$ by:
\[
\delta_{F,S}:c\mapsto
    \begin{cases}
      f(a_{-r_F}\ldots a_{r_F})&\text{ if }c=a_{-r}\ldots a_r\in A^{2r+1}~;\hfill \text{ (1)}\\
      a_0&\text{ if }c=(a_{-r},\gamma)\ldots(a_r,\gamma)\in(A\times\{\gamma\})^{2r+1}~;\hfill \text{ (2)}\\
      (a_0,s(b_{-r_S}\ldots b_{r_S}))&\text{ if }c=(a_{-r},b_{-r})\ldots(a_r,b_r)\in(A\times Q')^{2r+1}~;\hfill \text{ (3)}\\
      0&\text{ otherwise}.\hfill \text{ (4)}
    \end{cases}
\]
Basically, this CA freezes the first component while applying the firing squad on the second component until some firing state appears, which then frees this second component and starts the application of $F$. When the configuration is not coherent, or when $\kappa$ appears, $0$ begins to spread.
Clearly, $F$ is a sub-automaton of $\Delta_{F,S}$.

The structure of the corresponding limit set will be given by the following lemmas.
\begin{lem}\label{l:sigomeg}
$\az\subset\Omega_{\Delta_{F,S}}$.
\end{lem}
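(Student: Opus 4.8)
The plan is to prove the stronger statement that every $x\in\az$ admits a $\Delta_{F,S}$-preimage of every order, i.e.\ $x\in\Delta_{F,S}^t(C^\Z)$ for all $t\in\N$. Since $\Omega_{\Delta_{F,S}}=\bigcap_{t\in\N}\Delta_{F,S}^t(C^\Z)$, this immediately yields $\az\subset\Omega_{\Delta_{F,S}}$.

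The key ingredient is property \eqref{i:fire} of Lemma~\ref{l:fsquad}: since $\dinf\gamma\in X_S$, there is a sequence $(q^t)_{t\in\N}\in(Q^\Z)^\N$ with $q^0=\dinf\gamma$, with $q^t\in(Q')^\Z$ for every $t\ge1$, and with $S(q^t)=q^{t-1}$ for every $t\ge1$. Fix $t\ge1$ and consider the configuration $y\in(A\times Q)^\Z\subset C^\Z$ defined by $y_i=(x_i,q^{t-1}_i)$ for all $i\in\Z$. I claim that $\Delta_{F,S}^t(y)=x$.

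To see this, note that whenever the second component of a configuration lies entirely over $Q'$, clause~(3) of $\delta_{F,S}$ applies uniformly to every cell: it freezes the first component to $a_0$ and updates the second component exactly as the local rule $s$ of $S$ would (clause~(3) only consults the $r_S$-neighbourhood of the second component, so it coincides with $S$ there even though $\Delta_{F,S}$ has the larger radius $r=\max(r_F,r_S)$). Since $q^1,\ldots,q^{t-1}$ are configurations over $Q'$ and $S(q^j)=q^{j-1}$, an immediate induction gives $\Delta_{F,S}^k(y)=(x_i,q^{t-1-k}_i)_{i\in\Z}$ for $0\le k\le t-1$; in particular $\Delta_{F,S}^{t-1}(y)=(x_i,\gamma)_{i\in\Z}$ because $q^0=\dinf\gamma$. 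Applying $\delta_{F,S}$ once more to this all-$\gamma$ second component triggers clause~(2), which deletes the second component and outputs $x_i$ at cell $i$, so $\Delta_{F,S}^t(y)=x$. The case $t=0$ is trivial, which completes the argument.

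I expect the only delicate points — not genuine obstacles — to be bookkeeping: checking that during the first $t-1$ steps it is clause~(3), and never clause~(1) or~(4), that fires at each cell (this is exactly what $q^1,\ldots,q^{t-1}\in(Q')^\Z$ guarantees), and that the first component, being frozen to the centre value $a_0$ by clauses~(2) and~(3), stays equal to $x$ throughout.
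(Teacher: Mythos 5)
Your proof is correct and follows essentially the same route as the paper's: both use property~(1) of Lemma~\ref{l:fsquad} to pair $x$ with the infinite $Q'$-history of $\dinf\gamma$, push it forward via clause~(3) until the all-$\gamma$ configuration triggers clause~(2), and conclude that $x$ has preimages of every order. The only difference is cosmetic — you fix $t$ and exhibit one $t$-th preimage, whereas the paper writes down the whole backward orbit at once.
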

\begin{proof}
Let $x\in\az$. From Point~\ref{i:fire} of Lemma~\ref{l:fsquad}, there is a sequence $(y^t)_{t\in\N}$ with $y^0=\dinf\gamma$ and for any $t\in\Ns$, $y^t\notin\{\gamma,\kappa\}$ and $S(y^t)=y^{t-1}$.
Consider now the configurations $x^t=(x_i,y^{t+1}_i)_{i\in\Z}$ for $t\in\Ns$.
By a quick induction on $t\in\Ns$, we can see that for any cell $i\in\Z$, only case (3) of the local rule is used, and $x^0=\Delta_{F,S}^t(x^t)$. At time $-1$, since the second component of $x^{-1}$ is $\dinf\gamma$, case (2) of the rule is applied in every cell, which gives $x=\Delta_{F,S}(x^{-1})=\Delta_{F,S}^t(x^{-t})$ for any $t\in\Ns$.
\end{proof}

\begin{lem}\label{l:fire}
 Let $x\in\Omega_{\Delta_{F,S}}$ and $i,j\in\Z$ such that $i\ne j$ and $x_i=(a_i,\gamma),x_j=(a_j,b_j)\in A\times Q$.
Then $b_j\in\{\gamma,\kappa\}$.
\end{lem}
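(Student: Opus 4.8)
The plan is to exploit Point~\ref{i:firelim} of Lemma~\ref{l:fsquad}, which controls what can coexist with a firing state in the limit set of the firing-squad CA $S$ itself, and to transfer this control to $\Delta_{F,S}$ via a projection argument. First I would define the obvious projection $\pi:C\to Q\cup\{\star\}$ sending $(a,b)\in A\times Q$ to $b$ and every pure $A$-state to a fresh symbol $\star$; applied cellwise this gives a map $\Pi:C^\Z\to(Q\cup\{\star\})^\Z$. The key observation is a \emph{semi-conjugacy up to spreading}: as long as a cell and its neighbourhood stay inside $A\times Q'$ or hit a $\gamma$, the second component of $\Delta_{F,S}$ evolves exactly according to the local rule $s$ of $S$ (cases (2) and (3)); the only way to leave this regime is case (4), which produces the spreading state $0\in A$, i.e.\ a $\star$ under $\Pi$. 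So $\star$ behaves like a spreading state for the projected dynamics.

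Next I would take $x\in\Omega_{\Delta_{F,S}}$ with $x_i=(a_i,\gamma)$ and $x_j=(a_j,b_j)$, $i\ne j$, and argue that $b_j\in\{\gamma,\kappa\}$. Since $x\in\Omega_{\Delta_{F,S}}$, for every $t\in\N$ there is $w^t\in C^\Z$ with $\Delta_{F,S}^t(w^t)=x$. Look at the second components along such a pre-image: either some cell of the relevant spatial window of $w^t$ already carries a letter outside $A\times Q$ or a $\kappa$, in which case case (4) fires somewhere and the spreading of $0$ reaches position $i$ within $t$ steps unless $t$ is small, or all cells stay in $A\times Q'$ and the second component evolves under $S$. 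Taking $t$ large and using compactness (exactly as in Proposition~\ref{p:limsim}), I would extract from the second components of the $w^t$ an honest $S$-history arriving at the configuration $\Pi(x)$ with $\star$'s replaced suitably—more precisely, I would show $\Pi(x)$ (reading $\star$ as $\kappa$) lies in $\Omega_S$, or at least that its restriction to any finite window around $\{i,j\}$ extends to a configuration of $\Omega_S$. Then $\Omega_S\cap[\gamma]_i\subset\{\kappa,\gamma\}^\Z$ from Point~\ref{i:firelim} forces the symbol at $j$ to be $\kappa$ or $\gamma$, i.e.\ $b_j\in\{\gamma,\kappa\}$.

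The main obstacle is the bookkeeping in the transfer step: $\Delta_{F,S}$ is not literally semi-conjugate to $S$ on all of $C^\Z$, because pure-$A$ cells (the $\star$'s) can border a $\gamma$ or a $Q'$-block and then case (4) turns the neighbour into $0$ at the next step. One must check that this ``contamination'' of $\star$ into the $Q$-region is still compatible with being an $S$-history once $\star$ is identified with $\kappa$—intuitively true because $\kappa$ is spreading in $S$, so a $\kappa$ appearing next to anything is legitimate and itself spreads. I would handle this by proving a clean intermediate claim: \emph{if $\Delta_{F,S}^t(w)=x$ then, writing $y=\Pi(x)$ and $y'=\Pi(w)$ with every $\star$ read as $\kappa$, one has $S^t(y')=y$ whenever no cell of $y$ in the backward light-cone is $\star$}, and combining it with the observation that if some cell of $x$ near $i,j$ were $\star$ we would be in a different, easier case (the $\gamma$ at $i$ could not have survived $t$ steps of spreading for large $t$). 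Once that claim is in place, the compactness extraction and the appeal to Lemma~\ref{l:fsquad}(\ref{i:firelim}) are routine.
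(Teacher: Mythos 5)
Your plan contains the right engine --- deep preimages, the fact that only clause (3) of $\delta_{F,S}$ can output a state of $A\times Q$, compactness, and Point~\ref{i:firelim} of Lemma~\ref{l:fsquad} --- but the transfer step as you frame it has a real gap. You aim to show that $\Pi(x)$, with every pure-$A$ state read as $\kappa$, lies in $\Omega_S$ (or that its restriction to a window around $\cc ij$ extends to a point of $\Omega_S$). Two problems. First, $\Pi$ is not a factor map onto $(Q^\Z,S)$: with $r=\max(r_F,r_S)>r_S$, a neighbourhood in $(A\times Q)^{2r+1}$ whose only $\kappa$ sits at a distance between $r_S+1$ and $r$ from the centre triggers clause (4) (output $0$, i.e.\ $\star$), while $s$ sees only $Q'$-states and can output anything; so $\Pi\Delta_{F,S}\ne S\Pi$ and nothing like Proposition~\ref{p:limsim} is available. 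Second, and more seriously, at this point of the paper you cannot exclude that $x$ has pure-$A$ cells strictly between $i$ and $j$ (that exclusion is Lemma~\ref{l:ssgamma}, whose proof uses the present lemma). Such a cell falsifies the hypothesis of your intermediate claim, and both of your escape hatches fail for it: a $\star$ sitting at time $0$ between $i$ and $j$ has had no time to spread to $i$, and the heuristic ``$\kappa$ is spreading, so a $\kappa$ appearing next to anything is legitimate'' has the logic backwards --- a spreading state is \emph{hard} to place next to a non-spreading state inside $\Omega_S$, since any preimage putting a $\kappa$ within range of the neighbour would force that neighbour to become $\kappa$ as well. So the configuration you would need to exhibit in $\Omega_S$ is not the one your construction can produce.

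The fix is to give up on realizing $\Pi(x)$ and only chase the two coordinates you care about. A state of $A\times Q$ at time $0$ forces, by backward induction, its entire strict backward cone to lie in $A\times Q'$ (clause (3) is the only producer of $A\times Q$-states and requires a full $(A\times Q')$-neighbourhood). For $t$ large the cones of $i$ and $j$ merge, so a depth-$t$ preimage of $x$ carries on $\cc{i-rt}{j+rt}$ a word of $Q'$-states whose image under $s^t$ has $\gamma$ in position $i$ and $b_j$ in position $j$ --- with \emph{whatever} $Q$-states in between, unrelated to $x\soo ij$. Hence the cylinder $[\gamma Q^{j-i-1}b_j]_i$ meets $S^t(Q^\Z)$ for every $t$, hence meets $\Omega_S$ by compactness, and Point~\ref{i:firelim} forces $b_j\in\{\gamma,\kappa\}$. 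This is exactly the paper's argument; your dichotomy (``a bad cell in the cone would have flooded $i$ with $0$'s'') is the contrapositive of this backward induction, so once you drop the $\star\mapsto\kappa$ identification the rest of your plan goes through. (Minor point: your list of ``bad cells'' should also include a second component equal to $\gamma$ at depth $\ge1$, since it too breaks clause (3).)
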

\begin{proof}
We can assume, by symmetry, that $i<j$, and for the sake of contradiction that $b_j\notin\{\gamma,\kappa\}$.
Let $(x^t)_{t\in\Z}$ be a biorbit of $x=x^0$, \ie a bisequence of configurations such that $\forall t\in\Z,\Delta_{F,S}(x^t)=x^{t+1}$. By an easy recurrence and the fact that $x_i\in A\times Q$ can only be obtained through case (3) of the rule, we can see that for any $t\in\N$, $x^{-t}\scc{i-rt}{i+rt}$ can be written $(a^{-t}_{i-rt},b^{-t}_{i-rt})\ldots(a^{-t}_{i+rt},b^{-t}_{i+rt})\in(A\times Q)^{1+2rt}$ and $s^t(b^{-t}_{i-r_St}\ldots b^{-t}_{i+r_St})=b_i$; in the same way, $x^{-t}\scc{j-rt}{j+rt}$ can be written $(a^{-t}_{j-rt},b^{-t}_{j-rt})\ldots(a_{j+rt},b_{j+rt})\in(A\times Q)^{1+2rt}$, and $s^t(b^{-t}_{j-r_St}\ldots b^{-t}_{j+r_St})=b_j$.
Then for any $t>\frac{j-i-1}{2r}$, $x^{-t}\scc{i-2rt}{j+2rt}$ is in $(A\times Q)^{j-i-1+4rt}$ and the image $s^t(x^{-t}\scc{i-r_St}{j+r_St})$ contains $b_i$ and $b_j$. In other words, the cylinder $[b_iQ^{j-i-1}b_j]_i$ intersects $S^t(Q^\Z)$ for any $t$, and by compactness intersects $\Omega_S$, which contradicts Point~\ref{i:firelim} of Lemma~\ref{l:fsquad}.
\end{proof}

If $\Sigma\subset\az$ is a subshift and $0\in A$, then we consider the subshift $0\bullet\Sigma\bullet 0=\bigcup_{-\infty\le l\le m\le+\infty}\set x\az{\forall i\notin\oo lm,x_i=0\text{ and }\exists y\in\Sigma,x\soo lm=y\soo lm}$ of configurations or pieces of configurations of $\Sigma$ surrounded by $0$.
\begin{lem}\label{l:ssgamma}
$\Omega_{\Delta_{F,S}}\setminus\az\subset0\bullet(A\times Q)^\Z\bullet0$.
\end{lem}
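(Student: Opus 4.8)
The plan is to show that any configuration $x\in\Omega_{\Delta_{F,S}}$ that lies outside $\az$ has a very restricted shape: it consists of a (possibly empty, possibly bi-infinite) block of cells in $A\times Q$ surrounded on both sides by the spreading state $0$. The key point is that the three "productive" cases (1),(2),(3) of the local rule $\delta_{F,S}$ keep the alphabet inside either $\az$ or $(A\times Q)^\Z$ locally, while case (4) produces the spreading state $0$; so the only way a cell can carry a letter of $A\times Q$ at time $0$ in the limit set is to have an infinite preimage history in which its ancestors never triggered case (4).

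First I would fix $x\in\Omega_{\Delta_{F,S}}\setminus\az$, so that there is at least one coordinate, say $i_0$, with $x_{i_0}\in A\times Q$. Take a biorbit $(x^t)_{t\in\Z}$ of $x$ (which exists because $x$ is in the limit set). As in the proof of Lemma~\ref{l:fire}, the fact that $x_{i_0}\in A\times Q$ forces, by backward induction along the biorbit, that the whole backward light-cone of $i_0$ stays inside $A\times Q$: for every $t\in\N$, $x^{-t}\scc{i_0-rt}{i_0+rt}\in(A\times Q)^{1+2rt}$, since a cell with value in $A\times Q$ can only be produced by case (2) or case (3), both of which require all $2r+1$ cells in the neighbourhood to be in $A\times Q$. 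Letting $t\to\infty$, every cell of every $x^{-t}$ whose coordinate is "reachable" in this way is in $A\times Q$; in particular, reading this at time $0$, every coordinate $j$ such that $x_j$ is itself in $A\times Q$ must satisfy that the set of such coordinates is an interval. Indeed, if $x_l\in A$ (i.e. $x_l\notin A\times Q$) for some $l$ strictly between two coordinates carrying $A\times Q$ letters, we would get a contradiction: a cell in $A$ at time $0$ forces, again by backward induction, that its entire backward light-cone stays in $\az$ (cases (1) and (4) both output states in $A$, and a state in $A$ cannot be output by (2) unless the neighbours are in $A\times\{\gamma\}$, which after one more step would have to come from $A\times Q$ — but then case (3) would have been applied, contradiction with the cell being plain $A$; so in fact the cleanest argument is that $\az$ and $(A\times Q)^\Z$ are each "closed under taking preimages" in the relevant light-cone sense, and these two behaviours cannot coexist in overlapping light-cones). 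So the coordinates of $x$ carrying a letter of $A\times Q$ form an interval $\oo lm$ with $-\infty\le l\le m\le+\infty$.

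Next I would show that outside that interval $x$ is constantly $0$. If $m<+\infty$, consider the coordinate $m$: we have $x_m\in A$ but $x_{m-1}\in A\times Q$ (taking $m$ to be the right endpoint of the block, assuming the block is nonempty; if the block is empty there is nothing to prove since then $x\in\az$, excluded). Looking at a biorbit and pushing backward, the presence of an $A\times Q$ letter arbitrarily close on the left together with an $A$ letter at $m$ means the rule at $m$ cannot be case (1) (which needs all neighbours in $A$) nor cases (2),(3) (which need all neighbours in $A\times Q$), so it must be case (4), which outputs $0$. Hence $x_m=0$, and then by the spreading property of $0$ for $\delta_{F,S}$ (case (4): any occurrence of $0$ in the neighbourhood yields $0$), once $x_m=0$ every cell to the right of the $A\times Q$-block in every forward image is $0$; running this backward along the biorbit in the limit set forces $x_j=0$ for all $j\ge m$. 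Symmetrically $x_j=0$ for all $j\le l$. This is exactly the statement $x\in 0\bullet(A\times Q)^\Z\bullet 0$.

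The main obstacle I expect is the bookkeeping in the "intervallicity" step — ruling out that a light-cone forcing $\az$ and a light-cone forcing $A\times Q$ can overlap. The clean way to handle it is to observe two facts about the local rule: (a) if $\delta_{F,S}$ outputs a letter of $A\times Q$ then all $2r+1$ input cells are in $A\times Q$, and (b) if $\delta_{F,S}$ outputs a letter of $A$ from an input not lying entirely in $A$ then in fact $0$ is output. Fact (a) propagates the $A\times Q$ condition strictly outward in backward time; combined with (b) and the spreading of $0$, a coordinate carrying a genuine (non-$0$) letter of $A$ at time $0$ in the limit set forces $\az$ all the way out in its backward light-cone, and two such expanding light-cones of incompatible type would eventually meet, contradicting that each $x^{-t}$ is a well-defined configuration with a single value per cell. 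I would present this as a short lemma-internal claim rather than spelling out every index, since it is the same light-cone argument already used for Lemma~\ref{l:fire}.
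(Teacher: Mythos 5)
There is a genuine gap, and it sits exactly where the real difficulty of the lemma lies. Your argument leans on two claims about the local rule: that a cell carrying a letter of $A$ forces its backward light-cone to stay in $\az$ (used for intervallicity), and that ``if $\delta_{F,S}$ outputs a letter of $A$ from an input not lying entirely in $A$ then in fact $0$ is output'' (used to get $x_m=0$ and then $0$ everywhere outside the block). Both are false because of case (2): a neighbourhood lying entirely in $A\times\{\gamma\}$ produces an \emph{arbitrary} letter $a_0\in A$, not $0$. The ``contradiction'' you sketch to dismiss this is spurious --- the proof of Lemma~\ref{l:sigomeg} shows that \emph{every} configuration of $\az$ has, inside the limit set, a full backward orbit lying in $(A\times Q)^\Z$, obtained precisely by firing via case (2). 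Concretely, consider a configuration whose $Q$-component is $\gamma$ on a right half-line and in $Q'$ on the left: its image under $\Delta_{F,S}$ has the form $\cdots(A\times Q)(A\times Q)\,0^{2r}\,a_r a_{r+1}\cdots$ with the $a_i\in A$ arbitrary, hence possibly non-zero, violating the statement. Nothing in your proposal excludes such configurations from $\Omega_{\Delta_{F,S}}$. The paper excludes them by invoking Lemma~\ref{l:fire} (in a limit-set configuration a $\gamma$ cannot coexist with a $Q'$ state), and this is the missing idea: without it, ``partial firing'' can deposit non-zero $A$-states next to the $A\times Q$ block.

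The last step is also not a proof as written: the spreading of $0$ constrains $\Delta_{F,S}^t(x)$ for $t>0$, and knowing $x_m=0$ says nothing about $x_j$ for $j>m$; ``running this backward along the biorbit'' is precisely what has to be made rigorous. The paper does it by an induction on $n$ showing that the patterns $(A\times Q)(A^{2rn}\setminus\{0^{2rn}\})$ are forbidden in $\Omega_{\Delta_{F,S}}$: in a preimage, a non-zero $A$-cell at distance $2rn+k$ from an $A\times Q$ cell must come either from case (1) --- which, by the spreading property of $f$, forces a non-zero $A$-cell about $r$ positions closer to the $A\times Q$ region one step earlier, contradicting the induction hypothesis --- or from case (2), excluded by Lemma~\ref{l:fire} since the preimage of the $A\times Q$ cell carries a $Q'$ state. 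Your first step (the backward $(A\times Q')$ light-cone of an $A\times Q$ cell) matches the paper, but the proposal contains neither this induction nor any substitute for it, so the core of the lemma remains unproved.
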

\begin{proof}
By shift-invariance, it is sufficient to prove that $\Omega_{\Delta_{F,S}}\cap[A\times Q]_0\subset0\bullet(A\times Q)^\Z\bullet0$.  
Let us prove by induction on $n\in\N$ that the patterns of $(A\times Q)(A^{2rn}\setminus\{0^{2rn}\})$ are forbidden in $\Omega_{\Delta_{F,S}}$. The base case is trivial (there are no such patterns). Now suppose it is true for $n\in\N$, and suppose there exists a configuration $x\in[(A\times Q)0^{2rn+k}(A\setminus\{0\})]_0\cap\Omega_{\Delta_{F,S}}$ with $1\le k\le2r$. Consider a preimage $y\in\Omega_{\Delta_{F,S}}$ of $x$. On the one hand, in cell $0$ of $y$, we must have applied case (3), so $y\scc{-r}{+r}\in(A\times Q)^{2r+1}$, and this word does not involve $\gamma$.
On the other hand, if we have applied case (1) in cell $2nr+k+1$ of $y$, then $y\scc{(2n-1)r+k+1}{(2n+1)r+k+1}\in(A\setminus\{0\})^{2r+1}$, but the space between these two neighborhoods is $(2n-1)r+k+1-r-1\le2nr-1$, which contradicts the induction hypothesis.
The other possibility was that we have applied case (2) in cell $2nr+k+1$, which involves a state $\gamma$ among cells of $y\scc{(2n-1)r+k+1}{(2n+1)r+k+1}$, which contradicts Lemma~\ref{l:fire}.
In the limit, and with a symmetric argument on the left, we obtain that all the configurations of $\Omega_{\Delta_{F,S}}\setminus\az$ are in $0\bullet\Sigma\bullet 0$.
\end{proof}

We shall abusively denote $\az\times X_S=\set{(a_i,s_i)_{i\in\Z}}{(A\times Q)^\Z}{(s_i)_{i\in\Z}\in X_S}$.
\begin{lem}\label{l:limfs}
 $\Omega_{\Delta_{F,S}}=\az\cup0\bullet(\az\times X_S)\bullet0$.
\end{lem}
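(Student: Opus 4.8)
The plan is to prove the two inclusions of $\Omega_{\Delta_{F,S}} = \az \cup 0\bullet(\az\times X_S)\bullet 0$ separately, using the structural lemmas already established.

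For the inclusion $\supseteq$, Lemma~\ref{l:sigomeg} already gives $\az\subset\Omega_{\Delta_{F,S}}$, so I only need $0\bullet(\az\times X_S)\bullet0\subset\Omega_{\Delta_{F,S}}$. Since the limit set is closed and shift-invariant, and since $0$ is spreading (so that $0\bullet(\cdot)\bullet0$ of a set whose "infinite window" case already lies in the limit set will follow by taking limits of finite windows surrounded by $0$-blocks), it suffices to show that a configuration $x$ which is $0$ outside a finite interval $\oo lm$ and agrees on $\oo lm$ with some $(a_i,s_i)_{i\in\Z}$ with $(s_i)\in X_S$ lies in $\Omega_{\Delta_{F,S}}$. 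Here I would use the defining sequence $(y^t)_{t\in\N}$ of the $X_S$-point, i.e.\ $y^0=(s_i)$, $y^t\in Q'$, $S(y^t)=y^{t-1}$, together with Point~\ref{i:fire} of Lemma~\ref{l:fsquad} to prepend an all-$\gamma$ configuration as in the proof of Lemma~\ref{l:sigomeg}. The point is that case~(3) of $\delta_{F,S}$ simulates $S$ on the second component while carrying the first component; wrapping the finite window in $0$'s and checking that the spreading $0$ does not eat into $\oo lm$ too fast (choosing the preimages at time $-t$ to have a wide enough buffer of correct symbols, exactly as in Lemma~\ref{l:fire}'s cone argument but run forwards) produces arbitrarily deep preimages of $x$, hence $x\in\Omega_{\Delta_{F,S}}$. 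Finally pass to the limit $l\to-\infty$, $m\to+\infty$ to recover the full-window case.

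For the inclusion $\subseteq$, take $x\in\Omega_{\Delta_{F,S}}$. If $x\in\az$ we are done, so assume $x\notin\az$. By Lemma~\ref{l:ssgamma}, $x\in 0\bullet(A\times Q)^\Z\bullet0$, so there is an interval $\oo lm$ (possibly infinite on either side) outside which $x$ is $0$ and on which $x$ takes values in $A\times Q$; write $x\soo lm=(a_i,b_i)_{i\in\oo lm}$. It remains to show $(b_i)_{i\in\oo lm}$ extends (padding arbitrarily) to a point of $X_S$, equivalently that it has an infinite backward $S$-history avoiding $\kappa$ and $\gamma$. Running a biorbit $(x^t)$ of $x$ backwards and using that states in $A\times Q$ can only be produced by case~(3), the cone argument of Lemma~\ref{l:fire} shows that on the region $\oo lm$ the second components evolve under $S$, so at each negative time $-t$ we get a long word $b^{-t}\scc{\cdot}{\cdot}\in Q^{*}$ with $s^t$ of (a central part of) it equal to $b\soo lm$; moreover Lemma~\ref{l:fire} forbids $\gamma$ from appearing next to any other $A\times Q$ cell, and the absence of $\kappa$ comes from $\kappa$ being spreading for $S$ (so a $\kappa$ in the history would spread and eventually hit the window, contradicting that the window stays in $A\times Q\setminus$ forbidden states — more precisely, via Point~\ref{i:firelim}, $\kappa$ and $\gamma$ on the window would force $x$ into a fixed-point configuration incompatible with $x\notin\az$). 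Extracting a convergent subsequence of these backward histories by compactness yields the required infinite $Q'$-history, so $(b_i)\in X_S$ up to padding, i.e.\ $x\in 0\bullet(\az\times X_S)\bullet0$.

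The main obstacle I anticipate is the careful bookkeeping in the $\subseteq$ direction: one must simultaneously (i) control the $0$-spreading front so that it never intrudes on $\oo lm$ in any backward iterate (otherwise the window could not have survived in $A\times Q$), (ii) rule out $\gamma$ and $\kappa$ from the backward history on the window — the $\gamma$ exclusion is Lemma~\ref{l:fire}, but the $\kappa$ exclusion needs that $\kappa$ is a spreading state of $S$ combined with a compactness/limit-set argument analogous to the proof of Lemma~\ref{l:fire}, and (iii) handle the infinite-window case ($l=-\infty$ or $m=+\infty$) where there is no $0$-padding and the argument reduces to showing the whole configuration lies in $\az\times X_S$. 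Assembling these and then taking the compactness extraction of histories is where the real work lies; the $\supseteq$ direction is essentially a forward re-run of Lemma~\ref{l:sigomeg} plus a limit.
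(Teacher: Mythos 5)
Your overall architecture is the same as the paper's: reduce via Lemma~\ref{l:sigomeg} and Lemma~\ref{l:ssgamma} to configurations that are $0$ outside some $\oo lm$ and in $A\times Q$ inside, prove $\supseteq$ by building preimages that run the $X_S$-history on the second component inside a window whose $0$-padding recedes at speed $r$, and prove $\subseteq$ by a backward cone recurrence. The $\supseteq$ half is essentially right (though the appeal to Point~\ref{i:fire} and an all-$\gamma$ prefix is a red herring here: you must \emph{not} let the firing state appear, since that would release the first component via case~(2); you only need the given $Q'$-history of the $X_S$-point, exactly as in the paper).

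The soft spot is the $\subseteq$ direction, specifically the exclusion of $\gamma$ and $\kappa$ from the backward history. You route this through Lemma~\ref{l:fire} plus a ``$\kappa$ is spreading'' compactness argument, and as stated this does not close: Lemma~\ref{l:fire} says nothing about $\kappa$, says nothing when the $A\times Q$ window has width one, and does not exclude a backward iterate whose second components are \emph{all} $\gamma$; your $\kappa$-argument is only gestured at. The paper's proof needs none of this, because of one local observation you are missing: among the four cases of $\delta_{F,S}$, \emph{only case~(3) produces a state in $A\times Q$} (case~(2) outputs a state of $A$, case~(4) outputs $0$), and case~(3) applies only when the whole preimage neighbourhood lies in $(A\times Q')^{2r+1}$. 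Hence if $x^{-t}_i\in A\times Q$ for $i\in\oo{l-rt}{m+rt}$, then automatically $x^{-t-1}_i\in A\times Q'$ for $i\in\oo{l-r(t+1)}{m+r(t+1)}$: the backward cone is forced into $A\times Q'$, i.e.\ it avoids $\gamma$ and $\kappa$ by construction of the rule, with no appeal to Lemma~\ref{l:fire}, to spreading of $\kappa$, or to compactness. With that recurrence the projection of the backward cone onto the second component is directly an infinite $S$-history in $Q'$, so the central word lies in the language of $X_S$ and $x\in0\bullet(\az\times X_S)\bullet0$. I recommend replacing your $\gamma/\kappa$-exclusion paragraph by this single observation; the rest of your bookkeeping (growing cone, handling $l=-\infty$ or $m=+\infty$) then goes through as you describe.
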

\begin{proof}
Thanks to Lemmas~\ref{l:ssgamma} and \ref{l:sigomeg}, it is enough to prove two inclusions for the configurations $x\in C^\Z$ with $l,m\in\cc{-\infty}{+\infty}$ such that $x\soo lm\in(A\times Q)^{m-l-1}$ and for any $i\notin\oo lm$, $x_i=0$.

First, suppose that $x\in\Omega_{\Delta_{F,S}}$, \ie for any $t\in\Z$, there exists $x^t\in\Delta_{F,S}^t(\{x\})$. By recurrence, we can see that $x^{-t}_i\in A\times Q'$ for all $i\in\oo{l-rt}{m+rt}$ and $t\geq 1$ since states from $A\times Q$ are only produced by case (3) of the rule. Let $w^{-t}\in Q'^{m-l+2rt+1}$ be the projection of $(x^{-t})\soo{l-rt}{m+rt}$ on its second component. Clearly, $w^0$ is in the language of $X_S$. We deduce that $x=x^0\in0\bullet(\az\times X_S)\bullet0$.

Conversely, suppose that $x\in0\bullet(\az\times X_S)\bullet0$, \ie there is a sequence $(y^t)$ with, for $t\geq 1$, $y^t\in Q'^\Z$ and $y^t=S(y^{t+1})$ and, for any $t\in\N$ and any $i\in\oo lm$, $x=(a_i,S^t(y^t)_i)$ for some $a_i\in A$. Now take the configuration $\tilde y^t\in C^\Z$ such that for any $i\notin\oo{l-rt}{m+rt}$, $\tilde y^t_i=0$, and for any $i\in\oo{l-rt}{m+rt}$, $\tilde y^t_i=(b_i,y^t_i)$ with $b_i\in A$, and $b_i=a_i$ if $i\in\oo lm$. By a direct recurrence, for any $j<t$ and any $i\notin\oo{l-rt+rj}{m+rt-rj}$, we have $\Delta_{F,S}^j(\tilde y^j)_i=0$ and for any $i\in\oo{l-rt+rj}{m+rt-rj}$, we have $\Delta_{F,S}^j(\tilde y^j)_i=(b_i,S^j(y^j)_i)$ (since $y_j\in Q'\in\Z$ case 3 of the definition of $\Delta_{F,S}$ applies at position $i$ of $\tilde y^j$). This gives that $\Delta_{F,S}^t(y)=x$.
We have proven that $\Omega_{\Delta_{F,S}}\cap0\bullet(A\times Q)^\Z\bullet0=0\bullet(\az\times X_S)\bullet0$.
\end{proof}
\begin{cor}\label{c:csens}
 $\Omega_{\Delta_{F,S}}$ has an NL-recognizable language.
\end{cor}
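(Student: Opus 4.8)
The plan is to read the shape of $\lang(\Omega_{\Delta_{F,S}})$ directly off Lemma~\ref{l:limfs}, and then observe that this shape is recognizable in nondeterministic logarithmic space once we know that $\lang(X_S)$ already is (Lemma~\ref{l:fsquad}, Point~\ref{i:csens}).

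First I would pin down the combinatorial description. Write $C=A\sqcup(A\times Q)$ and let $\pi_Q$ denote the letter-to-letter projection $(a,q)\mapsto q$ from $A\times Q$ onto $Q$. By Lemma~\ref{l:limfs}, $\Omega_{\Delta_{F,S}}=\az\cup0\bullet(\az\times X_S)\bullet0$, hence $\lang(\Omega_{\Delta_{F,S}})=A^*\cup\lang(0\bullet(\az\times X_S)\bullet0)$. In any configuration of $0\bullet(\az\times X_S)\bullet0$ the letters outside the (single) interval $\oo lm$ equal $0$ and those inside all lie in $A\times Q$; therefore a finite word occurs in such a configuration exactly when it has the form $0^a\,u\,0^b$ with $a,b\geq0$ and $u$ a word over $A\times Q$ whose $\pi_Q$-image lies in $\lang(X_S)$ — the $A$-component of $u$ being unconstrained, and the ``if'' direction using that $X_S$ is a subshift, so that any word of $\lang(X_S)$ extends to a configuration of $X_S$. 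Consequently
\[
\lang(\Omega_{\Delta_{F,S}})=A^*\ \cup\ 0^*\,\pi_Q^{-1}\bigl(\lang(X_S)\bigr)\,0^*.
\]

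Then I would finish by closure properties: $A^*$ and $0^*$ are regular, $\pi_Q$ is $1$-uniform, and the class of languages recognizable in nondeterministic logarithmic space is closed under union, under concatenation and intersection with regular languages, and under preimage by a letter-to-letter morphism; since $\lang(X_S)$ is in this class, so is the right-hand side, which equals $\lang(\Omega_{\Delta_{F,S}})$. Spelled out, on input $w$ of length $n$ the machine nondeterministically chooses which of the two terms to certify: for $A^*$ it scans $w$ and checks that each letter is in $A$; for the other term it locates, using two counters of size $O(\log n)$, the first and last positions of $w$ carrying a letter of $A\times Q$, rejects unless everything strictly outside that range is $0$ and everything inside is in $A\times Q$, and otherwise runs the nondeterministic logspace recognizer of $\lang(X_S)$ on the in-range part of $w$, applying $\pi_Q$ to each letter as it is read — a logspace transduction composed with a nondeterministic logspace computation, hence still in the class. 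It accepts iff the chosen branch succeeds.

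The only point that needs genuine care is the combinatorial claim, in particular that an occurring word is a \emph{single} $(A\times Q)$-block (with no interior $0$) flanked by runs of $0$, so that no word exhibiting two separate such blocks can occur; this is precisely what Lemmas~\ref{l:ssgamma} and~\ref{l:limfs} supply. Past that, everything is routine bookkeeping and standard complexity closure, so I do not expect a real obstacle.
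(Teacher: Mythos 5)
Your proof is correct and follows essentially the same route as the paper: the paper's one-line argument likewise reads the decomposition $\Omega_{\Delta_{F,S}}=\az\cup0\bullet(\az\times X_S)\bullet0$ off Lemma~\ref{l:limfs} and invokes closure of NL under finite unions and concatenations of NL-recognizable pieces (together with Point~\eqref{i:csens} of Lemma~\ref{l:fsquad}). You merely make explicit the combinatorial identity $\lang(\Omega_{\Delta_{F,S}})=A^*\cup0^*\,\pi_Q^{-1}(\lang(X_S))\,0^*$ and the resulting machine, which is a welcome elaboration but not a different argument.
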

\begin{proof}
 From Lemma~\ref{l:limfs} and Point~\ref{i:csens} of Lemma~\ref{l:fsquad}, the language of the limit set is the finite boolean combination of finite concatenation of NL-recognizable languages.
\end{proof}

\begin{proof}[Proof of Theorem~\ref{t:simple}]
Let $F$ be a CA on some alphabet $A$. We can artificially add some spreading
state $0\notin A$ to build a CA $\tilde F$ on alphabet $A\sqcup\{0\}$ which
admits $F$ as a sub-automaton. Now we have seen that $\tilde F$ is a sub-automaton
of $\Delta_{\tilde F,S}$. From Corollary
\ref{c:csens}, the corresponding limit set has an NL-recognizable language.
\end{proof}






\section{Analysis of a firing-squad CA}
\label{sec:fsproof}

More precise proofs can be found in the appendix.

Let $S$ be the firing-squad CA defined in \cite{rice}. It has a state set $Q$ of
size 16, including a killer state $\kappa$, radius 1 and is defined by the
transitions appearing in Figure~\ref{fig:deffire}: precisely, any transition
which is not in the space-time diagram of the figure produces the killer state
$\kappa$. The complete list of transitions is given in \cite{rice}.

\newcommand\st[4]{{\draw[fill=#4] (#2,#3)
    -- ++(1,0) -- ++(0,1)--++(-1,0)--cycle;\draw (#2,#3)+(.5,.5) node {\Tiny #1};}}
\newcommand\pointB[2]{}
\newcommand\pointF[2]{\st{\#}{#1}{#2}{gray}}
\newcommand\pointS[2]{\st{\#'}{#1}{#2}{gray!50!white}}
\newcommand\pointSp[2]{\st{$\gamma$}{#1}{#2}{yellow!50!white}}
\newcommand\pointXXXXX[2]{\st{$L_1$}{#1}{#2}{blue!50!white}}
\newcommand\pointXXXXXX[2]{\st{$l_1$}{#1}{#2}{blue!50!white}}
\newcommand\pointXXXXXXX[2]{\st{$R_1$}{#1}{#2}{red!50!white}}
\newcommand\pointXXXXXXXX[2]{\st{$r_1$}{#1}{#2}{red!50!white}}
\newcommand\pointXXXXXXXXX[2]{\st{$l_2$}{#1}{#2}{blue!20!white}}
\newcommand\pointXXXXXXXXXX[2]{\st{$l_2$}{#1}{#2}{blue!70!white}}
\newcommand\pointXXXXXXXXXXX[2]{\st{$r_2$}{#1}{#2}{red!20!white}}
\newcommand\pointXXXXXXXXXXXX[2]{\st{$r_2$}{#1}{#2}{red!70!white}}
\newcommand\pointXXXXXXXXXXXXX[2]{\st{X}{#1}{#2}{green!50!white}}
\newcommand\pointXXXXXXXXXXXXXX[2]{\st{Y}{#1}{#2}{green!50!white}}
\newcommand\pointXXXXXXXXXXXXXXX[2]{\st{Z}{#1}{#2}{green!50!white}}

\begin{figure}
  \centering
  \begin{tikzpicture}[scale=.3]
    \pointB{0}{0}\pointB{0}{1}\pointB{0}{2}\pointB{0}{3}\pointB{0}{4}\pointB{0}{5}\pointB{0}{6}\pointXXXXX{0}{7}\pointB{0}{8}\pointXXXXXXXX{0}{9}\pointB{0}{10}\pointB{0}{11}\pointB{0}{12}\pointB{0}{13}\pointXXXXXXXXX{0}{14}\pointXXXXXXXXXX{0}{15}\pointB{0}{16}\pointXXXXXXX{0}{17}\pointXXXXXXXXXXX{0}{18}\pointXXXXXXXXXXXX{0}{19}\pointB{0}{20}\pointB{0}{21}\pointB{0}{22}\pointXXXXXX{0}{23}\pointB{0}{24}\pointXXXXXXX{0}{25}\pointXXXXXXXXXXX{0}{26}\pointXXXXXXXXXXXXXXX{0}{27}\pointB{0}{28}\pointS{0}{29}\pointSp{0}{30}
\pointB{1}{0}\pointB{1}{1}\pointB{1}{2}\pointB{1}{3}\pointB{1}{4}\pointB{1}{5}\pointXXXXX{1}{6}\pointB{1}{7}\pointB{1}{8}\pointB{1}{9}\pointXXXXXXXX{1}{10}\pointB{1}{11}\pointXXXXXXXXX{1}{12}\pointXXXXXXXXXX{1}{13}\pointB{1}{14}\pointB{1}{15}\pointB{1}{16}\pointB{1}{17}\pointXXXXXXX{1}{18}\pointB{1}{19}\pointXXXXXXXXXXX{1}{20}\pointXXXXXXXXXXXX{1}{21}\pointXXXXXX{1}{22}\pointB{1}{23}\pointB{1}{24}\pointB{1}{25}\pointXXXXXXXXXXXXX{1}{26}\pointB{1}{27}\pointF{1}{28}\pointS{1}{29}\pointSp{1}{30}
\pointB{2}{0}\pointB{2}{1}\pointB{2}{2}\pointB{2}{3}\pointB{2}{4}\pointXXXXX{2}{5}\pointB{2}{6}\pointB{2}{7}\pointB{2}{8}\pointB{2}{9}\pointXXXXXXXXX{2}{10}\pointXXXXXXXXXXXXXX{2}{11}\pointB{2}{12}\pointB{2}{13}\pointB{2}{14}\pointB{2}{15}\pointB{2}{16}\pointB{2}{17}\pointB{2}{18}\pointXXXXXXX{2}{19}\pointB{2}{20}\pointXXXXXX{2}{21}\pointXXXXXXXXXXX{2}{22}\pointXXXXXXXXXXXX{2}{23}\pointB{2}{24}\pointXXXXX{2}{25}\pointXXXXXXXXX{2}{26}\pointXXXXXXXXXXXXXX{2}{27}\pointB{2}{28}\pointS{2}{29}\pointSp{2}{30}
\pointB{3}{0}\pointB{3}{1}\pointB{3}{2}\pointB{3}{3}\pointXXXXX{3}{4}\pointB{3}{5}\pointB{3}{6}\pointB{3}{7}\pointXXXXXXXXX{3}{8}\pointXXXXXXXXXX{3}{9}\pointB{3}{10}\pointB{3}{11}\pointXXXXXXXX{3}{12}\pointB{3}{13}\pointB{3}{14}\pointB{3}{15}\pointB{3}{16}\pointB{3}{17}\pointB{3}{18}\pointB{3}{19}\pointXXXXXXXXXXXXX{3}{20}\pointB{3}{21}\pointB{3}{22}\pointB{3}{23}\pointF{3}{24}\pointS{3}{25}\pointS{3}{26}\pointS{3}{27}\pointF{3}{28}\pointS{3}{29}\pointSp{3}{30}
\pointB{4}{0}\pointB{4}{1}\pointB{4}{2}\pointXXXXX{4}{3}\pointB{4}{4}\pointB{4}{5}\pointXXXXXXXXX{4}{6}\pointXXXXXXXXXX{4}{7}\pointB{4}{8}\pointB{4}{9}\pointB{4}{10}\pointB{4}{11}\pointB{4}{12}\pointXXXXXXXX{4}{13}\pointB{4}{14}\pointB{4}{15}\pointB{4}{16}\pointB{4}{17}\pointB{4}{18}\pointXXXXX{4}{19}\pointB{4}{20}\pointXXXXXXXX{4}{21}\pointXXXXXXXXX{4}{22}\pointXXXXXXXXXX{4}{23}\pointB{4}{24}\pointXXXXXXX{4}{25}\pointXXXXXXXXXXX{4}{26}\pointXXXXXXXXXXXXXXX{4}{27}\pointB{4}{28}\pointS{4}{29}\pointSp{4}{30}
\pointB{5}{0}\pointB{5}{1}\pointXXXXX{5}{2}\pointB{5}{3}\pointXXXXXXXXX{5}{4}\pointXXXXXXXXXX{5}{5}\pointB{5}{6}\pointB{5}{7}\pointB{5}{8}\pointB{5}{9}\pointB{5}{10}\pointB{5}{11}\pointB{5}{12}\pointB{5}{13}\pointXXXXXXXX{5}{14}\pointB{5}{15}\pointB{5}{16}\pointB{5}{17}\pointXXXXX{5}{18}\pointB{5}{19}\pointXXXXXXXXX{5}{20}\pointXXXXXXXXXX{5}{21}\pointXXXXXXXX{5}{22}\pointB{5}{23}\pointB{5}{24}\pointB{5}{25}\pointXXXXXXXXXXXXX{5}{26}\pointB{5}{27}\pointF{5}{28}\pointS{5}{29}\pointSp{5}{30}
\pointB{6}{0}\pointXXXXX{6}{1}\pointXXXXXXXXX{6}{2}\pointXXXXXXXXXX{6}{3}\pointB{6}{4}\pointB{6}{5}\pointB{6}{6}\pointB{6}{7}\pointB{6}{8}\pointB{6}{9}\pointB{6}{10}\pointB{6}{11}\pointB{6}{12}\pointB{6}{13}\pointB{6}{14}\pointXXXXXXXX{6}{15}\pointB{6}{16}\pointXXXXX{6}{17}\pointXXXXXXXXX{6}{18}\pointXXXXXXXXXX{6}{19}\pointB{6}{20}\pointB{6}{21}\pointB{6}{22}\pointXXXXXXXX{6}{23}\pointB{6}{24}\pointXXXXX{6}{25}\pointXXXXXXXXX{6}{26}\pointXXXXXXXXXXXXXX{6}{27}\pointB{6}{28}\pointS{6}{29}\pointSp{6}{30}
\pointF{7}{0}\pointS{7}{1}\pointS{7}{2}\pointS{7}{3}\pointS{7}{4}\pointS{7}{5}\pointS{7}{6}\pointS{7}{7}\pointS{7}{8}\pointS{7}{9}\pointS{7}{10}\pointS{7}{11}\pointS{7}{12}\pointS{7}{13}\pointS{7}{14}\pointS{7}{15}\pointF{7}{16}\pointS{7}{17}\pointS{7}{18}\pointS{7}{19}\pointS{7}{20}\pointS{7}{21}\pointS{7}{22}\pointS{7}{23}\pointF{7}{24}\pointS{7}{25}\pointS{7}{26}\pointS{7}{27}\pointF{7}{28}\pointS{7}{29}\pointSp{7}{30}
\pointB{8}{0}\pointXXXXXXX{8}{1}\pointXXXXXXXXXXX{8}{2}\pointXXXXXXXXXXXX{8}{3}\pointB{8}{4}\pointB{8}{5}\pointB{8}{6}\pointB{8}{7}\pointB{8}{8}\pointB{8}{9}\pointB{8}{10}\pointB{8}{11}\pointB{8}{12}\pointB{8}{13}\pointB{8}{14}\pointXXXXXX{8}{15}\pointB{8}{16}\pointXXXXXXX{8}{17}\pointXXXXXXXXXXX{8}{18}\pointXXXXXXXXXXXX{8}{19}\pointB{8}{20}\pointB{8}{21}\pointB{8}{22}\pointXXXXXX{8}{23}\pointB{8}{24}\pointXXXXXXX{8}{25}\pointXXXXXXXXXXX{8}{26}\pointXXXXXXXXXXXXXXX{8}{27}\pointB{8}{28}\pointS{8}{29}\pointSp{8}{30}
\pointB{9}{0}\pointB{9}{1}\pointXXXXXXX{9}{2}\pointB{9}{3}\pointXXXXXXXXXXX{9}{4}\pointXXXXXXXXXXXX{9}{5}\pointB{9}{6}\pointB{9}{7}\pointB{9}{8}\pointB{9}{9}\pointB{9}{10}\pointB{9}{11}\pointB{9}{12}\pointB{9}{13}\pointXXXXXX{9}{14}\pointB{9}{15}\pointB{9}{16}\pointB{9}{17}\pointXXXXXXX{9}{18}\pointB{9}{19}\pointXXXXXXXXXXX{9}{20}\pointXXXXXXXXXXXX{9}{21}\pointXXXXXX{9}{22}\pointB{9}{23}\pointB{9}{24}\pointB{9}{25}\pointXXXXXXXXXXXXX{9}{26}\pointB{9}{27}\pointF{9}{28}\pointS{9}{29}\pointSp{9}{30}
\pointB{10}{0}\pointB{10}{1}\pointB{10}{2}\pointXXXXXXX{10}{3}\pointB{10}{4}\pointB{10}{5}\pointXXXXXXXXXXX{10}{6}\pointXXXXXXXXXXXX{10}{7}\pointB{10}{8}\pointB{10}{9}\pointB{10}{10}\pointB{10}{11}\pointB{10}{12}\pointXXXXXX{10}{13}\pointB{10}{14}\pointB{10}{15}\pointB{10}{16}\pointB{10}{17}\pointB{10}{18}\pointXXXXXXX{10}{19}\pointB{10}{20}\pointXXXXXX{10}{21}\pointXXXXXXXXXXX{10}{22}\pointXXXXXXXXXXXX{10}{23}\pointB{10}{24}\pointXXXXX{10}{25}\pointXXXXXXXXX{10}{26}\pointXXXXXXXXXXXXXX{10}{27}\pointB{10}{28}\pointS{10}{29}\pointSp{10}{30}
\pointB{11}{0}\pointB{11}{1}\pointB{11}{2}\pointB{11}{3}\pointXXXXXXX{11}{4}\pointB{11}{5}\pointB{11}{6}\pointB{11}{7}\pointXXXXXXXXXXX{11}{8}\pointXXXXXXXXXXXX{11}{9}\pointB{11}{10}\pointB{11}{11}\pointXXXXXX{11}{12}\pointB{11}{13}\pointB{11}{14}\pointB{11}{15}\pointB{11}{16}\pointB{11}{17}\pointB{11}{18}\pointB{11}{19}\pointXXXXXXXXXXXXX{11}{20}\pointB{11}{21}\pointB{11}{22}\pointB{11}{23}\pointF{11}{24}\pointS{11}{25}\pointS{11}{26}\pointS{11}{27}\pointF{11}{28}\pointS{11}{29}\pointSp{11}{30}
\pointB{12}{0}\pointB{12}{1}\pointB{12}{2}\pointB{12}{3}\pointB{12}{4}\pointXXXXXXX{12}{5}\pointB{12}{6}\pointB{12}{7}\pointB{12}{8}\pointB{12}{9}\pointXXXXXXXXXXX{12}{10}\pointXXXXXXXXXXXXXXX{12}{11}\pointB{12}{12}\pointB{12}{13}\pointB{12}{14}\pointB{12}{15}\pointB{12}{16}\pointB{12}{17}\pointB{12}{18}\pointXXXXX{12}{19}\pointB{12}{20}\pointXXXXXXXX{12}{21}\pointXXXXXXXXX{12}{22}\pointXXXXXXXXXX{12}{23}\pointB{12}{24}\pointXXXXXXX{12}{25}\pointXXXXXXXXXXX{12}{26}\pointXXXXXXXXXXXXXXX{12}{27}\pointB{12}{28}\pointS{12}{29}\pointSp{12}{30}
\pointB{13}{0}\pointB{13}{1}\pointB{13}{2}\pointB{13}{3}\pointB{13}{4}\pointB{13}{5}\pointXXXXXXX{13}{6}\pointB{13}{7}\pointB{13}{8}\pointB{13}{9}\pointXXXXXX{13}{10}\pointB{13}{11}\pointXXXXXXXXXXX{13}{12}\pointXXXXXXXXXXXX{13}{13}\pointB{13}{14}\pointB{13}{15}\pointB{13}{16}\pointB{13}{17}\pointXXXXX{13}{18}\pointB{13}{19}\pointXXXXXXXXX{13}{20}\pointXXXXXXXXXX{13}{21}\pointXXXXXXXX{13}{22}\pointB{13}{23}\pointB{13}{24}\pointB{13}{25}\pointXXXXXXXXXXXXX{13}{26}\pointB{13}{27}\pointF{13}{28}\pointS{13}{29}\pointSp{13}{30}
\pointB{14}{0}\pointB{14}{1}\pointB{14}{2}\pointB{14}{3}\pointB{14}{4}\pointB{14}{5}\pointB{14}{6}\pointXXXXXXX{14}{7}\pointB{14}{8}\pointXXXXXX{14}{9}\pointB{14}{10}\pointB{14}{11}\pointB{14}{12}\pointB{14}{13}\pointXXXXXXXXXXX{14}{14}\pointXXXXXXXXXXXX{14}{15}\pointB{14}{16}\pointXXXXX{14}{17}\pointXXXXXXXXX{14}{18}\pointXXXXXXXXXX{14}{19}\pointB{14}{20}\pointB{14}{21}\pointB{14}{22}\pointXXXXXXXX{14}{23}\pointB{14}{24}\pointXXXXX{14}{25}\pointXXXXXXXXX{14}{26}\pointXXXXXXXXXXXXXX{14}{27}\pointB{14}{28}\pointS{14}{29}\pointSp{14}{30}
\pointB{15}{0}\pointB{15}{1}\pointB{15}{2}\pointB{15}{3}\pointB{15}{4}\pointB{15}{5}\pointB{15}{6}\pointB{15}{7}\pointXXXXXXXXXXXXX{15}{8}\pointB{15}{9}\pointB{15}{10}\pointB{15}{11}\pointB{15}{12}\pointB{15}{13}\pointB{15}{14}\pointB{15}{15}\pointF{15}{16}\pointS{15}{17}\pointS{15}{18}\pointS{15}{19}\pointS{15}{20}\pointS{15}{21}\pointS{15}{22}\pointS{15}{23}\pointF{15}{24}\pointS{15}{25}\pointS{15}{26}\pointS{15}{27}\pointF{15}{28}\pointS{15}{29}\pointSp{15}{30}
\pointB{16}{0}\pointB{16}{1}\pointB{16}{2}\pointB{16}{3}\pointB{16}{4}\pointB{16}{5}\pointB{16}{6}\pointXXXXX{16}{7}\pointB{16}{8}\pointXXXXXXXX{16}{9}\pointB{16}{10}\pointB{16}{11}\pointB{16}{12}\pointB{16}{13}\pointXXXXXXXXX{16}{14}\pointXXXXXXXXXX{16}{15}\pointB{16}{16}\pointXXXXXXX{16}{17}\pointXXXXXXXXXXX{16}{18}\pointXXXXXXXXXXXX{16}{19}\pointB{16}{20}\pointB{16}{21}\pointB{16}{22}\pointXXXXXX{16}{23}\pointB{16}{24}\pointXXXXXXX{16}{25}\pointXXXXXXXXXXX{16}{26}\pointXXXXXXXXXXXXXXX{16}{27}\pointB{16}{28}\pointS{16}{29}\pointSp{16}{30}
\pointB{17}{0}\pointB{17}{1}\pointB{17}{2}\pointB{17}{3}\pointB{17}{4}\pointB{17}{5}\pointXXXXX{17}{6}\pointB{17}{7}\pointB{17}{8}\pointB{17}{9}\pointXXXXXXXX{17}{10}\pointB{17}{11}\pointXXXXXXXXX{17}{12}\pointXXXXXXXXXX{17}{13}\pointB{17}{14}\pointB{17}{15}\pointB{17}{16}\pointB{17}{17}\pointXXXXXXX{17}{18}\pointB{17}{19}\pointXXXXXXXXXXX{17}{20}\pointXXXXXXXXXXXX{17}{21}\pointXXXXXX{17}{22}\pointB{17}{23}\pointB{17}{24}\pointB{17}{25}\pointXXXXXXXXXXXXX{17}{26}\pointB{17}{27}\pointF{17}{28}\pointS{17}{29}\pointSp{17}{30}
\pointB{18}{0}\pointB{18}{1}\pointB{18}{2}\pointB{18}{3}\pointB{18}{4}\pointXXXXX{18}{5}\pointB{18}{6}\pointB{18}{7}\pointB{18}{8}\pointB{18}{9}\pointXXXXXXXXX{18}{10}\pointXXXXXXXXXXXXXX{18}{11}\pointB{18}{12}\pointB{18}{13}\pointB{18}{14}\pointB{18}{15}\pointB{18}{16}\pointB{18}{17}\pointB{18}{18}\pointXXXXXXX{18}{19}\pointB{18}{20}\pointXXXXXX{18}{21}\pointXXXXXXXXXXX{18}{22}\pointXXXXXXXXXXXX{18}{23}\pointB{18}{24}\pointXXXXX{18}{25}\pointXXXXXXXXX{18}{26}\pointXXXXXXXXXXXXXX{18}{27}\pointB{18}{28}\pointS{18}{29}\pointSp{18}{30}
\pointB{19}{0}\pointB{19}{1}\pointB{19}{2}\pointB{19}{3}\pointXXXXX{19}{4}\pointB{19}{5}\pointB{19}{6}\pointB{19}{7}\pointXXXXXXXXX{19}{8}\pointXXXXXXXXXX{19}{9}\pointB{19}{10}\pointB{19}{11}\pointXXXXXXXX{19}{12}\pointB{19}{13}\pointB{19}{14}\pointB{19}{15}\pointB{19}{16}\pointB{19}{17}\pointB{19}{18}\pointB{19}{19}\pointXXXXXXXXXXXXX{19}{20}\pointB{19}{21}\pointB{19}{22}\pointB{19}{23}\pointF{19}{24}\pointS{19}{25}\pointS{19}{26}\pointS{19}{27}\pointF{19}{28}\pointS{19}{29}\pointSp{19}{30}
\pointB{20}{0}\pointB{20}{1}\pointB{20}{2}\pointXXXXX{20}{3}\pointB{20}{4}\pointB{20}{5}\pointXXXXXXXXX{20}{6}\pointXXXXXXXXXX{20}{7}\pointB{20}{8}\pointB{20}{9}\pointB{20}{10}\pointB{20}{11}\pointB{20}{12}\pointXXXXXXXX{20}{13}\pointB{20}{14}\pointB{20}{15}\pointB{20}{16}\pointB{20}{17}\pointB{20}{18}\pointXXXXX{20}{19}\pointB{20}{20}\pointXXXXXXXX{20}{21}\pointXXXXXXXXX{20}{22}\pointXXXXXXXXXX{20}{23}\pointB{20}{24}\pointXXXXXXX{20}{25}\pointXXXXXXXXXXX{20}{26}\pointXXXXXXXXXXXXXXX{20}{27}\pointB{20}{28}\pointS{20}{29}\pointSp{20}{30}
\pointB{21}{0}\pointB{21}{1}\pointXXXXX{21}{2}\pointB{21}{3}\pointXXXXXXXXX{21}{4}\pointXXXXXXXXXX{21}{5}\pointB{21}{6}\pointB{21}{7}\pointB{21}{8}\pointB{21}{9}\pointB{21}{10}\pointB{21}{11}\pointB{21}{12}\pointB{21}{13}\pointXXXXXXXX{21}{14}\pointB{21}{15}\pointB{21}{16}\pointB{21}{17}\pointXXXXX{21}{18}\pointB{21}{19}\pointXXXXXXXXX{21}{20}\pointXXXXXXXXXX{21}{21}\pointXXXXXXXX{21}{22}\pointB{21}{23}\pointB{21}{24}\pointB{21}{25}\pointXXXXXXXXXXXXX{21}{26}\pointB{21}{27}\pointF{21}{28}\pointS{21}{29}\pointSp{21}{30}
\pointB{22}{0}\pointXXXXX{22}{1}\pointXXXXXXXXX{22}{2}\pointXXXXXXXXXX{22}{3}\pointB{22}{4}\pointB{22}{5}\pointB{22}{6}\pointB{22}{7}\pointB{22}{8}\pointB{22}{9}\pointB{22}{10}\pointB{22}{11}\pointB{22}{12}\pointB{22}{13}\pointB{22}{14}\pointXXXXXXXX{22}{15}\pointB{22}{16}\pointXXXXX{22}{17}\pointXXXXXXXXX{22}{18}\pointXXXXXXXXXX{22}{19}\pointB{22}{20}\pointB{22}{21}\pointB{22}{22}\pointXXXXXXXX{22}{23}\pointB{22}{24}\pointXXXXX{22}{25}\pointXXXXXXXXX{22}{26}\pointXXXXXXXXXXXXXX{22}{27}\pointB{22}{28}\pointS{22}{29}\pointSp{22}{30}
\pointF{23}{0}\pointS{23}{1}\pointS{23}{2}\pointS{23}{3}\pointS{23}{4}\pointS{23}{5}\pointS{23}{6}\pointS{23}{7}\pointS{23}{8}\pointS{23}{9}\pointS{23}{10}\pointS{23}{11}\pointS{23}{12}\pointS{23}{13}\pointS{23}{14}\pointS{23}{15}\pointF{23}{16}\pointS{23}{17}\pointS{23}{18}\pointS{23}{19}\pointS{23}{20}\pointS{23}{21}\pointS{23}{22}\pointS{23}{23}\pointF{23}{24}\pointS{23}{25}\pointS{23}{26}\pointS{23}{27}\pointF{23}{28}\pointS{23}{29}\pointSp{23}{30}
\pointB{24}{0}\pointXXXXXXX{24}{1}\pointXXXXXXXXXXX{24}{2}\pointXXXXXXXXXXXX{24}{3}\pointB{24}{4}\pointB{24}{5}\pointB{24}{6}\pointB{24}{7}\pointB{24}{8}\pointB{24}{9}\pointB{24}{10}\pointB{24}{11}\pointB{24}{12}\pointB{24}{13}\pointB{24}{14}\pointXXXXXX{24}{15}\pointB{24}{16}\pointXXXXXXX{24}{17}\pointXXXXXXXXXXX{24}{18}\pointXXXXXXXXXXXX{24}{19}\pointB{24}{20}\pointB{24}{21}\pointB{24}{22}\pointXXXXXX{24}{23}\pointB{24}{24}\pointXXXXXXX{24}{25}\pointXXXXXXXXXXX{24}{26}\pointXXXXXXXXXXXXXXX{24}{27}\pointB{24}{28}\pointS{24}{29}\pointSp{24}{30}
\pointB{25}{0}\pointB{25}{1}\pointXXXXXXX{25}{2}\pointB{25}{3}\pointXXXXXXXXXXX{25}{4}\pointXXXXXXXXXXXX{25}{5}\pointB{25}{6}\pointB{25}{7}\pointB{25}{8}\pointB{25}{9}\pointB{25}{10}\pointB{25}{11}\pointB{25}{12}\pointB{25}{13}\pointXXXXXX{25}{14}\pointB{25}{15}\pointB{25}{16}\pointB{25}{17}\pointXXXXXXX{25}{18}\pointB{25}{19}\pointXXXXXXXXXXX{25}{20}\pointXXXXXXXXXXXX{25}{21}\pointXXXXXX{25}{22}\pointB{25}{23}\pointB{25}{24}\pointB{25}{25}\pointXXXXXXXXXXXXX{25}{26}\pointB{25}{27}\pointF{25}{28}\pointS{25}{29}\pointSp{25}{30}
\pointB{26}{0}\pointB{26}{1}\pointB{26}{2}\pointXXXXXXX{26}{3}\pointB{26}{4}\pointB{26}{5}\pointXXXXXXXXXXX{26}{6}\pointXXXXXXXXXXXX{26}{7}\pointB{26}{8}\pointB{26}{9}\pointB{26}{10}\pointB{26}{11}\pointB{26}{12}\pointXXXXXX{26}{13}\pointB{26}{14}\pointB{26}{15}\pointB{26}{16}\pointB{26}{17}\pointB{26}{18}\pointXXXXXXX{26}{19}\pointB{26}{20}\pointXXXXXX{26}{21}\pointXXXXXXXXXXX{26}{22}\pointXXXXXXXXXXXX{26}{23}\pointB{26}{24}\pointXXXXX{26}{25}\pointXXXXXXXXX{26}{26}\pointXXXXXXXXXXXXXX{26}{27}\pointB{26}{28}\pointS{26}{29}\pointSp{26}{30}
\pointB{27}{0}\pointB{27}{1}\pointB{27}{2}\pointB{27}{3}\pointXXXXXXX{27}{4}\pointB{27}{5}\pointB{27}{6}\pointB{27}{7}\pointXXXXXXXXXXX{27}{8}\pointXXXXXXXXXXXX{27}{9}\pointB{27}{10}\pointB{27}{11}\pointXXXXXX{27}{12}\pointB{27}{13}\pointB{27}{14}\pointB{27}{15}\pointB{27}{16}\pointB{27}{17}\pointB{27}{18}\pointB{27}{19}\pointXXXXXXXXXXXXX{27}{20}\pointB{27}{21}\pointB{27}{22}\pointB{27}{23}\pointF{27}{24}\pointS{27}{25}\pointS{27}{26}\pointS{27}{27}\pointF{27}{28}\pointS{27}{29}\pointSp{27}{30}
\pointB{28}{0}\pointB{28}{1}\pointB{28}{2}\pointB{28}{3}\pointB{28}{4}\pointXXXXXXX{28}{5}\pointB{28}{6}\pointB{28}{7}\pointB{28}{8}\pointB{28}{9}\pointXXXXXXXXXXX{28}{10}\pointXXXXXXXXXXXXXXX{28}{11}\pointB{28}{12}\pointB{28}{13}\pointB{28}{14}\pointB{28}{15}\pointB{28}{16}\pointB{28}{17}\pointB{28}{18}\pointXXXXX{28}{19}\pointB{28}{20}\pointXXXXXXXX{28}{21}\pointXXXXXXXXX{28}{22}\pointXXXXXXXXXX{28}{23}\pointB{28}{24}\pointXXXXXXX{28}{25}\pointXXXXXXXXXXX{28}{26}\pointXXXXXXXXXXXXXXX{28}{27}\pointB{28}{28}\pointS{28}{29}\pointSp{28}{30}
\pointB{29}{0}\pointB{29}{1}\pointB{29}{2}\pointB{29}{3}\pointB{29}{4}\pointB{29}{5}\pointXXXXXXX{29}{6}\pointB{29}{7}\pointB{29}{8}\pointB{29}{9}\pointXXXXXX{29}{10}\pointB{29}{11}\pointXXXXXXXXXXX{29}{12}\pointXXXXXXXXXXXX{29}{13}\pointB{29}{14}\pointB{29}{15}\pointB{29}{16}\pointB{29}{17}\pointXXXXX{29}{18}\pointB{29}{19}\pointXXXXXXXXX{29}{20}\pointXXXXXXXXXX{29}{21}\pointXXXXXXXX{29}{22}\pointB{29}{23}\pointB{29}{24}\pointB{29}{25}\pointXXXXXXXXXXXXX{29}{26}\pointB{29}{27}\pointF{29}{28}\pointS{29}{29}\pointSp{29}{30}
\pointB{30}{0}\pointB{30}{1}\pointB{30}{2}\pointB{30}{3}\pointB{30}{4}\pointB{30}{5}\pointB{30}{6}\pointXXXXXXX{30}{7}\pointB{30}{8}\pointXXXXXX{30}{9}\pointB{30}{10}\pointB{30}{11}\pointB{30}{12}\pointB{30}{13}\pointXXXXXXXXXXX{30}{14}\pointXXXXXXXXXXXX{30}{15}\pointB{30}{16}\pointXXXXX{30}{17}\pointXXXXXXXXX{30}{18}\pointXXXXXXXXXX{30}{19}\pointB{30}{20}\pointB{30}{21}\pointB{30}{22}\pointXXXXXXXX{30}{23}\pointB{30}{24}\pointXXXXX{30}{25}\pointXXXXXXXXX{30}{26}\pointXXXXXXXXXXXXXX{30}{27}\pointB{30}{28}\pointS{30}{29}\pointSp{30}{30}
\pointB{31}{0}\pointB{31}{1}\pointB{31}{2}\pointB{31}{3}\pointB{31}{4}\pointB{31}{5}\pointB{31}{6}\pointB{31}{7}\pointXXXXXXXXXXXXX{31}{8}\pointB{31}{9}\pointB{31}{10}\pointB{31}{11}\pointB{31}{12}\pointB{31}{13}\pointB{31}{14}\pointB{31}{15}\pointF{31}{16}\pointS{31}{17}\pointS{31}{18}\pointS{31}{19}\pointS{31}{20}\pointS{31}{21}\pointS{31}{22}\pointS{31}{23}\pointF{31}{24}\pointS{31}{25}\pointS{31}{26}\pointS{31}{27}\pointF{31}{28}\pointS{31}{29}\pointSp{31}{30}
  \end{tikzpicture}
  \caption{The 16-state firing squad of \cite{rice}. Empty spaces represent the
    blank state.}
  \label{fig:deffire}
\end{figure}
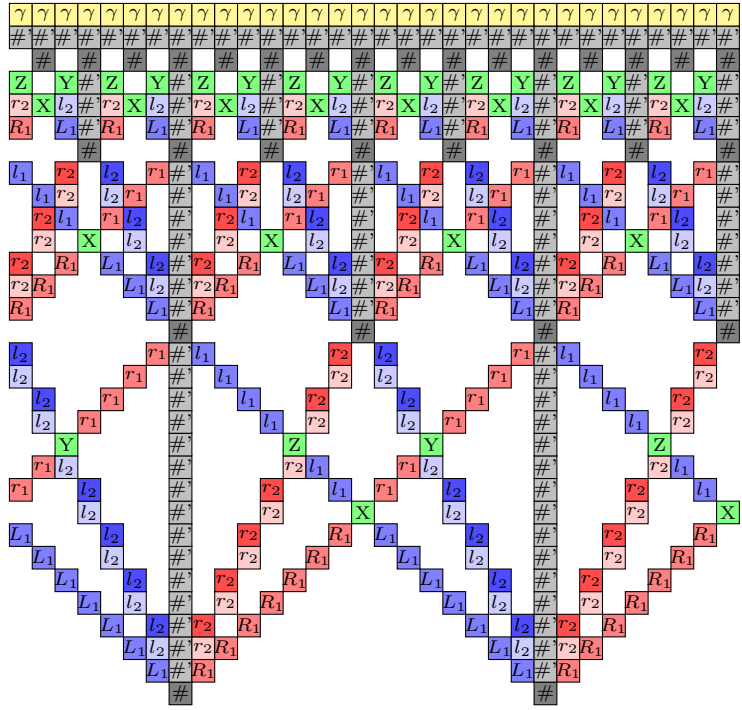

We are interested in history diagrams in $X_S$, \textit{i.e.} mapping from
$\Z\times\N$ to $Q$ of the form: ${(z,t)\mapsto x^t(z)}$ where $(x^t)$ is a sequence of configuration in $X_S$
such that $S(x^{t+1})=x^{t}$.  We call them \textit{valid} history diagrams.

When restricted to $X_S$, the behavior of $S$ is easier to understand
via a signal/collision evolving in a quiescent background. More
precisely, the background is uniform and made of blank states (denoted
$B$ in the sequel) and the signals involved are:
\begin{center}
  \begin{tabular}{r|c|c|c|c|c|c|c}
    \textit{signal} &$L_1$ & $l_1$ & $l_2$ & \#' & $r_2$ & $r_1$ & $R_1$\\
    \hline
    \textit{speed} & -1 & -1 & -1/2 & 0 & 1/2 & 1 & 1
  \end{tabular}
\end{center}

The valid collisions are:
\begin{center}
  \begin{minipage}{.48\linewidth}
    \begin{align*}
      L_1+R_1 &\rightarrow l_1 + r_1\\
      l_1 + r_2 &\rightarrow l_1 + r_2\\
      r_1 + l_2 &\rightarrow r_1 + l_2
    \end{align*}
  \end{minipage}
  \begin{minipage}{.48\linewidth}
    \begin{align*}
      l_2+r_2 &\rightarrow \#\\
      l_1+\#'+r_1 &\rightarrow \#\\
      \# &\rightarrow L_1+l_2+\#'+r_2+R_1
    \end{align*}
  \end{minipage}
\end{center}
Any other intersection of signal is invalid (it raises a $\kappa$
state). Moreover, the last collision rule (starting from a single $\#$) is valid
only if the $\#$ is distant from any other $\#$ by at least $3$ cells: if they
are $1$ cell away, $3$ adjacent $\#'$ are generated; if they are $2$ cells away, a
$\kappa$ is generated $2$ steps later.

To simplify proofs, we will often make reasonings over (portions of)
``Euclidean'' versions of history diagrams. A Euclidean history diagram
is a set of labelled points and labelled (half-)lines or segments in
$\mathbb{R}^2$ satisfying the following rules:
\begin{itemize}
\item points are only at integer coordinates ($\Z^2$) and labelled by $\#$;
\item (half-)lines and segments correspond to signals listed above (label and slope correspond);
\item any intersection between lines or segments follow the collision rules above.
\end{itemize}

\begin{lem}
  \label{euclidean}
  To each history diagram $D$, we can associate a valid Euclidean history
  diagram $E$ such that, at any integer coordinate of $E$ containing a point
  ($\#$) or a single signal, the label gives the state of the corresponding
  position in $D$.
\end{lem}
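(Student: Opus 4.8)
The plan is to trace the \emph{signals} explicitly through the history diagram and to draw, for each one, the Euclidean line (full line, half-line, or segment) that it follows.

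The heart of the argument is a \emph{geometrization} of valid history diagrams, obtained by a finite case analysis on the local rule $s$ (Figure~\ref{fig:deffire}). I would prove that in a valid history diagram $D$ every position carrying a signal state lies on exactly one \emph{signal trajectory}: a trajectory of type $X\in\{L_1,l_1,l_2,\#',r_2,r_1,R_1\}$ is a maximal run of positions, one per time step, all carrying the state $X$, staying within distance $1$ of a straight line whose slope is the one attached to $X$ by the table of speeds, and lying exactly on that line whenever it meets $\Z^2$. Moreover, trajectories are created and destroyed only as the collision rules prescribe --- a $\#$ emitting the five signals $L_1,l_2,\#',r_2,R_1$, two $\#$ one cell apart yielding three adjacent $\#'$, the crossings $l_1+r_2$ and $r_1+l_2$, the fusions $l_2+r_2\to\#$, $l_1+\#'+r_1\to\#$ and $L_1+R_1\to l_1+r_1$ --- and each $\#$ sits at an integer coordinate. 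The whole point of the analysis is that forbidding $\kappa$ and $\gamma$ throughout the backward histories eliminates \emph{every} local pattern of $s$ except those realizing the motions and collisions just listed; in particular, the requirement that two distinct $\#$ be at distance at least $3$ is exactly what remains once the patterns producing $\kappa$ have been discarded.

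Granting the geometrization, $E$ is defined directly: place a point labelled $\#$ at the integer coordinate of every $\#$ occurring in $D$, and, for each signal trajectory of type $X$, draw the segment or half-line of the slope attached to $X$ joining the endpoints dictated by the geometrization (a collision, a $\#$, a triple of adjacent $\#'$, or infinity). Then the points of $E$ lie at integer coordinates and are labelled $\#$, its lines carry the right labels and slopes, and their mutual intersections realize precisely the listed collisions, so $E$ is a valid Euclidean history diagram. For the asserted correspondence: if an integer coordinate $(z,t)$ of $E$ lies on a single signal line and on no $\#$-point, then by the geometrization the cell $(z,t)$ of $D$ carries that very signal's state; and each $\#$-point of $E$ is, by construction, located where $D$ carries the state $\#$.

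The main obstacle is the geometrization, and within it two points deserve special care. First, the half-speed signals $l_2$ and $r_2$: one must check that their discrete positions follow the corresponding Euclidean line with bounded wobble and lie exactly on it at the times when it passes through $\Z^2$ --- this is precisely why the conclusion is asserted only at integer coordinates of $E$. Second, the neighbourhood of a $\#$: one must verify that the five emitted signals are laid down consistently, and that the degenerate configurations of nearby $\#$'s (three adjacent $\#'$ when they are one cell apart, a forced $\kappa$ when they are two cells apart) are respectively compatible with, and excluded by, the hypothesis that neither $\kappa$ nor $\gamma$ appears. A final subtlety is that a valid history diagram may start from an arbitrary configuration of $X_S$, so the signal structure cannot be read off a convenient initial configuration; it must instead be shown to be forced by the $\kappa$- and $\gamma$-free infinite past, through a stability and compactness argument.
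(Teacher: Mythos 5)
Your proposal follows essentially the same route as the paper: the paper's proof takes the signal/collision description of $S$ on $X_S$ for granted and reduces the lemma to the single observation that two $\#$'s joined by a half-speed signal are always displaced by a vector of the form $(n,2n)$, so the discrete signal sits exactly on the integer points of the continuous segment --- which is precisely the first of the two ``points deserving special care'' you isolate. The extra scaffolding you propose (tracing maximal signal trajectories and checking their creations and destructions against the collision rules) is left implicit in the paper but does not constitute a different argument.
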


This lemma allows the following proof scheme (used several times below):
supposing by sake of contradiction that some word $w$ occurs in a history
diagram, we make a reasoning on the corresponding Euclidean diagram, we get a
contradiction and finally deduce that no history diagram exists which contain
the word $w$, and therefore that $w$ is not in the language of $X_S$.

$S$ satisfies points (1) and (2) of Lemma~\ref{l:fsquad} as shown in
\cite[Prop. 4.3]{rice}. We give below a complete characterization of
the language of $X_S$ which shows that it is NL-recognizable.

\begin{lemma}
\label{columns-lemma}
Consider a history diagram containing a word $w\in\#Q^*\#'$ at time $t_0$. Let
$z_1$ (resp. $z_2$) be the cell where the first (resp. the last) letter of $w$
occurs. We suppose in addition that the left $\#$ of $w$ was created by a $l_1$
signal. Let $t_1$ be the first time step in the past when the cell $z_1$ is in
state $\#$, and $t_2$ be the first time step in the past when the cell $z_2$ is
in state $\#$. Then, both $t_1$ and $t_2$ exist.

\end{lemma}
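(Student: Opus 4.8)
The plan is to reason inside the Euclidean history diagram associated to the given history diagram by Lemma~\ref{euclidean}. We are told that $w \in \#Q^*\#'$ occurs at time $t_0$, that the left $\#$ at cell $z_1$ was created by an $l_1$ signal colliding appropriately, and that the right endpoint at cell $z_2$ is in state $\#'$. We want to show that, looking backwards in time, both columns $z_1$ and $z_2$ carry a $\#$ at some earlier time. The key structural fact is the list of collisions: a $\#$ is only ever produced in two ways, namely by a collision $l_2 + r_2 \to \#$ or by a collision $l_1 + \#' + r_1 \to \#$, and a $\#$ immediately explodes as $\# \to L_1 + l_2 + \#' + r_2 + R_1$. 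So every $\#'$ signal (speed $0$, a vertical line in the Euclidean diagram) emanates from a $\#$ point directly below it, unless it runs all the way down to $t_0$'s past without origin — but in a valid diagram a $\#'$ line must have been created, and the only creation rule producing $\#'$ is the explosion of a $\#$. Hence the column $z_2$, carrying $\#'$ at time $t_0$, when traced downward meets a $\#$ point at some time $t_2 < t_0$ (or possibly $t_2 = t_0$ if the explosion is happening right then, in which case $z_2$ is already a $\#$ — we should note this boundary case and see whether the statement intends $t_2 \le t_0$ or strictly before; I would simply take $t_2$ to be the first time in the past, allowing $t_2=t_0$ is excluded by "in the past", so the $\#'$ must have been created strictly earlier, which it is unless it was always there, impossible in a valid diagram since signals are finitely generated). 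This settles the existence of $t_2$.

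For $t_1$, the hypothesis is explicitly that the left $\#$ of $w$ at cell $z_1$, time $t_0$, was created by an $l_1$ signal. Tracing the $l_1$ signal backward (speed $-1$, a line of slope $-1$ going up-left as time increases, so down-right going into the past), it was itself created — the only rules producing $l_1$ are $L_1 + R_1 \to l_1 + r_1$, $l_1 + r_2 \to l_1 + r_2$ (which preserves an incoming $l_1$), and $l_1 + \#' + r_1 \to \#$ (which consumes $l_1$). So an $l_1$ signal is born at an $L_1 + R_1$ collision; and $L_1$ in turn is born only from a $\#$ explosion. Therefore, following $l_1$ backward to its $L_1+R_1$ birth, then following the $L_1$ backward to the $\#$ that emitted it, we reach a $\#$ point. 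But I must show this $\#$ point lies on the \emph{column} $z_1$, or rather that the cell $z_1$ at some earlier time is in state $\#$. That is not the same as "some $\#$ appears in the past cone of $(z_1,t_0)$". Here is where I expect the main obstacle. The right tactic is: the $\#$ that created the $l_1$ signal (via $L_1$, via $L_1+R_1$ born from that $\#$'s explosion) — the explosion $\# \to L_1 + l_2 + \#' + r_2 + R_1$ emits $L_1$ to the left and simultaneously emits a vertical $\#'$ at the $\#$'s own cell. So the $\#$ sits at some cell $z'$ at some time $t'$, with $t' < t_0$. I then need $z' = z_1$. This requires a geometric computation: the $l_1$ signal travels at speed $-1$ from its birth at the $L_1+R_1$ collision to the point $(z_1, t_0)$; I must check that the birth point, the intervening $l_1+r_2$ collisions (which preserve the signal but may correspond to reflections), and the $L_1$ segment back to $\#$ all line up so that the originating $\#$ is at exactly $z_1$. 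Actually the cleaner route: the left $\#$ of $w$ is itself created, per the rule $l_1 + \#' + r_1 \to \#$, at cell $z_1$, time $t_0$ — wait, the hypothesis says it was created by an $l_1$ signal, which matches exactly this rule $l_1 + \#' + r_1 \to \#$. So at $(z_1, t_0 - 1)$ or thereabouts there is an $l_1$, a $\#'$, and an $r_1$ meeting. The $\#'$ present just below $(z_1,t_0)$ is a vertical line, so it was emitted by a $\#$ at cell $z_1$ at an earlier time $t_1 < t_0$. That $\#$ on column $z_1$ is exactly what we want, and it visibly lies on the same column $z_1$ because $\#'$ has speed $0$.

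So the structure of the argument I would write is: (1) invoke Lemma~\ref{euclidean} to work Euclideanly; (2) for $t_2$: the $\#'$ at $(z_2, t_0)$ lies on a maximal vertical $\#'$-segment, whose lower endpoint, by the collision rules, must be a $\#$ point (since the only rule creating a $\#'$ is the $\#$-explosion, and $\#'$ is only consumed by $l_1+\#'+r_1\to\#$), so there is a time $t_2 \le t_0$ (strictly less, as a valid diagram has all signals finitely generated in the past — a $\#'$ cannot persist forever downward) at which $z_2$ is in state $\#$; (3) for $t_1$: by hypothesis the $\#$ at $(z_1,t_0)$ was created by an $l_1$ signal, i.e.\ via the rule $l_1 + \#' + r_1 \to \#$, so a $\#'$-segment abuts $(z_1,t_0)$ from below; as in (2), this vertical segment descends to a $\#$ point at cell $z_1$ at some time $t_1 < t_0$; (4) conclude both $t_1, t_2$ exist. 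The delicate points to get right in the full write-up are the boundary/degenerate cases (is $t_0$ itself allowed; what if the relevant $\#'$-segment has length zero, meaning the $\#$ and its explosion products and the recollision happen in immediate succession — then $t_1$ or $t_2$ might equal $t_0 - 1$, still fine) and the verification that "first time in the past the cell is in state $\#$" is well-defined, which follows because in a valid history diagram every signal and point has finitely many ancestors, so the backward search terminates. I expect the main obstacle to be precisely this finiteness/termination justification — making rigorous that a $\#'$-segment in a valid Euclidean diagram cannot extend to $t = -\infty$, which should follow from the fact that the diagram restricted to any bounded spatial window below a given line is generated by finitely many signals entering through that window's boundary, together with the constant-speed-$0$ nature of $\#'$ forcing its origin to be a $\#$ explosion in that same window.
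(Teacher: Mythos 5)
There is a genuine gap, and it sits exactly where you yourself flagged the ``main obstacle'': your claim that a $\#'$ line in a valid history diagram ``must have been created'' is false. Membership in $X_S$ only requires an infinite backward history avoiding $\kappa$ and $\gamma$; nothing forces signals to be finitely generated in the past. In particular a stationary $\#'$ column may extend infinitely far downward with no $\#$ below it at all — the paper explicitly admits such histories (see the case ``no $\#$ in the past, we have an infinite column of $\#'$'' in the proof of Lemma~\ref{one-ptime}, and the configurations of Lemma~\ref{neutral-sharpsprime}; likewise Lemma~\ref{zero-regular} admits origin-less moving signals). Your argument for $t_2$ uses nothing beyond this false finiteness principle, and your argument for $t_1$ explicitly reduces to it (``as in (2), this vertical segment descends to a $\#$ point''). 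So neither half goes through as written, and the proposed justification via ``finitely many signals entering a bounded window'' cannot be repaired, since the whole point of the later lemmas is to classify exactly which origin-less signal patterns are consistent.

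The hypothesis that the left $\#$ was created by an $l_1$ signal is not decorative: it is what rules out the infinite-$\#'$-column scenario, and the argument must route through the \emph{other} column rather than staying on the column whose $\#$ you are looking for. The paper's proof: trace that $l_1$ (and, through $L_1+R_1\to l_1+r_1$, its $L_1$ ancestor) backward in time; it moves rightward at speed one and so reaches column $z_2$ at some earlier time. If the past of $z_2$ were a pure $\#'$ column, the signal could neither cross it (forbidden collision) nor have been emitted by it (a $\#'$ column emits nothing), a contradiction; hence a $\#$ occurs in the past of $z_2$. Then the $\#$ so obtained on column $z_2$ was created by a right-moving signal ($r_1$ or $r_2$) arriving from the left, whose backward trace must in turn interact with column $z_1$, forcing a $\#$ there as well. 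Your observation that the $\#'$ abutting $(z_1,t_0)$ from below is vertical, so that any $\#$ at its bottom automatically lies on column $z_1$, is correct and worth keeping — but the existence of that bottom is precisely what must be proved, and it cannot be proved column by column in isolation.
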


We denote by $\Sigma$ the set ${Q'\setminus\{\#,\#'\}}$.

\begin{lemma}
  \label{l1r1-prop}
  There is no history diagram containing a word $w$ of the form $\#\Sigma^*\#'$,
  where the left $\#$ is created by $r_1/l_1$ signals.
\end{lemma}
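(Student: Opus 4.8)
The plan is to argue by contradiction using the Euclidean history diagram formalism introduced in Lemma~\ref{euclidean}. Suppose some history diagram contains a word $w$ of the form $\#\Sigma^*\#'$ whose left $\#$ is created by a $r_1/l_1$ collision; we want to produce an impossible configuration. The key structural fact is that the collision $l_2+r_2\to\#$ is the \emph{only} way to create a $\#$ in the interior of a quiescent region (the only other production of $\#$ is $l_1+\#'+r_1\to\#$, which involves a pre-existing $\#'$), whereas the hypothesis says the left $\#$ of $w$ came from $r_1$ and $l_1$ signals — but $r_1$ and $l_1$ are parallel lines (both would have to be present with incompatible slopes to meet), so already the phrasing forces us to analyze which collisions genuinely feed a $\#$. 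I would first invoke Lemma~\ref{columns-lemma} (applied to $w$, whose left $\#$ by hypothesis was created by an $l_1$ signal, or rather whose creation we can trace): it guarantees the existence of times $t_1,t_2$ in the past at which cells $z_1,z_2$ are in state $\#$. This gives two anchoring $\#$-points in $\Z^2$ below $w$.

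Next I would trace, in the Euclidean diagram, the signals emanating from and arriving at the right endpoint $\#'$ and the left $\#$ of $w$. A $\#'$ has speed $0$, so it traces a vertical line; following it down, the $\#'$ must have been produced by the explosion rule $\#\to L_1+l_2+\#'+r_2+R_1$ at some earlier $\#$-point $p$, or it has existed since $t=0$ (impossible for a configuration with a valid infinite past, since $X_S$ demands an infinite history — this is where Point~\eqref{i:fire}/\eqref{i:firelim} of Lemma~\ref{l:fsquad} and the fact that we are inside $X_S$ would be used, together with the speed bookkeeping). Similarly the left $\#$ of $w$, traced backward, either comes from an explosion or from an $l_2+r_2$ collision. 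The crucial point is that between consecutive $\#$'s produced this way the word must contain the background $B$ and the generated signals $L_1,l_2,\#',r_2,R_1$, which are \emph{not} all in $\Sigma=Q'\setminus\{\#,\#'\}$ — in particular $\#'$ itself is excluded from $\Sigma$. So if the region strictly between the left $\#$ and the right $\#'$ of $w$ is all in $\Sigma$, no intermediate $\#$ or $\#'$ appears there, which sharply constrains the geometry: the signals produced by the explosion at $p$ (the $l_2$ going left at speed $-1/2$, the $r_2$ going right at speed $1/2$) must reconverge to form the left $\#$ without any further explosion in between.

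The main obstacle — and the heart of the argument — is the speed/timing contradiction. I expect the argument to run: the left $\#$ of $w$, created by an $l_1$ signal (per the hypothesis of Lemma~\ref{columns-lemma}, as inherited here), means an $l_1$ signal arrives at $z_1$ at time $t_0$; but an $l_1$ signal is produced only by $L_1+R_1\to l_1+r_1$, so there is an $L_1/R_1$ collision somewhere to the upper-left region's past; meanwhile the $\#'$ at $z_2$ and the requirement that the whole span $\co{z_1}{z_2}$ be in $\Sigma$ forbids the $\#$'s and $\#'$'s that a full firing cycle would deposit. Tracking the light-cone constraints (signals of speed $\pm1$, $\pm1/2$, $0$) one shows the $l_1$ that creates the left $\#$ and the vertical $\#'$ line cannot coexist over a purely-$\Sigma$ interval: following the $r_1$ partner of that $l_1$ forces it to hit the $\#'$ column, producing (by $l_1+\#'+r_1\to\#$ or an invalid intersection) either a $\#$ inside $w$ — contradicting $w\in\#\Sigma^*\#'$ since $\#\notin\Sigma$ — or a $\kappa$, which contradicts membership in $X_S$. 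Carefully, the ``distance at least $3$'' caveat on the explosion rule and the invalid-intersection clause ($\kappa$ generation) are what close the remaining small-span cases. I would organize the write-up as: (i) set up the Euclidean diagram and name the endpoint signals; (ii) use Lemma~\ref{columns-lemma} to get $t_1,t_2$; (iii) trace the $\#'$ column down to an explosion point; (iv) trace the $l_1$-created left $\#$ down to its $L_1+R_1$ origin and follow the $r_1$ partner; (v) observe that this $r_1$ must cross the $\#'$ column or an interior region, forcing a $\#$ or $\#'$ or $\kappa$ strictly inside $w$, contradicting $w\in\#\Sigma^*\#'$ or $w$'s being in the language of $X_S$.
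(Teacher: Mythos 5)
Your setup matches the paper's: argue by contradiction on a Euclidean history diagram, and use Lemma~\ref{columns-lemma} to anchor two $\#$'s in the past, at a time $t_1$ below the right $\#'$ and a time $t_2$ below the left $\#$. But the step where the contradiction is actually derived is both too vague and, where it is concrete, incorrect. Your proposed mechanism is that the $r_1$ partner of the $l_1$ creating the left $\#$ ``must hit the $\#'$ column, forcing a $\#$ or $\#'$ or $\kappa$ strictly inside $w$.'' This does not follow: $r_1\in\Sigma$, so an $r_1$ travelling through the interior of $w$ at time $t_0$ is perfectly allowed; and an $r_1$ reaching the $\#'$ column strictly in the past together with a matching $l_1$ produces a $\#$ there via $l_1+\#'+r_1\to\#$, at the position $z_2$ (the endpoint of $w$, not its interior) and at an earlier time --- which is not a contradiction but precisely the $\#$ at time $t_1$ that Lemma~\ref{columns-lemma} guarantees. (Incidentally, $r_1$ and $l_1$ are not parallel; they have speeds $+1$ and $-1$, and that parenthetical suggests a misreading of the signal geometry.)

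What is actually needed, and what the paper does, is a case analysis on how the two anchoring $\#$'s were created ($l_1/r_1$ versus $l_2/r_2$ for each), and in the main case a quantitative argument: one infers the existence of a third column $\mathcal A$ to the left, writes down the distance and timing relations imposed by the slopes of the connecting signals (of the form $d_1+d_2=2d_2+d'-d_1$ and $2d_1=d_2+d'$ together with their mirror images), solves them to get $d_1=d_2=d'$, and concludes that the $R_1/r_1$ signal emitted at time $t_2$ by the central column reaches the right column \emph{exactly} at time $t_0$, contradicting the $\#'$ sitting there. The subcase where both anchoring $\#$'s come from $l_1/r_1$ is closed by a separate argument (a third $\#$ in someone's past forces a signal crossing that raises a $\kappa$), and the $l_2/r_2$ case reduces to the first by symmetry. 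None of this is recoverable from ``tracking the light-cone constraints''; the timing coincidence forced by $d_1=d_2=d'$ is the heart of the proof and is absent from your sketch, so the argument as written has a genuine gap.
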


\begin{lemma}
  \label{l2r2-prop}
  There is no history diagram containing a word $w$ of the form $\#\Sigma^*\#'$,
  where the left $\#$ was created by a $l_2/r_2$ pair of signals.
\end{lemma}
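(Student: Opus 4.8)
The plan is to argue by contradiction following the Euclidean-diagram scheme described just after Lemma~\ref{euclidean}. Suppose some history diagram contains $w=\#u\#'$ at time $t_0$ with $u\in\Sigma^*$, the left $\#$ at a cell $z_1$, the $\#'$ at a cell $z_2>z_1$, and assume the left $\#$ is obtained from an $l_2$-signal (arriving from the right) meeting an $r_2$-signal (arriving from the left). I pass to the associated valid Euclidean diagram $E$ and use two facts. First, \emph{once a cell carries $\#$ it carries $\#$ or $\#'$ at every later step}: a $\#$ turns into a $\#'$ at its own cell, and the only legal way to leave $\#'$ is the collision $l_1+\#'+r_1\to\#$ (any other signal reaching a $\#'$ raises $\kappa$), so the persistence holds throughout a history diagram. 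In particular no cell with $z_1<x<z_2$ carries $\#$ or $\#'$ at any time $\le t_0$, and in $E$ the point $(z_2,t_0)$ lies on a vertical $\#'$-segment which, unless it descends forever, is issued by a $\#$-point $(z_2,s_2)$ with $s_2<t_0$. Second, by the rule $\#\to L_1+l_2+\#'+r_2+R_1$, a $\#$-point emits, together with each slow signal $l_2$ (resp.\ $r_2$), a fast signal $L_1$ (resp.\ $R_1$) travelling in the same direction.

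I then trace the two signals producing the left $\#$. Going backwards, the $r_2$-segment ending at $(z_1,t_0)$ is either a half-line reaching arbitrarily old times or is issued by a $\#$-point $G$ at a cell $q<z_1$; likewise the $l_2$-segment ending at $(z_1,t_0)$ is either such a half-line or is issued by a $\#$-point at a cell $>z_1$, hence at a cell $\ge z_2$ by the first fact. In the generic situation the second fact gives the fast companions: a slope computation (speeds $1/2$ and $1$) places the $R_1$ companion of the $r_2$-segment — possibly already relabelled $r_1$ by an $L_1+R_1\to l_1+r_1$ crossing — at the cell $2z_1-q>z_1$ at time $t_0$, and the $L_1$ companion of the $l_2$-segment at a cell $<z_1$ at time $t_0$, moving left. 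Propagating these fast signals, one is forced, inside the strip $z_1<x<z_2$ or on the column $x=z_2$, into one of: (i) an intersection of a \emph{capital} signal $L_1$ or $R_1$ with a slow signal $l_2/r_2$ or with the $\#'$-segment at $z_2$ — none of these is among the legal collisions, so it spawns $\kappa$, contradiction; or (ii) a rescue by $l_1+\#'+r_1\to\#$ on the column $x=z_2$, producing a fresh $\#$-point $(z_2,\tau)$ with $\tau>t_0$.

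The main obstacle is to close off case (ii). There I would feed the new $\#$-point back into the analysis: tracing its $l_1$- and $L_1$-feeds, the $\#'$-segment beneath it, and the emptiness of the strip, I expect either to reproduce a word handled by Lemma~\ref{l1r1-prop} or to descend to a strictly smaller instance of the present statement, the descent being well-founded because Lemma~\ref{columns-lemma} guarantees that the required ``first time in the past at which a given cell is $\#$'' exists. All the remaining difficulty is bookkeeping: one must track the positions of the issuing $\#$-points relative to $z_1$ and $z_2$, the parity constraints coming from the half-integer speeds of $l_2$ and $r_2$, whether each tracked fast signal is still capital ($L_1,R_1$) or already lowercase ($l_1,r_1$) when it reaches the critical cell, and whether each of the two signals forming the left $\#$ is eternal or issued by a $\#$. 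The collision $l_1+\#'+r_1\to\#$ is precisely what blocks an immediate $\kappa$, so the heart of the argument is to show that invoking it merely moves the anomaly (a fresh $\#$ on the column $x=z_2$, higher up) without ever removing it, and that each such move forces either an illegal capital-meets-slow intersection or a genuine shrinking of the configuration, ensuring termination.
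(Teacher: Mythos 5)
Your set-up collects the right ingredients --- passing to the Euclidean diagram, tracing the $l_2$ and $r_2$ back to issuing $\#$-points, invoking the fast companions $L_1/R_1$ and the illegality of capital-meets-slow crossings, and aiming at a reduction to Lemma~\ref{l1r1-prop} --- but the proof is not actually closed. The paper's argument finishes in three concrete steps that your proposal replaces by a hope. (a) The $l_2$ creating the left $\#$ must originate at a $\#$ lying \emph{on the column of the right $\#'$} at some past time $t_1$ (any other origin would place a $\#$ strictly between $z_1$ and $z_2$, contradicting $w\in\#\Sigma^*\#'$). (b) That $\#$ also emits an $L_1$, which cannot legally cross the incoming $r_2$; hence it must first be converted to $l_1$ by meeting an $R_1$, and that $R_1$ has a finite origin, which exhibits a \emph{third} $\#$-column at some time $t_2$ to the left. (c) The columns at $t_1$ and $t_2$ are desynchronized and both emit fast signals toward each other; writing $x,y$ for the two gaps, avoiding a hit before $t_0$ would require $x+y>2y$ and $x+y>2x$ simultaneously, which is impossible, so one fast signal reaches the other column before $t_0$ and creates exactly the $l_1/r_1$-generated $\#$ needed to apply Lemma~\ref{l1r1-prop} (or its mirror). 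Your substitute for (c) --- ``I expect either to reproduce a word handled by Lemma~\ref{l1r1-prop} or to descend to a strictly smaller instance, the descent being well-founded'' --- exhibits no decreasing quantity; Lemma~\ref{columns-lemma} only guarantees the existence of a first $\#$ in the past of a given column, not that your rescue step can recur only finitely often. Since you yourself flag this as ``the heart of the argument,'' the proof has a genuine gap precisely there.

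A second, independent problem is the direction of time in your cases (i) and (ii). You propagate the fast companions \emph{forward past} $t_0$ and conclude with a $\kappa$ or a fresh $\#$ at some $\tau>t_0$. But a history diagram is indexed by the past: if $w$ occurs at row $t_0$, one may pass to the shifted diagram $(x^{t_0+s})_{s\in\N}$ and assume $w$ occurs at row $0$, where the diagram simply has no future. Events after the row carrying $w$ are therefore unconstrained by the hypothesis and yield no contradiction. The contradiction must be located strictly in the past of $t_0$, which is exactly what the paper's distance computation in step (c) achieves: the illegal (or Lemma-\ref{l1r1-prop}-reducible) event is forced to occur \emph{before} $t_0$ because $x>y$ and $y>x$ cannot both hold.
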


\begin{lemma}
  \label{periodic-prop}
  Any configuration from $X_S$ with at least two $\#$ is of the following form,
  for some value of $n$: ${^\omega(\#B^n)^\omega}$
\end{lemma}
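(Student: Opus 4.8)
The plan is to analyze what happens to the infinitely many $\#$-positions in a valid configuration $x\in X_S$ by using the Euclidean history diagram machinery (Lemma~\ref{euclidean}) together with the already-established impossibility lemmas (Lemmas~\ref{l1r1-prop} and \ref{l2r2-prop}). First I would fix a configuration $x\in X_S$ containing at least two $\#$ symbols, and let $\cdots < z_0 < z_1 < \cdots$ be the (finite, or one-/two-sided infinite) increasing enumeration of cells where $x$ carries $\#$. The key point is to show that any two \emph{consecutive} $\#$-positions $z_k < z_{k+1}$ are separated by a fixed gap $n$ and that between them $x$ carries only $B$ (blank) states — this is exactly what the notation ${}^\omega(\#B^n)^\omega$ asserts. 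Since $x$ has an infinite backward history in $X_S$ (no $\kappa$, no $\gamma$ ever appears in the past), every finite window of $x$ must lie in the language of $X_S$, so it suffices to derive the periodic structure from the forbidden-pattern lemmas.

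The core step is to prove that between two consecutive $\#$'s of $x$ there are only blank states. Suppose not: then some window $\#\,u\,$ with $u$ nonempty over $Q'$ occurs between two consecutive $\#$'s, and by extending slightly to the next collision one obtains (after at most a constant number of backward/forward steps, tracing the signals emitted by the left $\#$ via the collision rules $\#\to L_1+l_2+\#'+r_2+R_1$) a word of the form $\#\Sigma^*\#'$ in some history diagram. Here I would use that the left $\#$ of such a word, being a genuine $\#$ in a valid diagram with infinite history, must itself have been \emph{created} by a collision — either an $L_1+R_1\to l_1+r_1$–type event giving an $r_1/l_1$ creation, or an $l_2+r_2\to\#$ event giving an $l_2/r_2$ creation, or the $l_1+\#'+r_1\to\#$ event. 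The first two cases are ruled out directly by Lemmas~\ref{l1r1-prop} and \ref{l2r2-prop}; the third case ($l_1+\#'+r_1$) I would reduce to one of the previous two by following the $l_1$ (or $r_1$) signal further back to \emph{its} point of creation, using that in a configuration of $X_S$ no signal can travel backward forever without having been emitted by some $\#$ (again because only the blank background and the listed signals/collisions occur, and there is no room for an infinite nested cascade — each $\#$ that emits an $l_1$ came, going back, from a previous collision). Concluding, no such nonblank $u$ can occur, so the content strictly between consecutive $\#$'s is $B^{n_k}$ for some $n_k\ge 0$ (with $n_k\ge 3$ forced by the side-condition on the last collision rule, which forbids two $\#$'s at distance $1$ or $2$).

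It remains to show the gaps $n_k$ are all equal. For this I would look at the backward evolution: a pattern $\#\,B^{n_k}\,\#$ in $x$, run backward one step, is the image under $S$ of a pattern that must itself be consistent with the blank background and the signal structure; tracing the signals $R_1$ emitted rightward by the left $\#$ and $L_1$ emitted leftward by the right $\#$, the $R_1$ from $z_k$ and the $L_1$ from $z_{k+1}$ (speeds $+1$ and $-1$) collide via $L_1+R_1\to l_1+r_1$ at a precise time determined by $n_k$, and the subsequent $l_2/r_2$ sub-signals recombine, with the constraint — coming from the fact that \emph{every} column eventually returns to a $\#$ in the past, as in Lemma~\ref{columns-lemma} — that the collision patterns of adjacent gaps must be synchronized. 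Comparing the times at which the $\#$'s in columns $z_k$, $z_{k+1}$, $z_{k+2}$ were produced forces $n_k=n_{k+1}$ for all $k$; combined with shift-invariance of $X_S$ this yields a single common value $n$ and hence $x={}^\omega(\#B^n)^\omega$. I expect the main obstacle to be this last synchronization argument: making precise, via the Euclidean diagram, that the collision cascades seeded by neighboring $\#$'s cannot have mismatched geometry without eventually producing either a forbidden $\#\Sigma^*\#'$ word or a $\kappa$, thereby contradicting membership in $X_S$.
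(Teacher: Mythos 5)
Your plan addresses only part of the statement, and the two steps it does address rest on an unjustified reduction. The most serious omission: the conclusion ${}^\omega(\#B^n)^\omega$ asserts that the $\#$-positions are unbounded in \emph{both} directions, and you never prove this. You explicitly allow the enumeration of $\#$-positions to be finite or one-sided and then only compare consecutive gaps, which proves nothing if there are, say, exactly two $\#$s. This is precisely where the paper's proof needs a dedicated argument: a rightmost (or leftmost) $\#$ would force an infinite column of $\#/\#'$ in its past containing infinitely many $\#$s, and some $r_2$ emitted by such a column would later collide with an $R_1$ emitted above it, producing a $\kappa$ and contradicting membership in $X_S$. Without this step the lemma is simply not proved.

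Second, your reduction of ``a non-blank state between two consecutive $\#$s'' to a forbidden word $\#\Sigma^*\#'$ is asserted (``by extending slightly to the next collision'') but never carried out, and it is doubtful. Lemmas~\ref{l1r1-prop} and~\ref{l2r2-prop} forbid a \emph{desynchronized} pair of columns --- one column at its firing instant ($\#$) while another simultaneously sits in the resting state $\#'$ with only signals in between. Two $\#$s occurring in the same configuration $x$ are by definition synchronized at that instant, and neither one forward step (both become $\#'$) nor one backward step (both revert to their pre-collision patterns) produces the required $\#/\#'$ mismatch; the natural obstruction to a stray signal is an invalid collision ($\kappa$) somewhere in its backward trace, not a $\#\Sigma^*\#'$ word. (Also, your ``third case'' $l_1+\#'+r_1$ is not a separate case: that collision is exactly what Lemma~\ref{l1r1-prop} calls creation by $r_1/l_1$ signals, so the two cited lemmas already exhaust the creation mechanisms.) The paper's proof proceeds quite differently: it shows consecutive $\#$s cannot both have been created by $l_2/r_2$ (the signals would have collided earlier, yielding an intermediate $\#$) nor both by $l_1/r_1$ (the generating signals would cross and again force an intermediate $\#$), uses the resulting alternation to infer further $\#$ columns and iterate, rules out a leftmost/rightmost $\#$ as above, and only then dismisses stray signals as arising from a desynchronized history. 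Your synchronization idea for equal gaps is in the right spirit, but as written the argument has these two concrete holes.
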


\begin{lemma}
  \label{validlanguage}
  Let $L$ be the language of configurations from $X_S$ admitting an
  history diagram where two $\#$ occur at some time $t$ in the
  past. Then $L\in\textsc{nl}$ -- recognizable in logarithmic space.
\end{lemma}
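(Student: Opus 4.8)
The plan is to give an explicit description of $L$ as a finitely parameterized family of spatially periodic configurations together with their forward iterates, and then to reduce membership of a finite word in $\lang(L)$ to a bounded arithmetic feasibility question answerable in logarithmic space. First I would invoke Lemma~\ref{periodic-prop}: if a valid history diagram $(x^t)_{t\in\N}$ has $x^{t_0}$ containing at least two $\#$, then $x^{t_0}$ must be ${^\omega(\#B^n)^\omega}$ for some $n$, so the present configuration is $x^0=S^{t_0}({^\omega(\#B^n)^\omega})$ (up to shift). Conversely, such a configuration lies in $L$ precisely when ${^\omega(\#B^n)^\omega}\in X_S$ and none of $S^1({^\omega(\#B^n)^\omega}),\dots,S^{t_0-1}({^\omega(\#B^n)^\omega})$ contains $\kappa$ or $\gamma$ — this last requirement guarantees, since ${^\omega(\#B^n)^\omega}$ itself then has an infinite valid past, that every $x^t$ with $t\ge1$ stays in $X_S$. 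The membership ${^\omega(\#B^n)^\omega}\in X_S$ singles out an \emph{admissible} set of values of $n$, and this is exactly where Lemmas~\ref{columns-lemma}, \ref{l1r1-prop} and \ref{l2r2-prop} enter; I expect that set to have a simple arithmetic description. The second requirement amounts to $0\le t_0\le T(n)$, where $T(n)=\Theta(n)$ is the firing time of ${^\omega(\#B^n)^\omega}$, the boundary case $t_0=T(n)$ yielding $x^0={^\omega\gamma^\omega}$. So $L$ is the union of $\{{^\omega\gamma^\omega}\}$ with all $S^{t}({^\omega(\#B^n)^\omega})$ over admissible pairs $(n,t)$.

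Next I would pin down the shape of these configurations. Each $S^{t}({^\omega(\#B^n)^\omega})$ is spatially periodic of period $n+1$, and by the signal/collision picture of Lemma~\ref{euclidean} it equals the blank background outside the cells carrying $\#$, $\#'$ or one of the signals $L_1,l_1,l_2,r_2,r_1,R_1$; the positions of these objects within a period are given by the closed-form recursion of a binary-subdivision firing squad. The key point for me is that, for fixed $(n,t)$, deciding which object (if any) occupies a prescribed cell is a matter of $O(\log(nt))$-bit arithmetic, hence a logspace computation, with no step-by-step simulation.

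Then I would describe the recognition procedure for $\lang(L)$. On input a word $w$, guess which case applies. If $w\in\gamma^*$, accept. If $w$ meets a full period, \textit{i.e.} $n+1\le|w|$, then $n\le|w|$ and $t\le T(n)=O(|w|)$: guess $n$ and $t$ in $O(\log|w|)$ bits, check that $n$ is admissible, and verify cell by cell (reusing the work tape) that $w$ coincides with some window of $S^{t}({^\omega(\#B^n)^\omega})$. If $n+1>|w|$, then $w$ is a window of width $<n+1$ into a large firing squad; here I would show that the words arising this way are exactly those satisfying finitely many local constraints (the collision rules of Lemma~\ref{euclidean}, a finite-type condition on the $O(|w|)$ signals visible in $w$) together with a bounded system of linear equalities, inequalities and congruences tying the signal offsets seen in $w$ to the hidden parameters $n$ and $t$; the procedure then guesses the labelled signal positions in $w$, checks the local rules, and checks feasibility of the system — all within nondeterministic logarithmic space.

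The hard part will be the case $n+1>|w|$: one must prove that a bounded window of an \emph{arbitrarily} large firing squad still admits a logspace-decidable characterization, that is, that the unbounded parameters $n$ and $t$ cannot encode hard-to-decide information into the local arrangement of signals. I expect this to follow from the rigidity of the binary-subdivision pattern underlying Lemmas~\ref{euclidean} and \ref{periodic-prop}: locally the space-time diagram is an explicit arithmetic function of $(n,t)$, so the constraints that a finite window imposes on $(n,t)$ form a fixed system with parameters bounded polynomially in $|w|$, whose feasibility is decidable in logarithmic space. A second, more bookkeeping, point is the use of Lemmas~\ref{columns-lemma}, \ref{l1r1-prop} and \ref{l2r2-prop} to determine exactly which periodic configurations ${^\omega(\#B^n)^\omega}$ possess a valid infinite past, and to confirm that all intermediate configurations then remain in $X_S$.
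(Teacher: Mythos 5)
Your proposal follows essentially the same route as the paper: both invoke Lemma~\ref{periodic-prop} to reduce $L$ to the forward orbits of configurations ${}^\omega(\#B^n)^\omega$, and both conclude by observing that each such orbit configuration has a spatial period containing only a bounded number of signals whose offsets satisfy simple linear relations in $n$ and $t$, which an NL machine can guess and verify with logarithmic counters. The paper states this final characterization directly (periods of the form $\#'B^{x_1}A_1\cdots B^{x_i}A_i$ with $i\le 4$ and linear equations on the $x_i$) and leaves the verification as ``straightforward,'' so your more explicit treatment of the window-smaller-than-period case and of the logspace arithmetic fills in exactly the detail the paper elides.
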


\begin{lemma}
  \label{one-ptime}
  The language of configurations from $X_S$ which contain only one state in
  $\{\#,\#'\}$ is also in $\textsc{nl}$.
\end{lemma}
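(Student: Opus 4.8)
The plan is to describe the finitely many possible shapes of a configuration $y\in X_S$ having a single cell, say $c$, in state $\#$ or $\#'$, by reasoning on its Euclidean history diagram (Lemma~\ref{euclidean}), and then to check that the set of finite factors of each shape is recognizable in $\textsc{nl}$; since $\textsc{nl}$ is closed under finite union, this yields the statement. The first step is a reduction: \emph{no} configuration occurring in a valid history of $y$ can contain two distinct cells in state $\#$. Indeed, by Lemma~\ref{periodic-prop} such a configuration would be ${^\omega(\#B^n)^\omega}$, and one application of $S$ turns every $\#$ into a $\#'$, so the next configuration (and, by periodicity, every later one) would contain infinitely many cells in $\{\#,\#'\}$; read at time $0$ — directly if the two $\#$ occur at time $1$ or later, and after propagating the periodic configuration forward if they occur earlier — this contradicts that $y$ has a single such cell. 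Hence, outside a quiescent blank background, the whole orbit of $y$ is a bounded-width bundle carrying at most one $\#$, at most one $\#'$, and finitely many signals among $L_1,l_1,l_2,r_2,r_1,R_1$.

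Next I would follow the thread of the distinguished cell backwards. If $y_c=\#$, the explosion $\#\to L_1+l_2+\#'+r_2+R_1$ fixes the future (the five signals recede to infinity, leaving the lone $\#'$ on $c$), and backwards $\#$ at $(c,0)$ must be produced by a collision $l_2+r_2\to\#$ or $l_1+\#'+r_1\to\#$; if $y_c=\#'$, the cell $c$ is either eternally $\#'$ or was produced by such an explosion a finite time in the past. The key point is that the backward thread cannot be prolonged indefinitely. Reading off the diagram the appropriate words of the form $\#\Sigma^*\#'$ (using that two consecutive $\#$'s along a valid history are separated only by letters of $\Sigma=Q'\setminus\{\#,\#'\}$), Lemmas~\ref{l1r1-prop} and \ref{l2r2-prop}, together with Lemma~\ref{columns-lemma}, show that each way of producing an isolated $\#$ drags in either a companion $\#$ or $\#'$ still present at time $0$ (contradicting uniqueness), or a forbidden collision (a $\kappa$, contradicting $y\in X_S$, using also that the self-collision of $\#$ is valid only at distance $\ge3$). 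So the thread dies out after a bounded number of steps, and the Euclidean picture of $y$ is then one of \emph{finitely many} bounded figures: the blank plane, the cell $c$, at most a constant number of signal lines, and — when the thread started from an earlier $\#$ that exploded at some unknown time $-t$ — signal positions that are affine in $t$ and in residues modulo $2$ (coming from the speeds $\pm1/2$ of $l_2$ and $r_2$).

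Finally, each of the finitely many figures yields an $\textsc{nl}$ language of factors: given a factor $w$, a machine guesses the figure, the positions of $c$ and of the constantly many signals relative to the window (with their residues modulo $2$ and the values of the affine parameters), verifies with logarithmic-space counters the constantly many linear (in)equalities relating these positions, and checks with a finite automaton that the background is blank and that $w$ is locally consistent with the chosen figure (in particular with the local rule of $S$ wherever two signals, or a signal and $c$, are adjacent). Taking the union over the figures and over the two possible types of $c$ gives the claimed bound. The main obstacle is the geometric elimination of the second step — ruling out that the unique $\#/\#'$ of $y$ is sustained by an unbounded cascade of signals arriving from the infinite past — which is exactly where Lemmas~\ref{l1r1-prop}, \ref{l2r2-prop} and \ref{columns-lemma} are used; everything after that reduces to routine automaton-and-counter bookkeeping.
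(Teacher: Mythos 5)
Your overall strategy coincides with the paper's: reduce to a single column $c$ carrying the unique $\#/\#'$, bound the number of $\#$-events occurring in the past of that column, and then recognize each of the finitely many resulting shapes by a nondeterministic log-space machine that guesses the signal positions and checks a constant number of linear relations between them. The reduction to one column and the final counter bookkeeping are fine. The gap is in the middle step, which you yourself identify as ``the main obstacle.''

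You justify the boundedness of the backward cascade by invoking Lemmas~\ref{l1r1-prop}, \ref{l2r2-prop} and~\ref{columns-lemma}. All three concern a word of the form $\#\Sigma^*\#'$, i.e.\ a $\#$ and a $\#'$ occupying two \emph{distinct} cells at the same time. But your own first step (together with the observation that a cell once in $\{\#,\#'\}$ stays in $\{\#,\#'\}$ at every later time) shows that in the present situation no configuration of the history ever contains two cells in $\{\#,\#'\}$, so the hypotheses of those lemmas are never satisfied and they cannot be used here. Worse, your conclusion as stated --- that ``each way of producing an isolated $\#$ drags in either a companion still present at time $0$ or a forbidden collision'' --- would show that \emph{no} $\#$ can ever occur in the past of $c$, which is false: the column may carry one or two $\#$'s in its past, fed by $l_1/r_1$ signals that are either primordial (half-lines coming from the infinite past) or created by an $L_1+R_1$ intersection; these are exactly the one-$\#$ and two-$\#$ cases that the paper enumerates explicitly, with unbounded block lengths subject to relations such as $x=2z+y$. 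The argument actually needed is different from what you sketch: a later $\#$ on the column must arise from $l_1+\#'+r_1$, the incoming $l_1$ must be produced by an $L_1$ meeting the $R_1$ (or $r_1$) emitted rightward by an earlier explosion of $c$, and this intersection can occur only once --- which is what caps the number of past $\#$'s at two. Without this (or an equivalent) geometric argument, the finiteness of your list of ``figures'' is unsubstantiated, and that finiteness is precisely the content of the lemma.
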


\begin{lemma}
  \label{neutral-sharpsprime}
  Let $L$ be the language of configurations from $X_S$ with two or
  more $\#'$ and having a history diagram with no $\#$.  Then $L$ is
  regular.
\end{lemma}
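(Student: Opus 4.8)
The plan is to give an explicit description of $L$ and read off from it that $L$ is regular. Fix a configuration $y\in X_S$ together with a history diagram $D=(y^t)_{t\in\N}$ of $y$ (so $y^0=y$ and $S(y^{t+1})=y^t$) that contains no $\#$; since $D$ is a history inside $X_S$ it also avoids $\kappa$ and $\gamma$. First I would observe that the only collision rules creating or destroying a $\#$ are $l_2+r_2\to\#$, $l_1+\#'+r_1\to\#$ and $\#\to L_1+l_2+\#'+r_2+R_1$, and none of them occurs in $D$; hence the only collisions occurring in $D$ are the three neutral ones $L_1+R_1\to l_1+r_1$, $l_1+r_2\to l_1+r_2$ and $r_1+l_2\to r_1+l_2$. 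Consequently $y$ is built only from $B$ and signals, none of $\#',L_1,R_1,l_2,r_2$ is ever created in $D$, and no signal is ever destroyed; since $D$ is infinite towards the past, every signal occurring in $y$ occurs at every step of $D$, hence in the associated Euclidean diagram (Lemma~\ref{euclidean}) traces a full bi-infinite straight line -- in particular the $\#'$'s of $y$ (at least two of them) are full vertical lines.

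The geometric core is the observation that in $D$ no signal ever reaches a $\#'$: such a meeting is an intersection with $\#'$, but the only collision rule involving $\#'$ is $l_1+\#'+r_1\to\#$, excluded since $D$ has no $\#$, and any other way of reaching a $\#'$ produces $\kappa$, also excluded. From this I would read off the shape of $y$. First, strictly between two consecutive vertical $\#'$-lines there are only blanks: a non-vertical signal confined to that bounded strip drifts monotonically (at speed $1/2$ or $1$), hence, run towards the past, hits one of the two boundary lines after finitely many steps -- a contradiction, and the same holds for any signal that would be created in the strip by a neutral collision. Second, on the open half-line to the left of the leftmost $\#'$ (when that exists) only the signals $R_1,r_1,r_2$ -- those whose backward motion escapes towards $-\infty$ -- and blanks may occur, and symmetrically only $L_1,l_1,l_2$ and blanks to the right of the rightmost $\#'$; when the $\#'$'s are unbounded on a side, that side carries only $\#'$'s and blanks. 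A further purely local analysis inside each half-line (two equal-speed signals coexist harmlessly, while a speed-$1$ signal placed ``behind'' a speed-$1/2$ one catches it up in the past and triggers a forbidden collision such as $R_1+r_2$ or $L_1+l_2$) fixes the relative order of the signal types up to blanks. Altogether, read from left to right, $y$ is a word over $\{R_1,r_1,B\}$, then a word over $\{r_2,B\}$, then a block over $\{\#',B\}$ containing at least two $\#'$'s, then a word over $\{l_2,B\}$, then a word over $\{L_1,l_1,B\}$ (some pieces possibly empty, finite or infinite; the admissible groupings of consecutive $\#'$'s being whatever the transition table of $S$ permits on a blank background). Conversely, in such a configuration no two signals ever meet once everything is propagated freely towards the past, so any bi-infinite word of this form carrying at least two $\#'$ yields a valid $\#$-free history inside $X_S$ and thus belongs to the set under study.

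It remains to pass to the language $L$ of finite factors. The above identifies the admissible configurations with the bi-infinite sequences over the eight-letter alphabet $\{B,L_1,l_1,l_2,\#',r_2,r_1,R_1\}$ avoiding finitely many constraints, each either \emph{local} (a fixed finite set of forbidden patterns: the absent letters $\#,\kappa,\gamma$; the length-$2$ patterns putting a signal on the wrong side of a $\#'$; the state triples of $S$ that map to $\kappa$) or \emph{regular} of the form ``signal $a$ never occurs to the left of signal $b$'' or ``a signal of $\{R_1,r_1,r_2\}$ and a signal of $\{L_1,l_1,l_2\}$ never occur with at most one $\#'$ between them'' (for instance $L_1$ never left of $r_1$, or $r_2$ never left of $r_1$). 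A finite intersection of regular languages being regular, the set of admissible configurations is a sofic subshift, and its language is exactly $L$: the only global requirement, ``at least two $\#'$'', never kills an otherwise admissible finite factor, because any admissible configuration can be extended by appending extra $\#'$'s far from a prescribed factor inside the (possibly empty) $\#'$-block. Hence $L$ is regular.

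The main difficulty is the bookkeeping of the middle paragraph: one has to handle with care the half-speed signals $l_2,r_2$ (their two-step motion), the admissible groupings of consecutive $\#'$'s -- which depend on transitions of $S$ not drawn in Figure~\ref{fig:deffire} -- and the configurations whose $\#'$'s are unbounded on one or both sides, so that one or both outer regions disappears. None of this introduces a non-regular feature -- indeed the ``at least two $\#'$'' hypothesis so rigidly segregates the signals that, along a valid backward history, no two of them ever interact -- but writing down the exact finite list of admissible local patterns requires returning to the full transition table of $S$, and one must also check that restricting to configurations with at least two $\#'$ does not shrink the factor language below that of the ambient sofic subshift.
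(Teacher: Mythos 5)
Your argument is essentially the paper's: in a $\#$-free history no crossing of opposite-moving signals can occur (it would force a backward meeting with a $\#'$ column or a forbidden collision), so the configuration is sorted into the blocks $\{R_1,r_1,B\}$, $\{r_2,B\}$, $\{\#',B\}$, $\{l_2,B\}$, $\{L_1,l_1,B\}$, which is exactly the regular expression $\bigl(B^+\{R_1,r_1\}\bigr)^\ast\bigl(B^+ r_2\bigr)^\ast\bigl(B^+\#'\bigr)^\ast\bigl(B^+ l_2\bigr)^\ast\bigl(B^+\{L_1,l_1\}\bigr)^\ast$ the paper writes down. You additionally spell out the converse inclusion and the passage from configurations to the factor language, which the paper leaves implicit, but the route is the same.
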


\begin{lemma}
  \label{zero-regular}
  The language of the configurations from $X_S$ without any $\#$ or
  $\#'$ is regular.
\end{lemma}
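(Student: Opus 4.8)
Write $Z_0$ for the set of configurations of $X_S$ containing neither $\#$ nor $\#'$. It is closed and shift-invariant, so it suffices to prove that $Z_0$ is a sofic subshift; its language $\lang(Z_0)$ is then regular by definition. The plan is to exhibit $Z_0$ explicitly as the set of configurations accepted by a finite automaton scanning a configuration once from left to right.

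First I would determine what such a configuration can contain. The configuration $\dinf\gamma$ lies in $Z_0$ by Point~\ref{i:fire} of Lemma~\ref{l:fsquad}, and it only contributes the (regular) set of patterns $\gamma^*$. Apart from it, I would argue that a configuration of $Z_0$ uses neither the firing state $\gamma$ nor the killer $\kappa$: a non-trivial occurrence of either, combined with the absence of $\#$ and $\#'$, is incompatible with the existence of a backward history, essentially by Point~\ref{i:firelim} of Lemma~\ref{l:fsquad} and the spreading character of $\kappa$. Hence, besides $\dinf\gamma$, a configuration of $Z_0$ is built from the blank state $B$ and the signals $L_1,l_1,l_2,r_2,r_1,R_1$ (the half-speed signals $l_2,r_2$ carrying the auxiliary states $X,Y,Z$ as internal phases). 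Moreover, reusing the reasoning behind Lemmas~\ref{validlanguage} and~\ref{periodic-prop} --- a $\#$ anywhere in a backward history would force two simultaneous $\#$ at an earlier time, hence a periodic configuration ${}^\omega(\#B^n)^\omega$, whose entire forward orbit carries a $\#$ or a $\#'$ --- I would strengthen this to: such a configuration even admits a backward history that never produces a $\#$, hence never a $\#'$, $\kappa$ or $\gamma$.

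Along such a history no signal is created or destroyed except through the reversible reaction $L_1+R_1\leftrightarrow l_1+r_1$, and the only collisions that do not immediately raise $\kappa$ (or the now-forbidden $\#$) are that reaction together with the pass-throughs $l_1+r_2$ and $r_1+l_2$. Running the history backwards --- so that $L_1,l_1,l_2$ drift rightwards and $R_1,r_1,r_2$ leftwards, the collision rules being reversed --- and requiring that no forbidden event ever occur, I would enumerate the admissible arrangements by a finite case analysis: read from left to right, one meets the right signals before the left signals, up to the bounded interleaving permitted by the three allowed reactions; within each block the fast signals precede the slow one (otherwise $L_1$ or $l_1$ catches $l_2$, or $R_1$ or $r_1$ catches $r_2$, and $\kappa$ appears); two distinct signals are never adjacent; an $L_1$ never sits to the left of a right signal, since without a $\#'$ it can no longer merge; and at most one $l_1/r_1$ merge may still be pending. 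Each such condition is a finite forbidden factor or a forbidden ``some $a$ to the left of some $b$'' pattern, possibly conditioned on a bounded amount of bookkeeping recording which reactions have been committed; all of this is recognized by a deterministic finite automaton, so $\lang(Z_0)$ is regular and $Z_0$ is sofic. The step I expect to be the genuine obstacle is this last case analysis --- getting the interaction of the half-speed signals $l_2,r_2$ (with their auxiliary states) and the fast signals exactly right, and checking that the ``pending merge'' bookkeeping stays bounded, so that the language is really regular and not merely context-free; the earlier localization is comparatively routine, once one grants the elementary fact, read off the signal description, that every forward image of ${}^\omega(\#B^n)^\omega$ still displays a $\#$ or a $\#'$.
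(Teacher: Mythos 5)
Your overall strategy is the paper's: strip the configuration down to blank states and signals, then characterize the arrangements of signals that admit a valid infinite backward history. Your localization step is sound and in one respect more careful than the paper, since you account for $\dinf\gamma$ (which the printed proof silently omits), though the detour through Lemma~\ref{periodic-prop} to exclude a $\#$ from the past is heavier than needed: a single $\#$ at time $-t$ pins a $\#$ or $\#'$ to its column at every later time, hence at time $0$. The genuine gap is the step you yourself flag as the obstacle: you defer the final case analysis, and the sketch you give of it is wrong in a way that would change the answer, not just the bookkeeping. Because $l_1+r_2\rightarrow l_1+r_2$ and $r_1+l_2\rightarrow r_1+l_2$ are \emph{valid} pass-throughs, a past crossing between such a pair costs nothing; consequently ${}^\omega\{l_1,r_2,B\}^\omega$ --- arbitrarily many $l_1$ arbitrarily far to the left of arbitrarily many $r_2$, every such pair having crossed in the past --- lies entirely in $X_S$. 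This contradicts your claims that one meets the right signals before the left signals ``up to the bounded interleaving permitted by the three allowed reactions'' and that ``at most one $l_1/r_1$ merge may still be pending.'' Your rule that fast signals precede slow ones also only holds on one side: reading left to right the admissible order is $r_1/R_1$ before $r_2$, but $l_2$ before $l_1/L_1$, since what matters is that the backward trajectories of same-direction signals diverge rather than cross.

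The correct characterization, which is essentially the entire content of the lemma, is the union of three families: ${}^\omega\{l_1,r_2,B\}^\omega$, its mirror ${}^\omega\{r_1,l_2,B\}^\omega$, and ${}^\omega\{r_1,R_1,B\}\{r_2,B\}^*\{l_2,B\}^*\{l_1,L_1,B\}^\omega$, in which no opposite-direction pair is forced to cross in the past and same-direction signals of different speeds are ordered so as to separate backwards. No unbounded (or even nontrivial) ``pending merge'' counter is needed; each family is a bona fide regular expression, and regularity follows by inspection. Until this enumeration is actually carried out, the argument is a correct framework around a missing core.
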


\begin{proposition}
\label{prop:simplefirelimit}
The language of $X_S$ is in \textsc{nl}.
\begin{proof}
  There are several cases, and the disjunction on configurations allows to express
  the language of $X_S$ as a union of 'simple' \textsc{nl} languages.
  \begin{enumerate}
    \item Configurations with $\#$s or $\#$s. We can descibe the set of
      these configurations by :
      \[^\omega\{L_1,l_1,B\}\{l_2,B\}^*A\{r_2,B\}^*\{R_1,r_1,B\}^\omega\]
      where $A$ is one of the following (possibly infinite) configurations:
      \begin{enumerate}
      \item $A$ has exactly one state in $\{\#,\#'\}$.
        We conclude in this case with Lemma~\ref{one-ptime}.
      \item $A$ has one $\#$, and at least one other $\#$ or
        $\#'$. Lemmas~\ref{l1r1-prop}, \ref{l2r2-prop} and
        \ref{periodic-prop} show that the configuration satisfy the
        hypothesis of Lemma~\ref{validlanguage}, which allows to
        conclude.\label{case:twosharps}
      \item $A$ has at least two $\#'$ but do not contain any signal.
        Then Lemma~\ref{neutral-sharpsprime} conclude.
      \item $A$ has at least two $\#'$, along with some signal(s)
        between two $\#'$. Denote by $c$ the global configuration in
        this case.  We can simply go back a few steps in the past to
        find out a configuration $c'$ of case
        \ref{case:twosharps}. Then we can apply Lemma
       ~\ref{validlanguage} to $c$.
      \end{enumerate}
    \item Configurations without $\#$s nor $\#'$s. This case is treated in
      Lemma~\ref{zero-regular}.
  \end{enumerate}
  Thus, since we have described above why each of the possible languages could be recognized
  in $\textsc{nl}$, we can just build a non-deterministic machine beginning by making a
  non-deterministic choice between all of these machines, then doing the
  computation of the chosen one.
\end{proof}

\end{proposition}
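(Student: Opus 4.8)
The plan is to prove Proposition~\ref{prop:simplefirelimit} by a case analysis on the ``$\#/\#'$ content'' of a configuration $y\in X_S$, showing that in each case membership in (the language of) $X_S$ is decidable in nondeterministic logarithmic space, and then taking the (finite) union of these NL languages. The guiding structural principle, which I would extract first from the signal/collision description, is that a valid history diagram of $X_S$ is governed by the blank background $B$ together with the finitely many signals $L_1,l_1,l_2,\#',r_2,r_1,R_1$ and the listed collisions; in particular, away from the $\#$-points everything is a superposition of at most finitely many straight signals with bounded slopes, so the spatial ``profile'' of a configuration has the regular shape $^\omega\{L_1,l_1,B\}\{l_2,B\}^*A\{r_2,B\}^*\{R_1,r_1,B\}^\omega$ where $A$ is the central zone carrying all the $\#$ and $\#'$ symbols. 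This reduces the whole problem to analysing the possible forms of $A$.

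Next I would handle the central zone $A$ by the sub-cases listed in the statement. The easy extreme cases are: $A$ containing no $\#$ nor $\#'$ at all (Lemma~\ref{zero-regular}: regular, hence in NL), and $A$ containing at least two $\#'$ but no signal between them (Lemma~\ref{neutral-sharpsprime}: regular). The single-$\{\#,\#'\}$ case is handled by Lemma~\ref{one-ptime}. The heart of the argument is the case where $A$ contains a $\#$ together with at least one further $\#$ or $\#'$: here Lemmas~\ref{l1r1-prop} and \ref{l2r2-prop} (no valid diagram has a segment $\#\Sigma^*\#'$ whose left $\#$ was born from an $r_1/l_1$ or an $l_2/r_2$ collision) and Lemma~\ref{periodic-prop} (a configuration of $X_S$ with $\ge 2$ sharps must be spatially periodic of the form $^\omega(\#B^n)^\omega$) pin down the configuration tightly enough to feed it to Lemma~\ref{validlanguage}, which gives NL-recognizability. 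The remaining sub-case (two or more $\#'$ with signals between some pair of them) is reduced to the previous one by running the diagram a bounded number of steps into the past until a genuine pair of $\#$ reappears, then invoking Lemma~\ref{validlanguage} again.

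Finally I would assemble the machine: given an input word, a nondeterministic log-space machine first guesses which of the finitely many cases the (minimal history diagram of the) configuration falls into, then simulates the corresponding case-specific NL recognizer; since NL is closed under finite union and the outer regular ``wrapper'' $^\omega\{L_1,l_1,B\}\{l_2,B\}^*(\cdot)\{r_2,B\}^*\{R_1,r_1,B\}^\omega$ only costs a constant-space automaton checking the left/right tails, the total cost stays logarithmic. The main obstacle, and the place where all the work really sits, is justifying the structural Lemmas~\ref{l1r1-prop}--\ref{periodic-prop} and \ref{validlanguage}: these require the Euclidean-history-diagram technique of Lemma~\ref{euclidean} — translating an alleged occurrence of a pattern into a configuration of segments and points in $\mathbb{R}^2$, tracing signals backward and forward under the collision rules, and deriving a geometric contradiction (a forbidden signal crossing, or a $\kappa$ being produced) that forces the rigid periodic form and the log-space-checkable distance constraints. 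Everything else — the case split, the regular wrappers, the closure of NL under union — is routine; the geometry of the firing-squad signals is the crux.
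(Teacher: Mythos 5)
Your proposal is correct and follows essentially the same route as the paper: the same regular outer decomposition $^\omega\{L_1,l_1,B\}\{l_2,B\}^*A\{r_2,B\}^*\{R_1,r_1,B\}^\omega$, the same case split on the $\#/\#'$ content of the central zone $A$ handled by the same supporting lemmas, and the same final assembly by a nondeterministic choice among finitely many NL recognizers. You also correctly locate where the real work lies, namely in the Euclidean-diagram lemmas rather than in this case analysis itself.
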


\section*{Conclusion}\label{sec:persp}

We have thus achieved results implying that both limit set and column factors complexities are strongly linked to the factor simulation hierarchy; on the other hand, they are rather orthogonal to the sub-automaton simulation hierarchy.

Many open questions remain.
\begin{itemize}
\item We have obtained that universality was not forbidden by some
  rather strong constraints either on the limit set, or
  ``orthogonally'', on the column factors.  A natural question is
  whether we can constrain both at the same time. The two constructions
  may possibly be composed together, at the price of a (yet) more
  difficult proof of the NL-recognizability of the limit set.
\item Similarly, we believe that our results still hold when alphabets are restricted to $\{0,1\}$ but at the price of a more technical proof.
\item Is there an intrinsically universal CA with an SFT limit set? Following our
  construction, this raises immediately the following question: is there a
  firing-squad CA with an SFT limit set?
\item What kind of limit system can an intrinsically universal CA have? Can it be injective?
\item Is injectivity, expansivity of CA preserved by factor maps?
\item Is it enough, for a CA to be a factor of another CA, that the corresponding column factors with some given width be linked by a factor map?
\item Is there, for some complexity level $\lambda$, an equivalence
  class for the sub-automaton simulation (with space-time rescalings)
  of which all the elements have limit sets of complexity $\lambda$?
\end{itemize}

\bibliographystyle{alpha}
\bibliography{simpluniv}

\newpage
\appendix

\section{Proof of Lemma~\ref{euclidean}}
\begin{proof}
  There is no rounding problem for signals of slope $-1$, $0$ and $1$ (positions
  of cells of discrete signals correspond exactly to integer position on the
  continuous signal). Concerning slopes $1/2$ and $-1/2$, $S$ is such that when
  two $\#$ at position $p_1$ and $p_2$ are connected by a $1/2$ signal in a
  discrete space-time diagram, the displacement vector $p_2-p_1$ is always of
  the form $(n,2n)$ (the situation is similar for slope $-1/2$). More precisely,
  one can check that the set of cells ${\{p_1+(i,2i) : i\in\mathbb{N}, i\leq
    n\}}$ are in a $r_2$ signal state, and it corresponds exactly to the integer
  positions on the continuous segment from $p_1$ to $p_2$.
\end{proof}

\section{Proof of Lemma~\ref{columns-lemma}}
\begin{proof}
  First let us prove that $t_1$ exists. Let us assume that there is no $\#$ in the
  past of cell $z_2$. The only possible past of this cell, in this case, is
  necessarily an infinite column of $\#'$s. But we assumed that the left $\#$ was
  created by an $l_1$ signal, which necesarily either crossed this column, or was
  generated by it. In both cases, this is a contradiction.

  Now that we know this, we prove that $t_2$ exists. Let us assume that cell
  $z_1$ has only $\#'$s in its past. Since the $\#$ in the right column was
  necesarily created by a signal coming from the left, be it $r_1$ or $r_2$;
  this implies that the signal should have ``crossed'' the left column, which
  is impossible. Thus, there is at least one $\#$ in this column.
\end{proof}

\section{Proof of Lemma~\ref{l1r1-prop}}
  \begin{proof}
    Let us call $\mathcal Z$ the $l_1$ signal that created the left $\#$ of $w$ at $t_0$. 
    According to Lemma~\ref{columns-lemma}, there is some time step $t_1$
    in which a $\#$ appears in the past of the right $\#'$ (i.e. of column $\mathcal C$).
    Moreover, let $t_2$ be the most recent step in which a $\#$ appears in the past of
    the $\#$ on column $\mathcal B$ at $t_0$.
    There are two cases:
    \begin{itemize}
    \item This $\#$ in column $\mathcal C$ 
      was created by an $r_1$ signal ${\mathcal Z}_1$. In this case,
      this $\#$ is necessarily the same as the one that generated $\mathcal Z$. Else,
      applying the same reasonning as in Lemma~\ref{columns-lemma}, there would be another
      $\#$ between $\mathcal Z$, and the $\#$ at $t_1$, and then
      ${\mathcal Z}_1$ would have crossed
      both $\mathcal Z$ and the $L_1$ signal emitted by this intermediate $\#$.



      We now have two subcases:
      \begin{itemize}
      \item The $\#$ at time $t_2$ in column $\mathcal B$ was created by
        $l_2/r_2$. We deduce that the $\#$ on column $\mathcal C$ at time $t_1$ was
        created by an $l_1/r_1$ pair of signals: otherwise we would have a
        $l_2/r_2$ intersection between the two columns,
        which is only possible on a $\#$,
        but this would create a column of $\#'$s visible at time $t_0$, which
        contradicts the hypothesis. Thus we can deduce that:
        \begin{enumerate}
        \item Column $\mathcal C$ existed before $t_1$, and we can thus consider the
          first $\#$ on it before $t_1$. This is the $\#$ which emitted the $l_2$ signal
          creating the $\#$ on column $\mathcal B$ at $t_2$, otherwise a $l_1$ or $l_2$
          signal emmitted by this $\#$ would
          intersect column $\mathcal B$ between $t_0$ and $t_2$, which is
          impossible by minimality of $t_2$.
        \item the $r_2$ signal arriving on the $\#$ of column $\mathcal B$ at time
          $t_2$ and the $r_1$ arriving on the $\#$ of column $\mathcal C$ at time
          $t_1$ meet in the past on some $\#$. This allows to infer the existence of
          a third column $\mathcal A$ on the left of the two previous ones;
        \end{enumerate}
        We thus have the following situation:
          \begin{center}
            \begin{tikzpicture}
              \draw(0,0)node(t0p0)[fill=white]{$\#^{(')}$}
              (2,0)node(t0p1)[fill=white]{$\#$}
              (4.5,0)node(t0p2)[fill=white]{$\#'$};

              \draw(0,-2)node(t3p0)[fill=white]{$\#$};
              \draw(2,-3.5)node(t2p1)[fill=white]{$\#$};
              \draw(4.5,-3)node(t1p2)[fill=white]{$\#$};

              \draw[gray](t0p2)--(5.5,0)node[anchor=west]{$t_0$};
              \draw[gray](t1p2)--(5.5,-3)node[anchor=west]{$t_1$};
              \draw[gray](t2p1)--(5.5,-3.5)node[anchor=west]{$t_2$};
              \draw[->,gray](t1p2)--(3,0);
              \draw[->,gray](t2p1)--(4,-1.5);
              \draw[->,gray](t2p1)--(0,-1.5);

              \draw[->](t3p0)--(t0p1);
              \draw[->](t1p2)--(t0p1);

              \draw(0.7,-1)node{$r_1$};
              \draw(3.1,-1)node{$\mathcal Z$}; 
              
              \draw(0,-7.5)node(t4p0)[fill=white]{$\#$};
              \draw(4.5,-8.5)node(t5p2)[fill=white]{$\#$};
              \draw[->](t4p0)--(t2p1);
              \draw[->](t5p2)--(t2p1);

              \draw[->](t5p2)--(t3p0);
              \draw[->](t4p0)--(t1p2);
              \draw(0.6,-2.5)node{$l_1$};
              \draw(3.7,-3.5)node{$\mathcal{Z}_1$};
              \draw(1,-5)node{$r_2$};
              \draw(3.5,-6)node{$l_2$};
          

              \draw[<->](2,0.5)--(4.5,0.5);
              \draw(0,0.5)node[anchor=south]{$\mathcal A$};
              \draw(2,0.5)node[anchor=south]{$\mathcal B$};
              \draw(4.5,0.5)node[anchor=south]{$\mathcal C$};

              \draw(3.25,0.5)node[anchor=south]{$d_1$};
              \draw[<->](0,0.5)--(2,0.5);
              \draw(1,0.5)node[anchor=south]{$d_2$};
              \draw[<->](-0.5,-7.5)--(-0.5,-3.5);
              \draw(-0.5,-4.75)node[anchor=east]{$2d_2$};
              \draw[<->](-0.5,-2)--(-0.5,0);
              \draw(-0.5,-1)node[anchor=east]{$d_2$};

              \draw[<->](5,-8.5)--(5,-3.5);
              \draw(5,-5.75)node[anchor=west]{$2d_1$};
              \draw[<->](5,-3)--(5,0);
              \draw(5,-1.5)node[anchor=west]{$d_1$};

              \draw[<->,dashed](t0p1)--(t2p1);
              \draw(2,-1.75)node[anchor=west]{$d'$};
            \end{tikzpicture}
          \end{center}

          The distances imposed by the signals allow to conclude:
          \begin{eqnarray}
            d_1+d_2&=&2d_2+d'-d_1\label{eqn1}\\
            2d_1&=&d_2+d'\nonumber
          \end{eqnarray}
          \begin{eqnarray}
            d_2+d_1&=&2d_1+d'-d_2\label{eqn2}\\
            2d_2&=&d_1+d'\nonumber
          \end{eqnarray}
          
          Equation (\ref{eqn1}) comes from the $r_1$ signal from the leftmost
          bottom $\#$, and equation (\ref{eqn2}) comes from the rightmost bottom
          one. From this we can deduce that $d_1=d_2=d'$ and that the $R_1/r_1$
          signal emitted at time $t_2$ by the central column reaches the right
          column at time $t_0$, contradicting the fact that the state at time $t_0$
          in this right column is $\#'$.
        \item It was created by $l_1/r_1$, and so was the right $\#$ at $t_1$.
          In this case, one of them has to have another $\#$ in its past, by
          Lemma~\ref{columns-lemma}. Thus, this new $\#$ sends a $r_2$ or $l_2$
          signal (depending on which one we consider), that necessarily collides
          either with the other column of $\#'$s, or to the $L_1/R_1$ signals
          represented in thick on the figure below. Since this $L_1$ signal
          creates the top left $\#$, this collision should occur necessarily
          before time $t_0$ and would create a $\kappa$ which is forbidden by
          hypothesis.
          \begin{center}
            \begin{tikzpicture}
              \draw(0,0)node(t0p0)[fill=white]{$\#$};
              \draw(3,0)node(t0p1)[fill=white]{$\#'$};

              \draw(0,-2)node(t2p0)[fill=white]{$\#$};
              \draw(3,-3)node(t1p1)[fill=white]{$\#$};

              \draw(3,-5)node(t3p1)[fill=white]{$\#$};

              \draw[->](t1p1)--(t0p0);
              \draw[->](t3p1)--(t2p0);

              \draw[->](t2p0)--(3,1);

              \draw(0,-6)node(t4p0)[fill=white]{$\#$};
              \draw[->](t4p0)--(t1p1);

              \draw[very thick](t2p0)--(1,-1)--(t1p1);
              \draw[->](t4p0)--(1.7,-2.6);
              \draw(1.7,-2.6)circle(0.3cm);

              \draw[dashed](t0p0)--(t2p0)--(t4p0)
              (t0p1)--(t1p1)--(t3p1);
            \end{tikzpicture}
          \end{center}
      \end{itemize}

    \item Or it was created by $l_2/r_2$. In this case, the only possible option
      for the $\#$ at $t_2$ on the left column is a $l_1/r_1$, else there would
      be another $\#/\#'$ column between the $\#$ and the $\#'$ at time $t_0$,
      contrarily to our hypothesis.

      Therefore we are in the exact symmetric of the case studied above, where
      we supposed that the $\#$ at $t_2$ was created by $l_2/r_2$, and the left
      $\#$ at $t_1$ by $l_1/r_1$. We can conclude to a contradiction by the same
      reasoning.
    \end{itemize}

  \end{proof}

\section{Proof of Lemma  ~\ref{l2r2-prop}}

  \begin{proof}
    We want to use the above Lemma~\ref{l1r1-prop} to prove this one. To do this, we
    need to prove the existence of a configuration of the form $\#\Sigma^*\#'$
    or its symmetric $\#'\Sigma^*\#$, with in any case the $\#$ created by $l_1/r_1$
    signals, in the past of our current configuration.

    We begin by proving the existence of another column of $\#$, on the left
    of the left $\#$. Since the right column of $\#'$ collides with $\mathcal Z$ --
    the $l_2$ signal creating the left $\#$, there is necessarily a $\#$ in the past
    of this $\#'$, say at time $t_1$. But then, this $\#$ is necessarily the one who
    emitted $\mathcal Z$, for else it would have been formed by $l_2/r_2$ signals colliding
    with $\mathcal Z$, thus forming another $\#$ between the $\#$ and the $\#'$ of our
    hypothesis. This shows that $\mathcal Z$ was originated by
    a $\#$ in the same column than the right $\#'$.

    But then, this $\#$ must have emitted also an $L_1$ signal, that cannot collide
    with the $r_2$ signal (let us call it ${\mathcal Z}'$),
    colliding with $\mathcal Z$ to form our left $\#$. Thus, this
    signal must be converted into an $l_1$ before meeting the $r_2$ of the hypothesis;
    more precisely, it must collide with an $R_1$ signal. But this one necessarily has
    a finite origin, since it could not have crossed ${\mathcal Z}'$. This shows the
    existence of another column of $\#^{(')}$, at $t_2$.

    \begin{center}
      \begin{tikzpicture}

        \draw[black!30!white](-2.5,-1)node[anchor=east]{$t$}--(4,-1)
        (-2.5,-4)node[anchor=east]{$t_2$}--(4,-4)
        (-2.5,-6)node[anchor=east]{$t_1$}--(4,-6);

        \draw(0,0)node[fill=white](t0p0){$\#$}
        (3,0)node[fill=white](t0p1){$\#'$}
        (3,-6)node[fill=white](t1p1){$\#$};
        \draw(-2,-4)node[fill=white](t2p2){$\#$};

        \draw[dashed](t1p1)--(t0p1)
        (t2p2)--(-2,0)node[fill=white]{$\#^{(')}$};

        \draw[->](t1p1)--(t0p0);
        \draw(1.5,-3)node[anchor=south west]{$\mathcal Z$};
        \draw(-0.5,-1)node[anchor=south east]{${\mathcal Z}'$};
        \draw[->](t1p1)--(-0.5,-2.5);
        \draw(-2,-1)node(t3p2)[fill=white]{$\#$};
        \draw[->](-0.5,-2.5)--(t3p2);

        \draw[->](t2p2)--(-0.5,-2.5);
        \draw[->](-0.5,-2.5)--(1.5,-0.5);
        \draw[->](t2p2)--(t0p0);

        \draw[<->](-2,0.5)--(0,0.5);
        \draw[<->](0,0.5)--(3,0.5);
        \draw(-1,0.5)node[anchor=south]{$x$}
        (1.5,0.5)node[anchor=south]{$y$};

      \end{tikzpicture}
    \end{center}
      But then, since both $\#$s at $t_1$ and $t_2$ emit a $L_1$ or $R_1$ signal,
      and they are desynchronized, one of this signal has to hit the other column
      before $t_0$. In fact, to avoid this, we should have at the same time
      $x+y>2y$ and $x+y>2x$, which is impossible. Thus, we can apply Proposition
     ~\ref{l1r1-prop} or its symmetric to this step (at time $t$ on the figure).
  \end{proof}

\section{Proof of Lemma  ~\ref{periodic-prop}}

  \begin{proof}
    We first need to show that in a configuration $x\in X_S$ containing $\#$s,
    there is no history diagram yielding $x$ in which two consecutive $\#$s of $x$
    were both created by $r_2$/$l_2$ signals, or both by $r_1$/$l_1$ signals.
    \begin{itemize}
      \item if two consecutive $\#$s were created by $r_2$/$l_2$ signals, these
        signals would have collided before, thus yielding another $\#$ between
        them.
      \item if two consecutive $\#$s were created by $r_1$/$l_1$, the right $r_1$
        would originate in the left column, and vice-versa:
        \begin{center}
          \begin{tikzpicture}[scale=0.8]
            \draw(0,0)node[fill=white](t0p0){$\#$}
            (3,0)node[fill=none,draw=none](t0p1){}
            (6,0)node[fill=white](t0p2){$\#$};
            
            \draw(3,0)circle(0.5cm);
            
            \draw(0,-6)node[fill=white](t1p0){$\#$}
            (6,-6)node[fill=white](t1p2){$\#$};
            
            \draw[dashed](t1p0)--(t0p0)(t1p2)--(t0p2);
            
            \draw[->](t1p0)--(t0p2);
            \draw[->,thick](t1p0)--(t0p1);
            \draw[->,thick](t1p2)--(t0p1);
            \draw[->](t1p2)--(t0p0);
          \end{tikzpicture}
        \end{center}
        These signals would then have been created by $\#$s, which
        also would have emitted $r_2$s and $l_2$s intersecting between the two
        $\#$s in question, thus creating another $\#$.
    \end{itemize}

    Now if we have two consecutive $\#$s in a configuration, we can infer
    the position of a third one from the signals having created them, since they 
    have different speeds and thus must intersect. Then, we can reproduce this
    argument at the step where these signals were created.
    
    The only problem with this
    is that the infered $\#$ could be on the same side at each step of the argument,
    thus yielding a configuration with a rightmost $\#$ or a leftmost $\#$. Without
    loss of generality, let us assume there is rightmost $\#$. Then, the past of this cell
    would be an infinite column of $\#$s/$\#'$s, with infinitely many $\#$s. But then,
    some $r_2$ signal emitted by a $\#$ of this column would collide with an $R_1$ emitted
    later, which would yield a $K$ state: the configuration we considered would not be in
    $X_S$.

    Then, it is easy to see that there can be no signal between two consecutive $\#$s,
    for they would come from a desynchronized configuration in a history diagram of
    the configuration.
  \end{proof}

\section{Proof of Lemma  ~\ref{validlanguage}}

  \begin{proof}
    Applying Lemma~\ref{periodic-prop} at time $t$, we now that the
    configuration at time $t$ is of the form
    $^\omega(\#B^n)^\omega$. Moreover, using again the argumentation
    of Lemma~\ref{periodic-prop}, we know that their is some time $t'$
    before $t$ in history where the configuration is of the form
    $^\omega(\#B^{n'})^\omega$ with $n'>n$. Hence, $L$ is exactly the
    language of forward orbits of periodic configurations of the form
    $^\omega(\#B^n)^\omega$. It is straightforward to check from the
    definition of $S$ (see Figure~\ref{fig:deffire}) that the language
    of periods of such forward orbits is a finite union of languages
    of the form:
    \[\#'B^{x_1}A_1\cdots B^{x_i}A_i\] where $i\leq 4$, $A_i\in\Sigma$ and where the
  $x_i$ must satisfy some simple linear equations. The lemma follows.
  \end{proof}

\section{Proof of Lemma  ~\ref{one-ptime}}

  \begin{proof}
    There are several possibilities for the number of $\#$ in the past of the
    $\#$ of $\#'$. There may be at most two, for else, two of them would have been
    created by $l_1$ signals crossing the $R_1$/$r_1$ emitted by the third $\#$,
    and that intersection, circled in the following picture, can occur only once.
    \begin{center}
      \begin{tikzpicture}
        \draw[dashed](0,0)node[fill=white,draw=none](t0){$\#'$}--
        (0,-2)node[fill=white](t11){$\#'$}
        (0,-3)node[draw=none,fill=white,circle](t13){$\#'$}
        (0,-2.5)node[draw=none,circle](t12){$\#$};
        \draw[dashed](t13)--(0,-5)node[draw=none,circle,fill=white]{$\#'$}
        (0,-5.5)node[draw=none,fill=white](t2){$\#$};
        \draw[->](t2)--(5.5,0);
        \draw[->](t2)--(2.75,0);
        \draw[->](t12)--(2.5,0);
        \draw[->](t12)--(1.75,0);
        
        \draw[->](3,-5.5)--(1.5,-4);
        \draw[->](1.5,-4)--(t12);
        \draw[gray,thick,opacity=0.8](1.5,-4)circle(0.3cm);
      \end{tikzpicture}
    \end{center}
    Thus, we are left with:

    \begin{itemize}
    \item Two $\#$ in the past: in this case, the $r_2$ they emitted allows to
      guess the whole column.  After the last $r_2$ signal, there may be signals
      in the opposite direction, that cannot collide in the past, i.e.  first
      $l_2$s, then $L_1$s and $l_1$s. Such configurations are described, with
      $x=2z+y$, by:
        $$\{R_1,r_1,B\}^\omega\{r_2,B\}^*l_1B^xl_2B^yL_1B^zl_2B^z\{\#',\#\}
        B^zr_2B^zR_1B^yr_2B^xr_1\{L_1,l_1,B\}^\omega$$

      \item With only one $\#$ in the past, there may still be some $l_1$ signal
        somewhere:
        \begin{center}
          \begin{tikzpicture}
            \draw[dashed](0,0)node[fill=white,draw=none](t0){$\#'$}--
            (0,-2)node[fill=white](t11){$\#'$}
            (0,-2.5)node[draw=none,fill=white,inner sep=0,circle](t12){$\#$}--(0,-4);

            \draw[->](t12)--(2.5,0);
            \draw[->](t12)--(1.75,0);
            
            \draw[->](1.5,-4)--(t12);
            \draw[->,gray](4,-3)--(1,0);
          \end{tikzpicture}
        \end{center}
        Thus the configurations are of the following form:
        $$\{R_1,r_1,B\}^\omega L_1B^zl_2B^z\{\#',\#\}
        B^zr_2B^zR_1\{L_1,l_1,B\}^\omega~,$$
        with possibly a $r_1$ replacing some state on the left,
        and an $l_1$ on the right, like on the figure.

      \item With no $\#$ in the past, we have an infinite column of $\#'$, 
        thus no signal can come from it. These configurations are of the form:
        $$^\omega\{R_1,r_1,B\}\{r_2,B\}^*\#'\{l_2,B\}^*\{L_1,l_1,B\}^\omega$$
    \end{itemize}
    The lemma easily follows from the characterization of the different possible
    forms of configuration discussed above.
  \end{proof}

\section{Proof of Lemma   \ref{neutral-sharpsprime}}
  \begin{proof}
    In a history diagram satisfying the hypothesis their can be no collision that would generate a $\#$ and even no signal crossing (because this would mean that two signal are going in opposite directions in the same part of a configuration and thus that at least one would meet a $\#'$ in the past, which is forbidden by the hypothesis).     Therefore we have: $L=\bigl(B^+\{R_1,r_1\}\bigr)^\ast\bigl(B^+ r_2\bigr)^\ast\bigl(B^+\#'\bigr)^\ast\bigl(B^+ l_2\bigr)^\ast\bigl(B^+ \{L_1,l_1\}\bigr)^\ast$.
  \end{proof}

\section{Proof of Lemma \ref{zero-regular}}
  \begin{proof}
  Without any $\#$ nor $\#'$, we are left with blank states and signals.
  The configurations of the limit set are those configurations in which the
  possible signals do not have to collide in the past, thus fall in one of these
  three possibilities:
  $$^\omega\{l_1,r_2,B\}^\omega$$
  $$^\omega\{r_1,l_2,B\}^\omega$$
  $$^\omega\{r_1,R_1,B\}\{r_2,B\}^*\{l_2,B\}^*\{l_1,L_1,B\}^\omega$$
  \end{proof}

\end{document}